\newcommand{\tribrackets}[1]{\left\langle #1 \right\rangle}
\newcommand{\encdec}[3]{\tribrackets{#1,\, #2, \,#3}}
\newcommand{\bmaps}{\phi,\psi,\upsilon}
\newcommand{\bmapss}{\phi, \, \psi, \, \upsilon}
\newcommand{\strass}[4]{\tribrackets{#1, #2, #3 ; #4}}
\newcommand{\strassalt}[7]{\strass{#1}{#2}{#3}{#4} _{#5,#6,#7}}
\newcommand{\strassaltdef}[4]{\strass{#1}{#2}{#3}{#4} _{\bmaps}}
\newcommand{\squarestrass}[2]{\strass{#1}{#1}{#1}{#2}}
\newcommand{\nnz}[1]{nnz\left(#1 \right)}
\newcommand{\nns}[1]{nns\left(#1 \right)}
\def\OmValid {$\Omega$-valid}
\def\OmValidity {$\Omega$-validity}
\def\OmValidator {$\Omega$-validator}
\def\OmIndep {$\Omega$-independent}
\theoremstyle{acmplain}
\newtheorem{claim}[theorem]{Claim}
\theoremstyle{acmdefinition}
\newtheorem{problem}[theorem]{Problem}
\newtheorem{remark}[theorem]{Remark}
\newtheorem{notation}[theorem]{Notation}
\newtheorem{fact}[theorem]{Fact}
\theoremstyle{acmplain}
\begin{document}
\acmYear{2020}
\acmConference{}{}{}
\acmPrice{}
\acmDOI{}
\acmISBN{}
\setcopyright{none}
\fancyhead{}

\title{Sparsifying the Operators of Fast Matrix Multiplication Algorithms}


\author{Gal Beniamini}
\affiliation{
	\institution{The Hebrew University of Jerusalem}
}
\email{gal.beniamini@mail.huji.ac.il}

\author{Nathan Cheng}
\affiliation{
	\institution{University of California at Berkeley}
}
\email{ncheng@berkeley.edu}

\author{Olga Holtz}
\affiliation{
	\institution{University of California at Berkeley}
}
\email{holtz@math.berkeley.edu}

\author{Elaye Karstadt}
\affiliation{
  \institution{The Hebrew University of Jerusalem}
}
\email{elaye.karstadt@mail.huji.ac.il}

\author{Oded Schwartz}
\affiliation{
  \institution{The Hebrew University of Jerusalem}
}
\email{odedsc@cs.huji.ac.il}

\settopmatter{printfolios=true}


\begin{abstract}
Fast matrix multiplication algorithms may be useful, provided that their running time is good in practice. Particularly, the leading coefficient of their arithmetic complexity needs to be small. Many sub-cubic algorithms have large leading coefficients, rendering them impractical. Karstadt and Schwartz (SPAA'17, JACM'20) demonstrated how to reduce these coefficients by sparsifying an algorithm's bilinear operator. Unfortunately, the problem of finding optimal sparsifications is NP-Hard.

We obtain three new methods to this end, and apply them to existing fast matrix multiplication algorithms, thus improving their leading coefficients. These methods have an exponential worst case running time, but run fast in practice and improve the performance of many fast matrix multiplication algorithms. Two of the methods are guaranteed to produce leading coefficients that, under some assumptions, are optimal.

\end{abstract}

\maketitle

\section{Introduction}

Matrix multiplication is a fundamental computation kernel, used in many fields ranging from imaging to signal processing and artificial neural networks. The need to improve performance has attracted much attention from the science and engineering communities. Strassen's discovery of the first sub-cubic algorithm~\cite{strassen1969gaussian} sparked intensive research into the complexity of matrix multiplication algorithms (cf.~\cite{winograd1971multiplication, hopcroft1971minimizing, laderman1976noncommutative, pan1978strassen, de1978varieties1, de1978varieties2, bini1979n,schonhage1981partial,romani1982some,pan1982trilinear, coppersmith1982asymptotic, strassen1986asymptotic, johnson1986noncommutative,coppersmith1990matrix, laderman1992practical, kaporin1999practical, cohn2003group,grolmusz2008modular,stothers2010complexity,williams2012multiplying, smirnov2013bilinear,le2014powers, benson2015framework, smirnov2017several, karstadt2017matrix, beniamini2019faster, karstadt2020matrix}).

The research efforts can be divided into two main branches. The first revolves around the search for asymptotic upper bounds on the arithmetic complexity of matrix multiplication (cf.~\cite{coppersmith1982asymptotic,
strassen1986asymptotic,coppersmith1990matrix, cohn2003group,grolmusz2008modular,stothers2010complexity,williams2012multiplying,le2014powers}). This approach focuses on asymptotics, typically disregarding the hidden constants of the algorithms and other aspects of practical importance. Many of these algorithms remain highly theoretical due to their large hidden constants, and furthermore, they apply only to matrices of very high dimensions.

In constrast, the second branch focuses on obtaining matrix multiplication algorithms that are both asymptotically fast and practical. This requires the algorithms to have reasonable hidden constants that are applicable even to small instances (cf.,~\cite{winograd1971multiplication, hopcroft1971minimizing, laderman1976noncommutative, pan1978strassen, bini1979n,schonhage1981partial,romani1982some,pan1982trilinear, johnson1986noncommutative, laderman1992practical, kaporin1999practical, smirnov2013bilinear, benson2015framework, smirnov2017several, karstadt2017matrix, beniamini2019faster, karstadt2020matrix}).

\begin{table*}[t]
	\centering
	\caption {Examples of improved leading coefficients} \label{tab:ks-compare}
	\small{
	\begin{tabular}{ |l||c||c|c|c||c|c|c||c|c|| }
		\hline
		\multirow{3}{*}{\ \ Algorithm} &
		\multirow{3}{*}{Leading Monomial} &
		\multicolumn{3}{|c||}{Arithmetic Operations} &
		\multicolumn{3}{c||}{Leading Coefficient} &
		\multicolumn{2}{c||}{Improvement} \\
		\cline{3-10}
		& & Original & \cite{karstadt2017matrix, karstadt2020matrix} & Here & Original & \cite{karstadt2017matrix, karstadt2020matrix} & Here & \cite{karstadt2017matrix, karstadt2020matrix} & Here \\
		\hline 
		&&&&&&&&&\\[-1em]
		$\langle 2,2,2;7 \rangle$~\cite{strassen1969gaussian} &$n^{\log_{2} 7}\approx n^{2.80735}$ & 18 & 12 & 12 & 7 & 5 & 5 & 28.57\% &28.57\%\\
		\hline 
		&&&&&&&&&\\[-1em]
		$\langle 3,2,3;15 \rangle$~\cite{benson2015framework} &$n^{\log_{18} 15^{3}}\approx n^{2.81076}$ & 64 & 52 & 39 & 9.61 & 7.94 & 6.17 & 17.37\% &35.84\%\\
		\hline 
		&&&&&&&&&\\[-1em]
		$\langle 4,2,3;20 \rangle$~\cite{smirnov2013bilinear} &$n^{\log_{24} 20^{3}}\approx n^{2.82789}$ & 78 & 58 & 51 & 8.9 & 7.46 & 5.88 & 16.17\% &33.96\%\\
		\hline 
		&&&&&&&&&\\[-1em]
		$\langle 3,3,3;23 \rangle$~\cite{benson2015framework} &$n^{\log_{3} 23}\approx n^{2.85404}$ & 87 & 75 & 66 & 7.21 & 6.57 & 5.71 & 8.87\%& 20.79\%\\
		\hline 
		&&&&&&&&&\\[-1em]
		$\langle 6,3,3;40 \rangle$~\cite{smirnov2013bilinear} & $n^{\log_{54} 40^{3}}\approx n^{2.77429}$ & 1246 & 202 & 190 & 55.63 & 9.36 & 8.9 & 83.17\% &84.01\%\\
		\hline
	\end{tabular}
	\caption*{\small {The leading monomial of rectangular $\strass{n}{m}{k}{t}$-algorithms refers to their composition~\cite{hopcroft1973duality} into square $\strass{nmk}{nmk}{nmk}{t^{3}}$-algorithms. The improvement column is the ratio between the new and the original leading coefficients of the arithmetic complexity. See Table~\ref{tab:decomposedalgs} for a full list of results.}}
	}
\end{table*}

\subsection{Previous work}

\paragraph{Reducing the leading coefficients.} Winograd~\cite{winograd1971multiplication} reduced the leading coefficient of Strassen's algorithm's arithmetic complexity from 7 to 6 by decreasing the number of additions and subtractions in the $2\times 2$ base case from 18 to 15\footnote{See Section~\ref{subsec:fmm} for the connection between the number of additions and the leading coefficient.}. Later, Bodrato~\cite{bodrato2010strassen} introduced the intermediate representation method, that successfully reduces the leading coefficient to 5, for repeated squaring and chain matrix multiplication. Cenk and Hasan~\cite{cenk2017arithmetic} presented a non-uniform implementation of Strassen-Winograd's algorithm~\cite{winograd1971multiplication}, which also reduces the leading coefficient from 6 to 5, but incurs additional penalties such as a larger memory footprint and higher communication costs. Independently, Karstadt and Schwartz~\cite{karstadt2017matrix, karstadt2020matrix} used a technique similar to Bodrato's, and obtained a matrix multiplication algorithm with a $2\times 2$ base case, using 7 multiplications, and a leading coefficient of 5. Their method also applies to other base cases, improving the leading coefficients of multiple algorithms. Beniamini and Schwartz~\cite{beniamini2019faster} introduced the decomposed recursive bilinear framework, which generalizes~\cite{karstadt2017matrix, karstadt2020matrix}. Their technique allows a further reduction of the leading coefficient, yielding several fast matrix multiplication algorithms with a leading coefficient of 2, matching that of the classical algorithm.

\paragraph{Lower bounds on leading coefficients. }
Probert~\cite{probert1976additive} proved that 15 additions are necessary for any recursive-bilinear matrix multiplication algorithm with a $2\times2$ base case using 7 multiplications over $\mathbb{F}_{2}$, which corresponds to a leading coefficient of 6. This was later matched by Bshouty \cite{bshouty1995additive}, who used a different technique to obtain the same lower bound over an arbitrary ring. Both cases have been interpreted as a proof of optimality for the leading coefficient of Winograd's algorithm~\cite{winograd1971multiplication}.

Karstadt and Schwartz's $\squarestrass{2}{7}$-algorithm\footnote{See Section \ref{subsec:fmm} for definition.}~\cite{karstadt2017matrix, karstadt2020matrix} requires 12 additions (thus having a leading coefficient of 5) and seemingly contradicts these lower bounds. Indeed, they showed that these lower bounds~\cite{probert1976additive, bshouty1995additive} do not hold under alternative basis multiplication. In addition, they extended the lower bounds to apply to algorithms that utilize basis transformations, and showe that 12 additions are necessary for any recursive-bilinear matrix multiplication algorithm with a $2\times2$ base case using 7 multiplications, regardless of basis. Thus proving a lower bound of 5 on the leading coefficient of such algorithms.

Beniamini and Schwartz~\cite{beniamini2019faster} extended the lower bound to the generalized setting, in which the input and output can be transformed to a basis of larger dimension. They also found that the leading coefficient of any such algorithm with a $2 \times 2$ base case using 7 multiplications is at least 5.

\paragraph{Obtaining alternative basis algorithms.}
Recursive-bilinear algorithms can be described by a triplet of matrices, dubbed the encoding and decoding matrices (see Section~\ref{subsec:fmm}). The alternative basis technique~\cite{karstadt2017matrix, karstadt2020matrix} utilizes a decomposition of each of these matrices into a \textit{pair} of matrices -- a basis transformation, and a \textit{sparse} encoding or decoding matrix. Once a decomposition is found, applying the algorithm is straightforward (see Section~\ref{subsec:fmm}).

The leading coefficient of the arithmetic complexity is determined by the number of non-zero (and non-singleton) entries in each of the encoding/decoding matrices, while the basis transformations only affect the low order terms of the arithmetic complexity (see Section~\ref{subsec:fmm}). Thus, reducing the leading coefficient of fast matrix multiplication algorithms translates to the matrix sparsification (MS) problem.

\paragraph{Matrix sparsification. }
Unfortunately, matrix sparsification is NP-Hard to solve~\cite{mccormick1983combinatorial} and NP-Hard to approximate to within a factor of $2^{\log^{.5 - o\left(1\right)} n}$~\cite{gottlieb2010matrix} (Over $\mathbb{Q}$, assuming NP does not admit quasi-polynomial time deterministic algorithms). Despite the problem being NP-hard, search heuristics can be leveraged to obtain bases which significantly sparsify the encoding/decoding matrices of fast matrix multiplication algorithms with small base cases.

There are a few heuristics that can solve the problem, under severe assumptions, such as the full rank of any square submatrix, and requiring that the rank of each submatrix be equal to the size of the largest matching in the induced bipartite graph (cf.,~\cite{mccormick1983combinatorial, hoffman1984fast, mccormick1990making, chang1992hierarchical}). These assumptions rarely hold in practice, and specifically, do not apply to any matrix multiplication algorithm we know.

Gottlieb and Neylon's algorithm~\cite{gottlieb2010matrix} sparsifies an $n \times m$ matrix with no assumptions about the input. It does so by using calls to an oracle for the sparsest independent vector problem.

\subsection{Our contribution. }

We obtain three new methods for matrix sparsification, based on Gottlieb and Neylon's~\cite{gottlieb2010matrix} matrix sparsification algorithm. We apply these methods to multiple matrix multiplication algorithms and obtain novel alternative-basis algorithms, often resulting in arithmetic complexity with leading coefficients superior to those known previously (See Table~\ref{tab:ks-compare}, Table~\ref{tab:decomposedalgs}, and Appendix~\ref{sec:results-appendix}).

The first two methods were obtained by the introduction of new solutions to the Sparsest Independent Vector problem, which were then used as oracles for Gottlieb and Neylon's algorithm. As matrix sparsification is known to be NP-Hard, it is no surprise that these methods exhibit exponential worst case complexity. Nevertheless, they perform well in practice on the encoding/decoding matrices of fast matrix multiplication algorithms.

Our third method for matrix sparsification simultaneously minimizes the number of non-singular values in the matrix. This method does not guarantee an optimal solution for matrix sparsification. Nonetheless, it obtains solutions with the same (and, in some cases, better) leading coefficients than the former two methods when applied to many of the fast matrix multiplication algorithms in our corpus, and runs significantly faster than the first two when implemented using Z3~\cite{de2008z3}.
For completeness, we also present the sparsification heuristic used in~\cite{karstadt2017matrix, karstadt2020matrix}.

\subsection{Paper Organization. }

In Section~\ref{sec:perlims}, we recall preliminaries regarding fast matrix multiplication and recursive-bilinear algorithms, followed by a summary of the Alternative Basis technique~\cite{karstadt2017matrix, karstadt2020matrix}. We then present Matrix Sparsification (MS, Problem~\ref{prob:MS}), alongside Gottlieb and Neylon's~\cite{gottlieb2010matrix} algorithm for solving MS by relying on an oracle for Sparsest Independent Vector (SIV, Problem~\ref{prob:SIV}). In Section~\ref{sec:optimal-algs} we present our two algorithms (Algorithms~\ref{SIV1Alg} and \ref{SIV2Alg}) for implementing SIV. In Section~\ref{sec:additional-methos}, we introduce Algorithm~\ref{KSSparseAlg} - the sparsification heuristic of~\cite{karstadt2017matrix, karstadt2020matrix}, and a new efficient heuristic for sparsifying matrices while simultaneously minimizing non-singular values (Algorithm~\ref{BSparseAlg}). In Section~\ref{sec:applications-results} we present the resulting fast matrix multiplication algorithms. Section~\ref{sec:discussion} contains a discussion and plans for future work.

\section{Preliminaries}\label{sec:perlims}

\subsection{Encoding and Decoding matrices. }\label{subsec:fmm}
Fast matrix multiplication algorithms are recursive divide-and-\\conquer algorithms, which utilize a small base case. We use the notation $\strass{n_{0}}{m_{0}}{k_{0}}{t_{0}}$-algorithm to refer to an algorithm multiplying $n_{0}\times m_{0}$ by $m_{0}\times k_{0}$ matrices in its base case, using $t_{0}$ scalar multiplications, where $n_{0},m_{0},k_{0}$ and $t_{0}$ are fixed positive integers.

When multiplying $n\times m$ by $m\times k$ matrix multiplication, the algorithm splits each matrix into blocks (each of size $\frac{n}{n_{0}}\times\frac{m}{m_{0}}$ and $\frac{m}{m_{0}}\times\frac{k}{k_{0}}$, respectively), and works block-wise, according to the base algorithm. Additions and subtractions in the base-case algorithm become block-wise additions and subtractions. Similarly, multiplication by a scalar become multiplication of a block matrix by a scalar. Matrix multiplications in the algorithm are performed via recursion.

Throughout this paper, we refer to an algorithm by its base case. Hence, an $\strass{n}{m}{k}{t}$-algorithm may refer to either the algorithm's base case or the corresponding block recursive algorithm, as obvious from the context.

\begin{fact}\cite{karstadt2017matrix, karstadt2020matrix}\label{fact:encdecmats}
Let $R$ be a ring, and let $f:R^{n}\times R^{m}\to R^{k}$ be a bilinear function that performs $t$ multiplications. There exist $U\in R^{t\times n}, \,V\in R^{t \times m}, \,W\in R^{t \times k}$ such that
\[
	\forall x\in R^{n},\, y\in R^{m},\, \, f\left(x,y\right) = W^{T}\left(\left(U\cdot x\right)\odot \left(V\cdot y\right)\right)
\]
where $\odot$ is the element-wise product (Hadamard product).
\end{fact}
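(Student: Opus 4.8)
The plan is to characterize bilinear maps by looking at their action on basis elements and reorganizing the resulting trilinear data. First I would fix the standard bases $e_1,\dots,e_n$ of $R^n$ and $e'_1,\dots,e'_m$ of $R^m$, and write $f(x,y)$ componentwise as $f(x,y)_\ell = \sum_{i,j} c^{(\ell)}_{ij} x_i y_j$ for $\ell = 1,\dots,k$, where the coefficients $c^{(\ell)}_{ij} \in R$ are extracted by evaluating $f$ on pairs $(e_i, e'_j)$ and using bilinearity to expand $f\bigl(\sum_i x_i e_i, \sum_j y_j e'_j\bigr)$. This is just the statement that a bilinear map is determined by a tensor $c \in R^{n\times m\times k}$.

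The substantive hypothesis is that $f$ "performs $t$ multiplications," which I would take to mean (as is standard, and as is implicit in Fact~\ref{fact:encdecmats}'s phrasing) that $f$ admits a bilinear algorithm computing it with $t$ nonscalar multiplications: there are linear forms $p_r(x) = \sum_i U_{ri} x_i$ and $q_r(y) = \sum_j V_{rj} y_j$ for $r = 1,\dots,t$, and constants $W_{r\ell} \in R$, such that each output is $f(x,y)_\ell = \sum_{r=1}^t W_{r\ell}\, p_r(x)\, q_r(y)$. Reading off the matrices, set $U = (U_{ri}) \in R^{t\times n}$, $V = (V_{rj}) \in R^{t\times m}$, $W = (W_{r\ell}) \in R^{t\times k}$. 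Then $(Ux)_r = p_r(x)$ and $(Vy)_r = q_r(y)$, so $\bigl((Ux)\odot(Vy)\bigr)_r = p_r(x)q_r(y)$, and hence $\bigl(W^T((Ux)\odot(Vy))\bigr)_\ell = \sum_r W_{r\ell} p_r(x) q_r(y) = f(x,y)_\ell$, which is exactly the claimed identity. Verifying that the two expressions for $f(x,y)_\ell$ agree for all $x,y$ is then a routine matching of coefficients.

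The one genuine subtlety — and the place I would be most careful — is what "$f$ performs $t$ multiplications" is actually asserting. If it is merely that the tensor $c$ has rank $t$, i.e. $c = \sum_{r=1}^t u_r \otimes v_r \otimes w_r$ for vectors $u_r \in R^n$, $v_r \in R^m$, $w_r \in R^k$, then one forms $U$ with rows $u_r^T$, $V$ with rows $v_r^T$, $W$ with rows $w_r^T$ and the same computation goes through, since $f(x,y)_\ell = \sum_{i,j} c^{(\ell)}_{ij} x_i y_j = \sum_r (u_r\cdot x)(v_r\cdot y)(w_r)_\ell$. Either reading gives the result; the main obstacle is purely definitional rather than mathematical, so I would state the definition of a bilinear algorithm explicitly (or cite it) at the outset and then the proof is a short unwinding of notation. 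Since the statement is attributed to prior work, a one-paragraph proof of this form, with the definitional point made precise, should suffice.
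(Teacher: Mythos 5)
The paper never proves this statement: it is quoted verbatim as a Fact with a citation to Karstadt and Schwartz, so there is no in-paper argument to compare against. Your proof is correct and is exactly the standard unwinding that the cited source relies on: interpreting ``performs $t$ multiplications'' as the existence of a bilinear algorithm $f(x,y)_\ell = \sum_{r=1}^{t} W_{r\ell}\, p_r(x)\, q_r(y)$ with linear forms $p_r, q_r$, stacking their coefficient vectors as the rows of $U$ and $V$ and the output coefficients as $W$, and checking the identity coordinatewise; your observation that this is equivalent to a rank-$t$ decomposition of the structure tensor is also the right way to make the hypothesis precise. The only caveat worth keeping in mind is the one you already flag: the statement silently assumes the bilinear (non-commutative) model of computation, i.e.\ that the $t$ multiplications are each of the form (linear form in $x$) $\times$ (linear form in $y$); under that reading your one-paragraph proof is complete, and nothing more is needed.
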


\begin{definition}\cite{karstadt2017matrix, karstadt2020matrix} (Encoding/Decoding matrices). \label{def:encdec}
We refer to the matrix triplet $\encdec{U}{V}{W}$ of a recursive-bilinear algorithm (see Fact \ref{fact:encdecmats}) as its encoding/decoding matrices ($U,\, V$ are the encoding matrices and $W$ is the decoding matrix).
\end{definition}

\begin{notation}~\cite{beniamini2019faster} 
Denote the number of nonzero entries in a matrix by $\nnz{A}$, and the number of non-singleton (i.e., not $\pm 1$) entries in a matrix by $\nns{A}$. Let the number of rows/columns be $nrows\left(A\right)$ and $ncols\left(A\right)$, respectively.
\end{notation}

\begin{remark}\cite{beniamini2019faster} \label{rem:nnz}
The number of linear operations used by a bilinear algorithm is determined by its encoding/decoding matrices. 
The number of arithmetic operations performed by each of the encodings is:
\begin{align*}
	\texttt{OpsU} & = \nnz{U} + \nns{U} - nrows\left(U\right) \\
	\texttt{OpsV} & = \nnz{V} + \nns{V} - nrows\left(V\right)
\end{align*}
The number of operations performed by the decoding is:
\[
	\texttt{OpsW} = \nnz{W} + \nns{W} - ncols\left(W\right)
\]
\end{remark}

\begin{remark}
We assume that none of the rows of the $U,\,V,$ and $W$ matrices is zero. This is because any zero row in $U,\,V$ is equivalent to an identically $0$ multiplicand, and any zero row in $W$ is equivalent to a multiplication that is never used in the output. Hence, such rows can be omitted, resulting in asymptotically faster algorithms.
\end{remark}

\begin{corollary}~\cite{beniamini2019faster} \label{cor:gen-runtime}
Let $ALG$ be an $\strass{n_{0}}{m_{0}}{k_{0}}{t_{0}}$-algorithm that performs $\texttt{OpsU},\, \texttt{OpsV},\, \texttt{OpsW}$ linear operations at the base case and let $n=n_{0}^{l},\,m=m_{0}^{l},\,k=k_{0}^{l}$ ($l\in\mathbb{N}$). The arithmetic complexity of $ALG$ is:
\small{
\begin{align*}
F\left(n,m,k\right) &=\left[1+\frac{\texttt{OpsU}}{t_{0}-n_{0}m_{0}}+\frac{\texttt{OpsV}}{t_{0}-m_{0}k_{0}}+\frac{\texttt{OpsW}}{t_{0}-n_{0}k_{0}}\right]t_{0}^{l}\\
&-\left[\frac{\texttt{OpsU}\cdot nm}{t_{0}-n_{0}m_{0}}+\frac{\texttt{OpsV} \cdot mk}{t_{0}-m_{0}k_{0}}+\frac{\texttt{OpsW}\cdot nk}{t_{0}-n_{0}k_{0}}\right]
\end{align*}
}
\end{corollary}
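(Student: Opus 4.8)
The plan is to derive the closed form by setting up and solving the natural recurrence that governs one level of the block-recursive algorithm, and then to pin down the homogeneous coefficient using the trivial base case $F(1,1,1)=1$.

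First I would unwind a single level of recursion. By Fact~\ref{fact:encdecmats} the base algorithm computes $W^{T}\bigl((U x)\odot(V y)\bigr)$; when it is run on $n\times m$ by $m\times k$ inputs with $n=n_{0}^{l}$, $m=m_{0}^{l}$, $k=k_{0}^{l}$, every ``scalar'' entry is replaced by a block of the appropriate shape. Forming $U x$ then replaces each base-case linear operation counted by $\texttt{OpsU}$ (which by Remark~\ref{rem:nnz} equals $\nnz{U}+\nns{U}-nrows(U)$) with one block operation on $\tfrac{n}{n_{0}}\times\tfrac{m}{m_{0}}$ matrices, costing $\tfrac{nm}{n_{0}m_{0}}$ scalar operations; likewise forming $V y$ costs $\texttt{OpsV}\cdot\tfrac{mk}{m_{0}k_{0}}$ scalar operations and applying $W^{T}$ costs $\texttt{OpsW}\cdot\tfrac{nk}{n_{0}k_{0}}$. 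The Hadamard product of the encoded blocks is exactly $t_{0}$ independent multiplications of $\tfrac{n}{n_{0}}\times\tfrac{m}{m_{0}}$ by $\tfrac{m}{m_{0}}\times\tfrac{k}{k_{0}}$ matrices, each handled recursively. Writing $F_{l}:=F(n_{0}^{l},m_{0}^{l},k_{0}^{l})$, this gives
\[
F_{l}=t_{0}F_{l-1}+\texttt{OpsU}\,(n_{0}m_{0})^{l-1}+\texttt{OpsV}\,(m_{0}k_{0})^{l-1}+\texttt{OpsW}\,(n_{0}k_{0})^{l-1},\qquad F_{0}=1,
\]
the base case being a single scalar multiplication.

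Next I would solve this linear recurrence. The homogeneous part is $c\,t_{0}^{l}$; for each inhomogeneous term of the form $a\,r^{l-1}$ (with $(a,r)$ ranging over $(\texttt{OpsU},n_{0}m_{0})$, $(\texttt{OpsV},m_{0}k_{0})$, $(\texttt{OpsW},n_{0}k_{0})$), the ansatz $\alpha r^{l}$ forces $\alpha=-a/(t_{0}-r)$, so the particular solution contributes $-\tfrac{a}{t_{0}-r}\,r^{l}$. Superposing these and imposing $F_{0}=1$ yields $c=1+\tfrac{\texttt{OpsU}}{t_{0}-n_{0}m_{0}}+\tfrac{\texttt{OpsV}}{t_{0}-m_{0}k_{0}}+\tfrac{\texttt{OpsW}}{t_{0}-n_{0}k_{0}}$. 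Substituting $(n_{0}m_{0})^{l}=nm$, $(m_{0}k_{0})^{l}=mk$, $(n_{0}k_{0})^{l}=nk$ gives precisely the stated formula.

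The computation is routine; the two points that need care are (i) justifying that one base-case linear operation on blocks costs exactly $\tfrac{n}{n_{0}}\cdot\tfrac{m}{m_{0}}$ (resp. $\tfrac{m}{m_{0}}\cdot\tfrac{k}{k_{0}}$, $\tfrac{n}{n_{0}}\cdot\tfrac{k}{k_{0}}$) scalar operations, which is where the bookkeeping of $\nnz{\cdot}$ and $\nns{\cdot}$ from Remark~\ref{rem:nnz} enters, and (ii) the implicit nondegeneracy $t_{0}\neq n_{0}m_{0}$, $t_{0}\neq m_{0}k_{0}$, $t_{0}\neq n_{0}k_{0}$, which holds for every algorithm considered here (otherwise the matching particular solution acquires a factor of $l$ and the closed form changes). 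I expect getting the block-operation accounting in (i) stated cleanly to be the main thing to be careful about.
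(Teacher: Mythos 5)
Your derivation is correct and is essentially the intended argument: the paper states this corollary by citation to Beniamini and Schwartz without reproducing a proof, and the cited derivation is exactly your recurrence $F_{l}=t_{0}F_{l-1}+\texttt{OpsU}\,(n_{0}m_{0})^{l-1}+\texttt{OpsV}\,(m_{0}k_{0})^{l-1}+\texttt{OpsW}\,(n_{0}k_{0})^{l-1}$ with $F_{0}=1$, solved by superposing the particular solutions $-\frac{a}{t_{0}-r}r^{l}$ and fixing the homogeneous coefficient from the base case. Your two cautionary points (the block-operation accounting via Remark~\ref{rem:nnz} and the implicit nondegeneracy $t_{0}\neq n_{0}m_{0},\,m_{0}k_{0},\,n_{0}k_{0}$, which holds for all algorithms considered) are exactly the right ones and are handled correctly.
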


\begin{definition}\label{def:perm-matrix}
Let $P_{I\times J}$ denote the permutation matrix that exchanges row-order for column-order
of the vectorization of an $I\times J$ matrix.
\end{definition}

\begin{lemma}\label{lem:hop-perm}\cite{hopcroft1973duality} Let $\encdec{U}{V}{W}$
be the encoding/decoding matrices of an $\strass{m}{k}{n}{t}$-algorithm. Then $
\left\langle WP_{n\times m},\,U,\,VP_{n\times k}\right\rangle $
are the encoding/decoding matrices of an $\left\langle n,m,k;t\right\rangle $-algorithm.
\end{lemma}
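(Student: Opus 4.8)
My plan is to route the proof through the \emph{trilinear form} of matrix multiplication and exploit its cyclic symmetry; essentially all the work is in tracking the vectorization conventions buried in the permutation matrices $P$. The first step is to rephrase correctness as a trilinear identity. By Fact~\ref{fact:encdecmats}, $\encdec{U}{V}{W}$ are the encoding/decoding matrices of an $\strass{m}{k}{n}{t}$-algorithm iff $\operatorname{vec}(AB)=W^{T}\!\bigl((U\operatorname{vec}A)\odot(V\operatorname{vec}B)\bigr)$ for all $A\in R^{m\times k}$, $B\in R^{k\times n}$, where $\operatorname{vec}$ is row-major vectorization. Pairing both sides with an arbitrary test matrix $C\in R^{n\times m}$ (more precisely, taking the entrywise inner product with the row-major vectorization of $C^{T}$) turns the left side into $\operatorname{tr}(ABC)=\sum_{i,j,l}A_{ij}B_{jl}C_{li}$ and turns $W^{T}$ on the right into $W\operatorname{vec}(C^{T})$. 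Since $\operatorname{vec}(C^{T})$ is precisely the column-major vectorization of the $n\times m$ matrix $C$, Definition~\ref{def:perm-matrix} gives $\operatorname{vec}(C^{T})=P_{n\times m}\operatorname{vec}(C)$, and because $C$ is arbitrary (and the pairing is a bijection of vectors) the resulting statement is an equivalence: $\encdec{U}{V}{W}$ is an $\strass{m}{k}{n}{t}$-algorithm iff
\[
	\operatorname{tr}(ABC)=\sum_{r=1}^{t}\bigl(U\operatorname{vec}A\bigr)_{r}\bigl(V\operatorname{vec}B\bigr)_{r}\bigl(WP_{n\times m}\operatorname{vec}C\bigr)_{r}
\]
for all $A\in R^{m\times k},\,B\in R^{k\times n},\,C\in R^{n\times m}$.

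The second step is to apply the cyclic invariance $\operatorname{tr}(ABC)=\operatorname{tr}(CAB)$ and relabel $A\mapsto Y\in R^{m\times k}$, $B\mapsto Z\in R^{k\times n}$, $C\mapsto X\in R^{n\times m}$, which rewrites the identity as
\[
	\operatorname{tr}(XYZ)=\sum_{r=1}^{t}\bigl(WP_{n\times m}\operatorname{vec}X\bigr)_{r}\bigl(U\operatorname{vec}Y\bigr)_{r}\bigl(V\operatorname{vec}Z\bigr)_{r}
\]
for all $X,Y,Z$ of the stated shapes. Applying the first-step equivalence in reverse, now for the shape $\strass{n}{m}{k}{t}$ — where it reads that $\encdec{U'}{V'}{W'}$ is an $\strass{n}{m}{k}{t}$-algorithm iff $\operatorname{tr}(XYZ)=\sum_{r}(U'\operatorname{vec}X)_{r}(V'\operatorname{vec}Y)_{r}(W'P_{k\times n}\operatorname{vec}Z)_{r}$ — the choice $U'=WP_{n\times m}$, $V'=U$, $W'=VP_{n\times k}$ satisfies this identity, since $W'P_{k\times n}=VP_{n\times k}P_{k\times n}=V$ using $P_{n\times k}=P_{k\times n}^{-1}$. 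Hence $\left\langle WP_{n\times m},\,U,\,VP_{n\times k}\right\rangle$ is an $\strass{n}{m}{k}{t}$-algorithm; it still performs $t$ multiplications, as only the linear pre- and post-processing was changed.

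The conceptual content is exactly the cyclic symmetry of the matrix-multiplication tensor. I expect the main obstacle to be the entirely mechanical bookkeeping: keeping straight row-major versus column-major vectorization, noting that a matrix picks up a transpose precisely when it is paired against a test matrix, and thereby deciding at each occurrence whether the permutation is $P_{n\times m}$ or $P_{m\times n}$ (and likewise $P_{n\times k}$ versus $P_{k\times n}$). The only facts about $P$ needed are $P_{I\times J}\operatorname{vec}_{\mathrm{row}}(M)=\operatorname{vec}_{\mathrm{col}}(M)$ for $M\in R^{I\times J}$ and $P_{I\times J}^{-1}=P_{I\times J}^{T}=P_{J\times I}$, both immediate from Definition~\ref{def:perm-matrix}.
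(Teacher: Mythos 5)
Your argument is correct: the paper itself gives no proof of this lemma (it is quoted from Hopcroft and Musinski), and your trace-pairing derivation is exactly the standard duality argument --- it amounts to observing that the triple product condition (Fact~\ref{fact:Triple-product-condition}) is invariant under cyclically permuting the roles of $U,V,W$ once the vectorization permutations are inserted, and your bookkeeping of row- versus column-major order and of $P_{n\times k}P_{k\times n}=I$ checks out. The only implicit hypothesis worth flagging is commutativity of $R$, which your step $\operatorname{tr}(ABC)=\operatorname{tr}(CAB)$ uses (as does the triple product condition itself) and which is the setting in which the lemma is applied.
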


\begin{remark}\label{rem:perm-operations}
In addition to Lemma~\ref{lem:hop-perm}, Hopcroft and Musinski~\cite{hopcroft1973duality} proved that any $\strass{n}{m}{k}{t}$-algorithm defines algorithms for all permutations of $n,\,m$, and $k$. Note, however, that while the number of non-zero and non-singular entries does not change, it follows from Remark~\ref{rem:nnz} and Corollary~\ref{cor:gen-runtime} that the leading coefficient varies according to the dimensions of the decoding matrix.
\end{remark}

\subsection{Alternative Basis Matrix Multiplication. }\label{subsec:alternative-basis-mult}
\begin{definition}\cite{karstadt2017matrix, karstadt2020matrix}~
Let $R$ be a ring and let $\bmapss$ be automorphisms of $R^{n\cdot m} , \, R^{m\cdot k}, \, R^{n\cdot k}$ (respectively).
We denote a recursive bilinear matrix multiplication algorithm which takes $\phi \left(A\right),\,\psi\left(B\right)$
as inputs and outputs $\upsilon \left(A\cdot B\right)$ using $t$ multiplications by $\strassaltdef{n}{m}{k}{t}$. 
If $n=m=k$ and $\phi=\psi=\upsilon$, we can use the notation $\squarestrass{n}{t}_{\phi}$-algorithm. This notation
extends the $\strass{n}{m}{k}{t}$-algorithm notation, as the latter applies when the three basis transformations are the identity map.
\end{definition}

Given a recursive bilinear, $\strassaltdef{n}{m}{k}{t}$-algorithm {\em ALG}, an alternative basis matrix multiplication operates as follows:

\begin{algorithm}[h]
	\begin{algorithmic}[1]
	\Require{$A\in R^{n\times m}$, $B^{m\times k}$}
	\Ensure{$n\times k$ matrix $C = A\cdot B$}
    	\Function{$Mult$}{$A,B$}
			\State $\tilde{A} = \phi (A)$ \Comment{$R^{n\times m}$ basis transformation}
			\State $\tilde{B} = \psi (B)$ \Comment{$R^{m\times k}$ basis transformation}
			\State $\tilde{C} = ALG (\tilde{A},\tilde{B})$\Comment{$\strassalt{n}{m}{k}{t}{\phi}{\psi}{\upsilon}$-algorithm }
			\State $C = \upsilon^{-1}(\tilde{C})$ \Comment{$R^{n\times k}$ basis transformation}
		\EndFunction
\State \Return{$C$}
	\end{algorithmic}
	\caption{\label{alg:ABS}Alternative Basis Matrix Multiplication Algorithm}
\end{algorithm}

\begin{lemma}\cite{karstadt2017matrix, karstadt2020matrix}\label{cor:alt-to-mul}~Let $R$ be a ring, and let $\bmapss$ be automorphisms of $R^{n\cdot m},\,R^{m\cdot k},\,R^{n\cdot k}$ (respectively). Then $\encdec{U}{V}{W} $
are encoding/decoding matrices of an $\strassaltdef{n}{m}{k}{t}$-algorithm if and only if~$\encdec{U\phi}{V\psi}{W\upsilon^{-T}}$ are encoding/decoding matrices of an $\strass{n}{m}{k}{t}$-algorithm
\end{lemma}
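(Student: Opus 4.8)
The plan is to verify the claimed equivalence directly from Fact~\ref{fact:encdecmats}, which characterizes the encoding/decoding matrices of a bilinear algorithm by the identity $f(x,y) = W^{T}((Ux)\odot(Vy))$. The key observation is that the automorphisms $\bmapss$ act as invertible linear maps on the vectorized matrices, so composing the algorithm with basis changes amounts to pre- and post-composing this bilinear identity with linear maps. I would first fix the vectorization convention so that $\phi, \psi, \upsilon$ are represented by invertible matrices of sizes $nm\times nm$, $mk\times mk$, $nk\times nk$ respectively, and write $x = \phi(A)$, $y = \psi(B)$ (abusing notation to identify matrices with their vectorizations), so that the alternative-basis algorithm computes $\upsilon(A\cdot B)$ on inputs $\phi(A), \psi(B)$.

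The main computation is then the following chain of equivalences. By definition, $\encdec{U}{V}{W}$ are encoding/decoding matrices of an $\strassaltdef{n}{m}{k}{t}$-algorithm iff for all $A, B$ we have $\upsilon(A\cdot B) = W^{T}((U\phi(A))\odot(V\psi(B)))$. Now substitute $A' = \phi(A)$ and $B' = \psi(B)$; since $\phi, \psi$ are automorphisms, $A', B'$ range over all of $R^{nm}, R^{mk}$, and $A\cdot B$ (as a vectorized product) equals $M(\phi^{-1}(A'), \psi^{-1}(B'))$ where $M$ denotes the bilinear matrix-multiplication map. Applying $\upsilon^{-1}$ to both sides, the condition becomes: for all $A', B'$, $\ \phi^{-1}(A')\cdot\psi^{-1}(B') = \upsilon^{-1}(W^{T}((UA')\odot(VB')))$. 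The right-hand side is $(W\upsilon^{-1})^{T}$ acting on the Hadamard product only if $\upsilon^{-1}$ is applied correctly as a matrix: writing $\upsilon^{-1}$ as left-multiplication by the matrix $\upsilon^{-1}$, we get $\upsilon^{-1}W^{T}z = (W\upsilon^{-T})^{T}z$ for any vector $z$, using the identity $\upsilon^{-1}W^{T} = (W\upsilon^{-T})^{T}$. Hence the condition reads $\phi^{-1}(A')\cdot\psi^{-1}(B') = (W\upsilon^{-T})^{T}((UA')\odot(VB'))$ for all $A', B'$. Finally, this is precisely the statement (via Fact~\ref{fact:encdecmats}) that $\encdec{U\phi}{V\psi}{W\upsilon^{-T}}$ are encoding/decoding matrices of a standard $\strass{n}{m}{k}{t}$-algorithm, because on the left the standard algorithm receives untransformed inputs $A = \phi^{-1}(A')$, $B = \psi^{-1}(B')$ and the encoding matrix $U\phi$ applied to $A$ gives $U\phi(A) = UA'$, and similarly for $V$. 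Each implication is reversible since $\phi, \psi, \upsilon$ are invertible, giving the "if and only if."

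The step I expect to require the most care is the bookkeeping around the decoding matrix: tracking why $\upsilon^{-1}$ composed on the output corresponds to $W \mapsto W\upsilon^{-T}$ rather than $W\upsilon^{-1}$ or $\upsilon^{-1}W$. This is a transpose-convention subtlety coming from the fact that $W$ enters the bilinear identity transposed, and it is worth stating cleanly as a small sublemma: for an invertible matrix $\upsilon$ and any conformable $W$, we have $\upsilon^{-1}(W^{T}z) = (W\upsilon^{-T})^{T}z$. Everything else is a substitution of variables exploiting that automorphisms are bijections, so the vectors $A', B'$ range over the full spaces. I would also note at the outset that the convention for which of $\phi(A)$ or its inverse is the "input" is fixed by Algorithm~\ref{alg:ABS}, so the direction of the substitution is forced and consistent with the statement of Lemma~\ref{cor:alt-to-mul}.
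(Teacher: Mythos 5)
Your proposal is correct: the chain of equivalences (unfolding Fact~\ref{fact:encdecmats} for the map $(\phi(A),\psi(B))\mapsto\upsilon(A\cdot B)$, applying $\upsilon^{-1}$, and using $\upsilon^{-1}W^{T}=(W\upsilon^{-T})^{T}$ with invertibility of $\bmapss$ giving reversibility) is exactly the direct verification behind this lemma. The paper itself states it without proof, citing Karstadt and Schwartz, and the cited proof proceeds in essentially the same way, so there is nothing to change beyond the careful transpose bookkeeping you already flag.
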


Alternative basis multiplication is fast since the basis transformations are fast and incur an asymptotically negligible overhead:

\begin{claim}\cite{karstadt2017matrix, karstadt2020matrix} \label{cl:basis-runtime}
Let $R$ be a ring, let $\psi :R^{n_{0}\times m_{0}}\to R^{n_{0}\times m_{0}}$ be a linear map, and let $A\in R^{n\times m}$ where $n=n_{0}^{k},\,m=m_{0}^{k}$. The complexity of $\psi\left(A\right)$ is 
\[
F\left(n,m\right)=\frac{q}{n_{0}m_{0}}nm \cdot\log_{n_{0}m_{0}}\left(nm\right)
\]
where $q$ is the number of linear operations performed.
\end{claim}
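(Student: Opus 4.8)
The plan is to unfold the recursive definition of the basis transformation into a balanced divide‑and‑conquer recurrence and solve it by telescoping. First I would make the recursive structure precise. Since $n=n_0^{k}$ and $m=m_0^{k}$, applying $\psi$ to $A\in R^{n\times m}$ proceeds exactly as the matrix multiplication recursion of Section~\ref{subsec:fmm}: view $A$ as an $n_0\times m_0$ array of blocks, each of size $\frac{n}{n_0}\times\frac{m}{m_0}$, apply the base map $\psi$ to this array treating the blocks as entries, and then recurse on each of the $n_0 m_0$ blocks (this is the natural $k$‑fold tensor power of the base map). The top‑level stage performs the $q$ linear operations of $\psi$, but now each such operation is a block addition or a scalar‑by‑block multiplication on blocks of $\frac{nm}{n_0 m_0}$ entries, hence costs $\frac{nm}{n_0 m_0}$ scalar operations; the top level therefore costs $q\cdot\frac{nm}{n_0 m_0}$ scalar operations, and is followed by $n_0 m_0$ recursive calls on inputs of size $\frac{n}{n_0}\times\frac{m}{m_0}$.

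Writing $N=n_0 m_0$ and letting $T(k)$ denote the scalar cost of applying $\psi$ to an $n_0^{k}\times m_0^{k}$ input, the previous paragraph gives the recurrence $T(k)=N\cdot T(k-1)+q\,N^{k-1}$ with $T(0)=0$ (equivalently $T(1)=q$, the number of linear operations of the base map, cf.\ Remark~\ref{rem:nnz}). Dividing through by $N^{k}$ yields $T(k)/N^{k}=T(k-1)/N^{k-1}+q/N$, which telescopes to $T(k)/N^{k}=qk/N$, i.e.\ $T(k)=\frac{q}{N}\,k\,N^{k}$. Alternatively one can argue directly on the recursion tree: it has $k$ levels, at level $j$ there are $N^{j-1}$ subproblems each contributing $q\,N^{k-j}$ scalar operations at its own top stage, so every level contributes exactly $q\,N^{k-1}$, and summing over the $k$ levels gives $T(k)=k\,q\,N^{k-1}=\frac{q}{N}\,k\,N^{k}$.

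Finally I would substitute back $N^{k}=(n_0 m_0)^{k}=n_0^{k}m_0^{k}=nm$ and $k=\log_{n_0 m_0}\!\big(N^{k}\big)=\log_{n_0 m_0}(nm)$, obtaining
\[
F(n,m)=T(k)=\frac{q}{n_0 m_0}\,nm\cdot\log_{n_0 m_0}(nm),
\]
as claimed.

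I expect the only real subtlety — more a matter of careful bookkeeping than a genuine obstacle — to be the first step: confirming that a single linear operation of the base map on scalars becomes one block operation of cost exactly $\frac{nm}{n_0 m_0}$ when the entries are promoted to blocks (a block addition of two $p$‑entry blocks uses $p$ ring additions, and a scalar‑by‑block multiplication uses $p$ ring multiplications, with no off‑by‑one correction, since such corrections are already absorbed into the definition of $q$), and that the recursion genuinely has branching factor $n_0 m_0$ and depth $k$. Once the recurrence $T(k)=N\,T(k-1)+q\,N^{k-1}$ is set up correctly, the rest is a one‑line telescoping sum.
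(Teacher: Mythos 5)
Your argument is correct: the claim is only cited here from Karstadt--Schwartz without a reproduced proof, and your derivation is exactly the standard one used there, namely the recurrence $F(n,m)=n_{0}m_{0}\,F\!\left(\tfrac{n}{n_{0}},\tfrac{m}{m_{0}}\right)+q\cdot\tfrac{nm}{n_{0}m_{0}}$ (equivalently $T(k)=N\,T(k-1)+qN^{k-1}$ with $N=n_{0}m_{0}$), telescoped over the $k=\log_{n_{0}m_{0}}(nm)$ recursion levels. No gaps; the block-cost accounting and the substitution $N^{k}=nm$ are exactly as in the cited proof.
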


\subsection{Matrix Sparsification. }

Finding a basis that minimizes the number of additions and subtractions performed by a fast matrix multiplication algorithm is equivalent, by Remark~\ref{rem:nnz}, to the Matrix Sparsification problem:

\begin{problem}\label{prob:MS} Matrix Sparsification Problem \em{(MS)}: 
Let $U$ be an $n\times m$ matrix. The objective is to find an invertible matrix $A$ such that
\[
		A = \underset{A\in GL_n}{\texttt{argmin}}\left(\nnz{AU}\right)
\]
\end{problem}

\begin{remark}\label{rem:matrix-assumptions}
It is traditional to think of the matrices $U,\,V$, and $W$ as ``tall and skinny'', i.e., with $n\geq m$. However, in the area of matrix sparsification, it is traditional to deal with matrices satisfying $n\leq m$ and transformations applied from the \textit{left}. However, since $nnz(AU) = nnz({U^T}{A^T})$, we can simply apply MS to $U^T$ and use $A^T$ as our basis transformation. From now on, we will therefore switch to the convention $n\leq m$ used in matrix sparsification.
\end{remark}

To solve MS, we make use of Gottlieb and Neylon's algorithm~\cite{gottlieb2010matrix}, which solves the matrix sparsification problem for $n\times m$ matrices, by repeatedly invoking an oracle for the Sparsest Independent Vector problem (Problem~\ref{prob:SIV}).

\begin{problem}\label{prob:SIV} Sparsest Independent Vector Problem \em{(SIV)}: 
Let $U\in R^{n\times m}$ ($n\leq m$) and let $\Omega = \left\lbrace \omega_{1},\ldots,\,\omega_{k} \right\rbrace \subset \left[m\right]$. Find a vector $v\in R^{n}$ s.t. $v$ is in the row space of $U$, $v$ is not in the span of $\left\lbrace U_{\omega_{1}},\ldots,\,U_{\omega_{k}} \right\rbrace$, and $v$ has a minimal number of nonzero entries.
\end{problem}

Given a subroutine $SIV\left(U,\Omega\right)$ which returns a pair $\left(v,\,i\right)$, where $v$ is the sparse vector as required by SIV, and $i\in\left[n\right]\setminus\Omega$ is an integer such that the $i$'th row of $U$ can be replaced by $v$ without changing the span of $U$. Then Algorithm~\ref{MSAlg} returns an exact solution for MS~\cite{gottlieb2010matrix}.

\begin{algorithm}
\caption{\label{MSAlg}MS via SIV~\cite{gottlieb2010matrix}}
\begin{algorithmic}[1]
\Procedure{MS(\it{U})}{}
\State $\Omega \gets \emptyset$
\For{$j = 1,...,n$}
    \State $(v_j,i) \gets SIV(U, \Omega)$
    \State Replace $i$'th row of $U$ with $v_j$
    \State $\Omega \gets \Omega \cup \lbrace i \rbrace$
\EndFor
\Return $U$
\EndProcedure
\end{algorithmic}
\end{algorithm}

\section{Optimal Sparsification Methods}\label{sec:optimal-algs}

In this section, we reframe SIV as a problem of finding a maximal subset of columns of the input matrix $U$ according to constraints given by $\Omega$ (see Definition~\ref{def:omega-valid}). We refer to such sets as \OmValid ~sets and show that \OmValid ~sets are tied to sparse independent vectors (Section~\ref{subsec:omvalid-indep}) and that any algorithm which finds an \OmValid ~set of maximal cardinality can be used as an oracle in Algorithm~\ref{MSAlg}. Finally, we show how to find maximal \OmValid ~sets (Section~\ref{subsec:computing-omvalid}), and obtain two algorithms that solve SIV.

Recall that we use the convention that $U \in \mathbb{F}^{n\times m}$ where $n\leq m$ (see Remark~\ref{rem:matrix-assumptions}). Throughout this section, we also assume that $U$ is of full rank $n$ and $\Omega \subsetneq \left[n\right]$.

\begin{notation}
For a set $S$ and an integer $k$, let $\mathcal{C}_{k}\left(S\right)$ denote the set of all subsets of $S$ with $k$ elements.
\end{notation}

\begin{definition}\label{def:omega-valid}
$S\subset \left[m\right]$ is \OmValid ~if there exists $i \notin \Omega$ such that $U_{i,\,S}$ is in the span of $rows \left( U_{\left[n\right] \setminus \left\lbrace i \right\rbrace ,\,S}\right)$.

Formally, a set $S\subset \left[m\right]$ is \OmValid ~if exists $\lambda \in \mathbb{F}^{n}$ with $supp\left(\lambda\right) \not\subset \Omega$ s.t. $\lambda^{T} U_{:,\,S}=0$ (where $supp\left(\lambda\right)=\left\lbrace i : \lambda_{i} \neq 0 \right\rbrace$).
\end{definition}

\begin{notation}
Given an \OmValid ~set $S$, we will refer to a vector $\lambda \in \mathbb{F}^{n}$ with $supp\left(\lambda\right) \not\subset \Omega$ s.t. $\lambda^{T} U_{:,\,S}=0$ as an \OmValidator ~of $S$.
\end{notation}

Next, we provide a definition for vectors which are candidates for a solution of SIV:
\begin{definition}\label{def:indep-of-omega}
A vector $v$ in the row space of $U$ is called $\Omega$-independent if $v$ is not in the row space of $U_{\Omega,\,:}$.
\end{definition}
Note that any solution to SIV (Problem~\ref{prob:SIV}) is, by definition, an optimally sparse $\Omega$-independent vector.

\begin{remark}
Note that given a set $S\subset \left[m\right]$, it is possible to verify whether $S$ is \OmValid and find an appropriate \OmValidator ~ for it in cubic time (e.g., via Gaussian elimination).
\end{remark}

\subsection{Sparse Independent Vectors and maximal \OmValid ~sets. } \label{subsec:omvalid-indep}

The crux of our algorithms lies in the idea of finding an \OmValid ~set of maximal cardinality and using it to compute a solution for SIV, which can then be used by Algorithm~\ref{MSAlg}. The connection between \OmValid ~sets and $\Omega$-independent vectors is given by the following lemmas:

\begin{lemma} \label{lem:indep-equal-valid-zeros}
Let $v\in \mathbb{F}^{n}$ be an \OmIndep ~vector. Then the set $S=\left\lbrace j : v_{j}=0\right\rbrace$ is an \OmValid ~set of size $zeros\left(v\right)$.
\begin{proof}
By Definition~\ref{def:indep-of-omega}, there exists a vector $\lambda\in \mathbb{F}^{n}$ s.t. $v=\sum_{i=1}^{n} \lambda_{i} U_{i}$ (i.e., $v=\lambda^{T}U$) and $\lambda_{i_0}\neq 0$ for some $i_{0}\notin \Omega$ (hence $supp\left(\lambda\right)\not\subset \Omega$). Thus, $\lambda$ is an \OmValidator ~of $S$, and therefore, $S$ is \OmValid .
\end{proof}
\end{lemma}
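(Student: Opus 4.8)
The plan is to produce an $\Omega$-validator of $S=\{j : v_j=0\}$ by reading it off from an expansion of $v$ in the rows of $U$. Since $v$ is $\Omega$-independent it lies, in particular, in the row space of $U$, so I can fix coefficients $\lambda\in\mathbb{F}^{n}$ with $v=\lambda^{T}U$. I will then argue that this $\lambda$ is an $\Omega$-validator of $S$ in the sense of Definition~\ref{def:omega-valid}, after which the size claim is immediate since $|S|=zeros(v)$ by the very definition of $S$.

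Two verifications are needed. First, $\lambda^{T}U_{:,S}=0$: restricting the identity $v=\lambda^{T}U$ to the columns indexed by $S$ gives $\lambda^{T}U_{:,S}=v_{S}$, which is the zero vector precisely because $S$ is the zero set of $v$. Second, $supp(\lambda)\not\subset\Omega$: this is the one place the $\Omega$-independence hypothesis enters, and it is where I expect whatever mild difficulty exists to lie. If $supp(\lambda)\subset\Omega$ held, then in $v=\lambda^{T}U$ only the rows of $U$ indexed by $\Omega$ would contribute, so $v$ would lie in the row space of $U_{\Omega,:}$, contradicting Definition~\ref{def:indep-of-omega}. Hence $supp(\lambda)\not\subset\Omega$, and together with the first point $\lambda$ witnesses the $\Omega$-validity of $S$.

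It is worth noting that I never need the expansion $v=\lambda^{T}U$ to be unique: the contradiction argument above applies to \emph{any} choice of $\lambda$ representing $v$, so any such $\lambda$ already satisfies $supp(\lambda)\not\subset\Omega$, and picking one of them suffices. (Under the standing full-rank assumption on $U$ the representation happens to be unique, but the proof does not lean on that.) In short, the lemma is essentially an unpacking of the two definitions, so there is no genuine obstacle here, only the bookkeeping of which rows of $U$ are permitted to appear in the combination.
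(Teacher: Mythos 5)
Your proof is correct and follows essentially the same route as the paper: take any $\lambda$ with $v=\lambda^{T}U$, observe $\lambda^{T}U_{:,S}=0$ because $S$ is the zero set of $v$, and conclude $supp(\lambda)\not\subset\Omega$ from $\Omega$-independence. You merely spell out explicitly (via the short contradiction argument) the step the paper absorbs into its appeal to Definition~\ref{def:indep-of-omega}.
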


\begin{lemma}\label{lem:valid-at-least-zeros}
Let $S\subset\left[m\right]$ be an \OmValid ~set and let $\lambda \in\mathbb{F}^{n}$ an \OmValidator ~of $S$. Then $v=\lambda^{T} U$ is an \OmIndep ~vector with at least $\left\lvert S \right\rvert$ zero entries.
\begin{proof}
Since $S$ is valid, there exists $\lambda \in \mathbb{F}^{n}$ s.t. $\lambda^{T} U_{:,S} = 0$ and $supp\left( \lambda \right)\not\subset \Omega$. Denote $v=\lambda^{T} U$. By definition, $v$ has at least $\left\lvert S \right\rvert$ zero entries since $\forall i\in S\, v_{i}=\left(\lambda^{T} U_{:,\,S}\right)_{i}=0$.
Next we show that $v$ is \OmIndep . Note that, $v=\lambda^{T} U = \sum_{i=1}^{n} \lambda_{i} U_{i,\,:}$ is in the row space of $U$ since it is a linear combination of the rows of $U$. Furthermore, since $supp\left(\lambda\right)\not\subset\Omega$, there exists $i_{0}\notin\Omega$ s.t. $\lambda_{i_0} \neq 0$. Therefore, $v$ is not in the row span of $U_{\Omega,\,:}$ since we assume (Remark~\ref{rem:matrix-assumptions}) that all rows of $U$ are linearly independent.
Hence, $v=\lambda^{T}U$ is an \OmIndep ~vector with at least $\left\lvert S \right\rvert$ zero entries.
\end{proof}
\end{lemma}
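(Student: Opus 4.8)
\textbf{Proof proposal for Lemma~\ref{lem:valid-at-least-zeros}.}
The statement is essentially the converse direction of Lemma~\ref{lem:indep-equal-valid-zeros}, so the plan is to unwind the definitions of \OmValid{} set and \OmValidator{} and verify the two required properties of $v = \lambda^T U$ in turn. First I would invoke Definition~\ref{def:omega-valid}: since $S$ is \OmValid{} and $\lambda$ is an \OmValidator{} of $S$, we have $\lambda^T U_{:,S} = 0$ and $supp(\lambda) \not\subset \Omega$. Set $v = \lambda^T U \in \mathbb{F}^m$. For the zero-count claim, observe that for each $j \in S$ the $j$-th entry of $v$ is exactly $(\lambda^T U_{:,S})_j = 0$, so $v$ vanishes on all of $S$ and hence has at least $|S|$ zero entries; this step is immediate.

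The substantive part is showing $v$ is \OmIndep{} in the sense of Definition~\ref{def:indep-of-omega}, i.e. $v$ lies in the row space of $U$ but not in the row space of $U_{\Omega,:}$. Membership in the row space of $U$ is clear because $v = \sum_{i=1}^n \lambda_i U_{i,:}$ is by construction a linear combination of the rows of $U$. For the non-membership, I would argue by contradiction or directly: since $supp(\lambda) \not\subset \Omega$, pick $i_0 \notin \Omega$ with $\lambda_{i_0} \neq 0$. If $v$ were in the row space of $U_{\Omega,:}$, then $v$ would be a linear combination of $\{U_{i,:} : i \in \Omega\}$; subtracting this from $v = \sum_i \lambda_i U_{i,:}$ would express $0$ as a nontrivial linear combination of the rows of $U$ (nontrivial because the coefficient of $U_{i_0,:}$ remains $\lambda_{i_0} \neq 0$, as $i_0 \notin \Omega$), contradicting the standing assumption (Remark~\ref{rem:matrix-assumptions}, together with the full-rank hypothesis on $U$ from the start of the section) that the rows of $U$ are linearly independent.

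There is no real obstacle here; the only point requiring a little care is making the non-membership argument rigorous — one must be explicit that the coefficient on row $i_0$ cannot be cancelled by any combination of the $\Omega$-rows, precisely because $i_0 \notin \Omega$, and then appeal to linear independence of the rows of $U$. Combining the two verified properties yields that $v = \lambda^T U$ is an \OmIndep{} vector with at least $|S|$ zero entries, completing the proof.
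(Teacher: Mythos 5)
Your proposal is correct and follows essentially the same route as the paper's proof: vanishing on $S$ from $\lambda^{T}U_{:,\,S}=0$, row-space membership since $v$ is a combination of rows of $U$, and non-membership in the row span of $U_{\Omega,\,:}$ via the coordinate $i_{0}\notin\Omega$ with $\lambda_{i_{0}}\neq 0$ together with linear independence of the rows. The only difference is that you spell out the contradiction argument that the paper leaves implicit, which is a harmless elaboration.
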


\begin{corollary}\label{cor:maximal-valid-equal-zeros}
Let $M\subset\left[m\right]$ be a maximal \OmValid ~set (i.e., $M$ is not a subset of any other \OmValid ~set), and let $v\in\mathbb{F}^{n}$ be an \OmIndep ~vector s.t. $\forall i\in M\, v_{i}=0$. Then $\forall j\notin M,\,v_{j}\neq 0$.
\begin{proof}
Denote the set of indices of zero entries of $v$ by $M' = \left\lbrace j : v_{j}=0\right\rbrace$. Since $v$ is \OmIndep , Lemma~\ref{lem:indep-equal-valid-zeros} yields that $M'$ is valid. Hence, by maximality of $M$, $M=M'$ and $\left\lvert M \right\rvert = zeros\left(v\right)$. Therefore, $\forall i \in \left[m\right] \, v_{i}=0$ if, and only if, $i\in M$.
\end{proof}
\end{corollary}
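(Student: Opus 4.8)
The plan is to compare the prescribed maximal \OmValid~set $M$ with the actual zero set of the vector $v$. First I would introduce $M' = \left\lbrace j : v_{j}=0\right\rbrace$, the set of \emph{all} coordinates on which $v$ vanishes. Since by hypothesis $v_{i}=0$ for every $i\in M$, we immediately get the containment $M\subseteq M'$.

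Next I would invoke Lemma~\ref{lem:indep-equal-valid-zeros}: because $v$ is an \OmIndep~vector, its zero set $M'$ is an \OmValid~set (and its size is $zeros\left(v\right)$). So $M'$ is an \OmValid~set containing $M$. By the maximality assumption on $M$ --- namely that $M$ is not a proper subset of any \OmValid~set --- the only possibility is $M'=M$; were $M\subsetneq M'$, this would directly contradict maximality.

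Finally, from $\left\lbrace j : v_{j}=0\right\rbrace = M' = M$ I conclude that for every $j\notin M$ we have $v_{j}\neq 0$, which is exactly the claim (and as a byproduct $\left\lvert M\right\rvert = zeros\left(v\right)$). The argument has no real obstacle: the single point that needs a word of justification is the containment $M\subseteq M'$, which is immediate from the hypothesis that $v$ vanishes on $M$; everything else is a one-line appeal to Lemma~\ref{lem:indep-equal-valid-zeros} followed by unwinding the definition of a maximal \OmValid~set.
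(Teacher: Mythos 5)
Your proposal is correct and follows essentially the same route as the paper: define $M'=\left\lbrace j : v_{j}=0\right\rbrace$, apply Lemma~\ref{lem:indep-equal-valid-zeros} to see $M'$ is \OmValid, and conclude $M=M'$ by maximality. You even make explicit the containment $M\subseteq M'$, which the paper leaves implicit, so the argument is if anything slightly more complete.
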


\begin{corollary}\label{cor:maximal-valid-equal-max-sparse}
Let $M\subset\left[m\right]$ be a maximal \OmValid ~set and let $\lambda\in\mathbb{F}^{n}$ be an \OmValidator ~of $M$. Then $v=\lambda^{T}U$ is an \OmIndep ~\\vector with exactly $\left\lvert M \right\rvert$ zero entries.
\begin{proof}
Follows directly from Lemma~\ref{lem:valid-at-least-zeros} and Corollary~\ref{cor:maximal-valid-equal-zeros}
\end{proof}
\end{corollary}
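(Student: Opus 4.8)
The plan is to combine the two preceding results essentially verbatim, so the argument is short. First I would invoke Lemma~\ref{lem:valid-at-least-zeros} with the set $M$ and the \OmValidator~$\lambda$: since $M$ is in particular \OmValid~and $\lambda$ witnesses this, the lemma gives that $v = \lambda^{T} U$ is an \OmIndep~vector, and that $v$ has \emph{at least} $\left\lvert M \right\rvert$ zero entries. Concretely, $v_{j} = 0$ for every $j \in M$, because $\lambda^{T} U_{:,M} = 0$ by the definition of an \OmValidator.

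Second, I would upgrade ``at least'' to ``exactly'' using the maximality of $M$. Having established that $v$ is \OmIndep~and that it vanishes on all of $M$, the vector $v$ satisfies precisely the hypotheses of Corollary~\ref{cor:maximal-valid-equal-zeros}. That corollary then yields $v_{j} \neq 0$ for every $j \notin M$. Hence the zero set of $v$ is exactly $M$, so $v$ has exactly $\left\lvert M \right\rvert$ zero entries, which is the claim.

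The only point requiring care --- and the closest thing to an obstacle --- is bookkeeping the hypotheses of the two cited results. One must check that $\lambda$ being an \OmValidator~of $M$ simultaneously supplies (a) the relation $\lambda^{T} U_{:,M} = 0$, which forces $M$ into the zero set of $v$, and (b) the support condition $supp(\lambda) \not\subset \Omega$, which is what makes $v$ \OmIndep~(relying on the full-rank assumption on $U$ from Remark~\ref{rem:matrix-assumptions}); and that it is the \emph{maximality} of the \OmValid~set $M$, not mere \OmValidity, that Corollary~\ref{cor:maximal-valid-equal-zeros} needs. No separate estimate or construction is required: the statement is a clean corollary of Lemma~\ref{lem:valid-at-least-zeros} and Corollary~\ref{cor:maximal-valid-equal-zeros}.
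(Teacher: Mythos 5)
Your proof is correct and follows exactly the paper's route: apply Lemma~\ref{lem:valid-at-least-zeros} to get that $v=\lambda^{T}U$ is \OmIndep~and vanishes on $M$, then use Corollary~\ref{cor:maximal-valid-equal-zeros} (via maximality of $M$) to conclude the zero set is exactly $M$. The paper states this as an immediate consequence of the same two results; your version merely makes the hypothesis bookkeeping explicit.
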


The final two claims will show how \OmValidity ~can serve as an oracle for Algorithm~\ref{MSAlg}. Recall that Algorithm~\ref{MSAlg} uses an oracle which returns a pair $\left(v,i\right)$, where $v$ is an optimally sparse \OmIndep ~vector, and replacing the $i$'th row of $U$ with $v$ does not change the row span of $U$. The next claim shows that a maximally sparse \OmIndep ~vector is equivalent to an \OmValid ~set of maximal cardinality.

\begin{claim}\label{cl:solve-siv-iff-max-card-valid}
An \OmIndep ~vector $v\in\mathbb{F}^{m}$ is optimally sparse if, and only if, $M=\left\lbrace i\,:\,v_{i}=0\right\rbrace$ is an \OmValid ~set of maximal cardinality.
\begin{proof}
First, assume that $v\in\mathbb{F}^{m}$ is a maximally sparse \OmIndep ~vector (i.e., for any \OmIndep ~vector $u$, $zeros\left(u\right)\leq zeros\left(v\right)$). From Lemma~\ref{lem:indep-equal-valid-zeros}, we know that $M$ is \OmValid . Lemma~\ref{lem:valid-at-least-zeros} shows that if there exists an \OmValid ~set $S$ s.t. $\left\lvert M \right\rvert < \left\lvert S \right\rvert$, then there also exists an \OmIndep ~vector $u\in\mathbb{F}^{m}$ s.t. $zeros\left(u\right) \geq \left\lvert S \right\rvert > zeros\left(v\right)$. This contradicts $v$ being a maximally sparse \OmIndep ~\\vector.

Now, assume that $M$ is an \OmValid ~set of maximal cardinality (i.e., for any \OmValid ~set $S$, $\left\lvert S \right\rvert \leq \left\lvert M \right\rvert$) and let $\lambda_{M}$ be an \OmValidator ~of $M$. By Corollary~\ref{cor:maximal-valid-equal-max-sparse}, $v_{M}=\lambda_{M}^{T}U$ is an \OmIndep ~vector with exactly $\left\lvert M \right\rvert$ zero entries. Assume by contradiction that exists $u\in\mathbb{F}^{m}$ with $z > \left\lvert M \right\rvert$ zero entries, then by Lemma~\ref{lem:indep-equal-valid-zeros}, there is an \OmValid ~set $S$ s.t. $\left\lvert M \right\rvert < \left\lvert S \right\rvert$, in contradiction to $M$ being an \OmValid ~set of maximal cardinality. Therefore, $v_{M}=\lambda_{M} U$ is a maximally sparse \OmIndep ~vector.
\end{proof}
\end{claim}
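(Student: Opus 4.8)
The plan is to prove the biconditional by establishing the two implications separately, using the translations between zero patterns of $\Omega$-independent vectors and cardinalities of $\Omega$-valid sets provided by Lemma~\ref{lem:indep-equal-valid-zeros} and Lemma~\ref{lem:valid-at-least-zeros}. Throughout I would write $M = \{\, i : v_i = 0 \,\}$ and recall that an $\Omega$-independent vector is \emph{optimally sparse} exactly when it has the maximum possible number of zero entries among all $\Omega$-independent vectors. I would first note that this maximum is attained: the number of zeros of a vector in $\mathbb{F}^m$ is bounded by $m$, and since $\Omega \subsetneq [n]$ and $U$ has full rank there is at least one $\Omega$-independent vector (e.g.\ any row $U_{i_0,:}$ with $i_0 \notin \Omega$); the same remark makes ``$\Omega$-valid set of maximal cardinality'' non-vacuous.

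For the ``only if'' direction I would assume $v$ is optimally sparse. Lemma~\ref{lem:indep-equal-valid-zeros} immediately yields that $M$ is $\Omega$-valid with $|M| = zeros(v)$, so the only remaining point is to rule out a strictly larger $\Omega$-valid set. If some $\Omega$-valid set $S$ had $|S| > |M|$, I would take an $\Omega$-validator $\lambda$ of $S$ and apply Lemma~\ref{lem:valid-at-least-zeros} to produce an $\Omega$-independent vector $\lambda^T U$ with at least $|S| > |M| = zeros(v)$ zeros, contradicting optimality of $v$. Hence $M$ is an $\Omega$-valid set of maximal cardinality.

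For the ``if'' direction I would assume $M = \{\, i : v_i = 0 \,\}$ is an $\Omega$-valid set of maximal cardinality, so that $zeros(v) = |M|$. If $v$ were not optimally sparse, there would be an $\Omega$-independent vector $u$ with $zeros(u) > zeros(v) = |M|$; applying Lemma~\ref{lem:indep-equal-valid-zeros} to $u$ would exhibit an $\Omega$-valid set $\{\, i : u_i = 0 \,\}$ of cardinality $zeros(u) > |M|$, contradicting maximality of $M$. (If an explicit witness is wanted, one can instead pick an $\Omega$-validator $\lambda$ of $M$ and invoke Corollary~\ref{cor:maximal-valid-equal-max-sparse} to get $zeros(\lambda^T U) = |M|$; together with the contradiction argument this shows no $\Omega$-independent vector has more than $|M|$ zeros, so in particular $v$ is optimally sparse.)

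I do not expect a genuine obstacle: the earlier lemmas carry the structural content, and each direction reduces to one contrapositive step. The care needed is purely in bookkeeping — keeping the ``maximize zeros'' phrasing of optimal sparsity consistent on both sides, using the ``at least $|S|$'' (not ``exactly'') half of Lemma~\ref{lem:valid-at-least-zeros} so the inequality stays strict, and noting that the relevant maxima are attained. Once Claim~\ref{cl:solve-siv-iff-max-card-valid} is in place, the only thing still required before $\Omega$-validity can be used as a SIV oracle in Algorithm~\ref{MSAlg} — presumably the job of the companion claim — is to check that, for such a maximal $M$ and an $\Omega$-validator $\lambda$ of it, some index of $supp(\lambda) \setminus \Omega$ is a legitimate row of $U$ to overwrite by $\lambda^T U$ without changing the row span.
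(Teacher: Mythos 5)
Your proof is correct and follows essentially the same route as the paper: Lemma~\ref{lem:indep-equal-valid-zeros} plus Lemma~\ref{lem:valid-at-least-zeros} for the forward direction, and a contradiction via Lemma~\ref{lem:indep-equal-valid-zeros} (optionally with Corollary~\ref{cor:maximal-valid-equal-max-sparse} as witness) for the converse. Your converse is in fact marginally cleaner than the paper's, since you argue directly about $v$ (using $zeros(v)=\left\lvert M \right\rvert$) rather than detouring through the validator-derived vector $v_{M}$, but this is a streamlining of the same argument, not a different approach.
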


The following claim shows that given an \OmValid ~set, $S$, and its corresponding \OmIndep ~vector $v$ (as in Lemma~\ref{lem:valid-at-least-zeros}), the support of the \OmValidator ~of $S$ can be used to find an index $i$ s.t. the $i$'th row of $U$ can be replaced with $v$ without changing the row span of $U$.

\begin{claim}\label{cl:omvalid-correct-index}
Let $S$ be an \OmValid ~set, let $\lambda$ be an \OmValidator ~of $S$, and let $v=\lambda^{T} U$. 
Then for any $i\in supp\left(\lambda\right)\setminus \Omega$, replacing row $i$ of $U$ with $v$ does not the change row span of $U$. That is:
\[
span\left(rows\left(U\right)\right)=span\left(rows\left(U_{\left[n\right]\setminus \left\lbrace i \right\rbrace,:}\right)\cup\left\lbrace v\right\rbrace\right)
\]

\begin{proof}
Fix $i_{0}\in supp\left(\lambda\right)\setminus \Omega$. Since $v$ is a linear combination of rows of $U$ and $\lambda_{i_{0}}\neq 0$, $u\in span\left(rows\left(U_{\left[n\right]\setminus \left\lbrace i_{0} \right\rbrace,:}\right)\cup\left\lbrace v\right\rbrace\right)$, for any $u\in span\left(rows\left(U\right)\right)$.  Now, let $\alpha\in \mathbb{F}^{n}$ be the vector $\alpha_{j}=-\lambda_{j}$ (for $j\neq i_{0}$) and $\alpha_{i_{0}}=0$. Then $w=\alpha^{T}U\in span\left(rows\left(U_{\left[n\right]\setminus \left\lbrace i_{0} \right\rbrace,:}\right)\right)$, therefore $w+v=\lambda_{i_{0}}U_{i_{0},:}\in span\left(rows\left(U_{\left[n\right]\setminus \left\lbrace i_{0} \right\rbrace,:}\right)\cup\left\lbrace v\right\rbrace\right)$. Hence, $span\left(rows\left(U\right)\right)=$ \\ $span\left(rows\left(U_{\left[n\right]\setminus \left\lbrace i \right\rbrace,:}\right)\cup\left\lbrace v\right\rbrace\right)$.
\end{proof}
\end{claim}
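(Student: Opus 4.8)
The plan is to prove the claimed equality of spans by establishing the two inclusions separately; the whole argument is elementary linear algebra over the field $\mathbb{F}$. The inclusion $\supseteq$ is immediate: since $v = \lambda^{T} U$ is by construction a linear combination of the rows of $U$, every vector in $span\left(rows\left(U_{\left[n\right]\setminus\left\lbrace i\right\rbrace,:}\right)\cup\left\lbrace v\right\rbrace\right)$ already lies in $span\left(rows\left(U\right)\right)$, so the right-hand side is contained in the left-hand side.

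For the reverse inclusion $\subseteq$, the key observation is that the chosen index $i$ satisfies $\lambda_{i}\neq 0$ — that is exactly what $i\in supp\left(\lambda\right)$ means — and, since we work over a field, this coefficient is invertible. Rearranging $v = \sum_{j=1}^{n}\lambda_{j}U_{j,:}$ yields $U_{i,:} = \lambda_{i}^{-1}\bigl(v - \sum_{j\neq i}\lambda_{j}U_{j,:}\bigr)$, so row $i$ of $U$ — the only row of $U$ not literally present on the right-hand side — is recovered as a linear combination of $v$ and the remaining rows $U_{\left[n\right]\setminus\left\lbrace i\right\rbrace,:}$. Hence every row of $U$ lies in $span\left(rows\left(U_{\left[n\right]\setminus\left\lbrace i\right\rbrace,:}\right)\cup\left\lbrace v\right\rbrace\right)$, which gives $span\left(rows\left(U\right)\right)\subseteq span\left(rows\left(U_{\left[n\right]\setminus\left\lbrace i\right\rbrace,:}\right)\cup\left\lbrace v\right\rbrace\right)$, and combined with the first inclusion completes the proof. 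If one prefers to phrase this without division (matching the bookkeeping style used elsewhere in the section), I would instead introduce the vector $\alpha\in\mathbb{F}^{n}$ obtained from $\lambda$ by setting $\alpha_{i}=0$ and $\alpha_{j}=-\lambda_{j}$ for $j\neq i$; then $\alpha^{T}U\in span\left(rows\left(U_{\left[n\right]\setminus\left\lbrace i\right\rbrace,:}\right)\right)$ and $\alpha^{T}U + v = \lambda_{i}U_{i,:}$, which exhibits a nonzero scalar multiple of $U_{i,:}$ in the target span and again yields the reverse inclusion.

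There is no real obstacle in this claim — the mathematical substance of the section sits in the earlier lemmas, and this is the short final bridge. The one point worth a sentence of care is the role of the hypothesis $i\notin\Omega$: it plays no part whatsoever in the span computation above (any $i$ with $\lambda_{i}\neq0$ would serve), and is carried along only so that the index returned to Algorithm~\ref{MSAlg} is a legitimate choice, i.e.\ one of the rows not already frozen by $\Omega$. Thus the claim does double duty: an \OmValidator{} of $S$ simultaneously supplies the optimally sparse \OmIndep{} vector $v=\lambda^{T}U$ (via Lemma~\ref{lem:valid-at-least-zeros} and Corollary~\ref{cor:maximal-valid-equal-max-sparse}) and, through any $i\in supp\left(\lambda\right)\setminus\Omega$, a row index at which $v$ may legally overwrite $U$ without altering its row span — exactly the pair $\left(v,i\right)$ required by the oracle in Algorithm~\ref{MSAlg}.
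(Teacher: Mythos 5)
Your proof is correct and follows essentially the same route as the paper: the $\supseteq$ inclusion from $v$ being a row combination, and the $\subseteq$ inclusion by solving for $U_{i,:}$ using $\lambda_{i}\neq 0$ — indeed your division-free variant with the vector $\alpha$ is literally the paper's argument. The remark about $i\notin\Omega$ being irrelevant to the span computation but needed for the oracle interface is accurate and matches the paper's intent.
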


Therefore, any algorithm which finds an \OmValid ~set of maximal cardinality is an oracle for Algorithm~\ref{MSAlg}.

\subsection{Computing maximal \OmValid ~sets. }\label{subsec:computing-omvalid}

Given a maximal \OmValid ~set, we now have the tools to compute optimally sparse \OmIndep ~vectors. As the next stage, we show how to compute a maximal \OmValid ~set $M$ using a small subset of columns $S\subset M$. The key intuition here is that if $\lambda\in\mathbb{F}^{n}$ is an \OmValidator ~of $S$, then $\lambda$ is orthogonal to all columns indexed by $S$ (since $\lambda^{T} U(:, S) = 0$), and any linear combinations of columns of $S$. This leads to the following extension of sets:

\begin{definition}\label{def:max-extension}
Let $S\subset\left[m\right]$. We define the extension of $S$, $E\left(S\right)$, to be the largest set $E\subset\left[m\right]$ s.t. $span\left(col\left(U_{:,\,S}\right)\right)=span\left(col\left(U_{:,\,E}\right)\right)$.
\end{definition}

\begin{lemma} \label{lem:valid-extension}
Let $S\subset \left[m\right]$. Then $S$ is \OmValid ~if, and only if, $E\left(S\right)$ is \OmValid .
\begin{proof}
Assume $E\left(S\right)$ is \OmValid . By definition of \OmValidity , exists a vector $\lambda\in\mathbb{F}^{n}$ s.t. $supp\left(\lambda\right)\not\subset\Omega$ and $\lambda^{T} U_{:,E\left(S\right)}=0$. Since $S \subset E\left(S\right)$, $\lambda^{T} U_{:,S}=0$, therefore, $S$ is valid.

Let $S\subset\left[m\right]$ be an \OmValid ~set, and let $\lambda\in \mathbb{F}^{n}$ with $supp\left(\lambda\right) \not\subset\Omega$ s.t. $\lambda^{T} U_{:,\, S}=0$. Since $col\left(U\left(:,\, E\left(S\right)\right)\right) = col\left(U\left(:,\, S\right)\right)$, all columns indexed by $E\left(S\right)$ are linear combinations of the columns indexed by $S$. Since $\lambda$ is orthogonal to all columns of $U$ indexed by $S$, it is also orthogonal to all their
linear combinations. Therefore, $\lambda^{T} U_{:,\, E\left(S\right)} = 0$. Hence $E\left(S\right)$ is valid.
\end{proof}
\end{lemma}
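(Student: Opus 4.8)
The plan is to prove the two implications separately, each resting on the elementary fact that a linear functional that annihilates a set of vectors also annihilates their span. First I would pin down $E\left(S\right)$: since $S$ trivially satisfies the defining condition $span\left(col\left(U_{:,\,S}\right)\right)=span\left(col\left(U_{:,\,S}\right)\right)$ and $E\left(S\right)$ is, by Definition~\ref{def:max-extension}, the \emph{largest} set with that property, we have $S\subseteq E\left(S\right)$; concretely $E\left(S\right)=\left\lbrace j\in\left[m\right] : U_{:,\,j}\in span\left(col\left(U_{:,\,S}\right)\right)\right\rbrace$, which is visibly well-defined and contains $S$. This containment settles one direction almost immediately.

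For the ``if'' direction I would suppose $E\left(S\right)$ is \OmValid, witnessed by an \OmValidator~$\lambda\in\mathbb{F}^{n}$ with $supp\left(\lambda\right)\not\subset\Omega$ and $\lambda^{T}U_{:,\,E\left(S\right)}=0$. Restricting to the columns indexed by $S\subseteq E\left(S\right)$ immediately gives $\lambda^{T}U_{:,\,S}=0$, so the very same $\lambda$ is an \OmValidator~of $S$ (the support condition is unchanged), and hence $S$ is \OmValid~by Definition~\ref{def:omega-valid}.

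For the ``only if'' direction I would suppose $S$ is \OmValid~with \OmValidator~$\lambda$, so $\lambda^{T}U_{:,\,S}=0$ and $supp\left(\lambda\right)\not\subset\Omega$. By definition of $E\left(S\right)$, every column $U_{:,\,j}$ with $j\in E\left(S\right)$ lies in $span\left(col\left(U_{:,\,S}\right)\right)$, hence can be written as $\sum_{j'\in S}c_{j'}U_{:,\,j'}$. Applying $\lambda^{T}$ and using linearity, $\lambda^{T}U_{:,\,j}=\sum_{j'}c_{j'}\lambda^{T}U_{:,\,j'}=0$; as this holds for every $j\in E\left(S\right)$ we obtain $\lambda^{T}U_{:,\,E\left(S\right)}=0$, so $\lambda$ is an \OmValidator~of $E\left(S\right)$ and therefore $E\left(S\right)$ is \OmValid.

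The argument is routine linear algebra, and I do not anticipate a genuine obstacle. The only point that needs a moment's care is the bookkeeping around $E\left(S\right)$ — that it is well-defined and contains $S$, so that the phrase ``$E\left(S\right)$ is \OmValid'' is even meaningful and the ``if'' direction goes through — together with the observation that the witnessing vector $\lambda$ can be reused verbatim in both directions, which makes the support condition $supp\left(\lambda\right)\not\subset\Omega$ transfer for free; the rest is just expanding columns of $E\left(S\right)$ as linear combinations of columns of $S$.
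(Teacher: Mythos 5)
Your proof is correct and follows essentially the same route as the paper's: one direction uses $S\subseteq E\left(S\right)$ to restrict the \OmValidator, and the other uses the fact that columns indexed by $E\left(S\right)$ are linear combinations of columns indexed by $S$, so the same $\lambda$ annihilates them. The extra remarks on well-definedness of $E\left(S\right)$ are harmless additions, not a deviation.
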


Next we show that the search for a maximal \OmValid ~set can be reduced to the search over maximal extensions of sets of size $n-1$.

\begin{remark}\label{rem:max-rank-of-valid}
Note that $rank\left(U_{:,\,S}\right)\leq n-1$ for any \OmValid ~set $S$. This is due to the fact that if $rank\left(U_{:,\,S}\right)=n$ then $\lambda^{T} U_{:,\,S} = 0$ implies that $\lambda=0$ since the rows of $U$ are linearly independent.
\end{remark}

\begin{lemma}\label{lem:compute-valid}
Let $S$ be an \OmValid ~set and let $\lambda\in\mathbb{F}^{n}$ be an \OmValidator ~of $S$. Then 
\[
	E\left(S\right) \subset \left\lbrace i\,:\, \left(\lambda^{T}U\right)_{i} = 0 \right\rbrace
\]
\begin{proof}
Let $D = \left\lbrace i\,:\, \left(\lambda^{T}U\right)_{i} = 0 \right\rbrace$. By Definition~\ref{def:max-extension}, columns indexed by $E\left(S\right)$ are linear combinations of the columns indexed by $S$ and $\lambda$ is orthogonal to all columns of $U_{:,\,S}$ (and their linear combinations). Hence, $\lambda^{T} U_{:,\, E\left(S\right)} = 0 $ and $E\left(S\right) \subset D$.
\end{proof}
\end{lemma}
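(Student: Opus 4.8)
The statement to prove is Lemma~\ref{lem:compute-valid}: if $S$ is \OmValid\ with \OmValidator\ $\lambda$, then $E(S) \subset \{i : (\lambda^T U)_i = 0\}$. The plan is a direct containment argument driven entirely by the definition of the extension $E(S)$ (Definition~\ref{def:max-extension}) together with the orthogonality property of an \OmValidator. First I would name the target set $D = \{i : (\lambda^T U)_i = 0\}$ and observe that it suffices to show every column index $j \in E(S)$ satisfies $(\lambda^T U)_j = 0$, i.e.\ $\lambda^T U_{:,E(S)} = 0$.

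The core of the argument is the following chain. By Definition~\ref{def:max-extension}, $\mathrm{span}(\mathrm{col}(U_{:,S})) = \mathrm{span}(\mathrm{col}(U_{:,E(S)}))$; in particular every column $U_{:,j}$ with $j \in E(S)$ lies in $\mathrm{span}(\mathrm{col}(U_{:,S}))$, so it is a linear combination of the columns indexed by $S$. Since $\lambda$ is an \OmValidator\ of $S$, we have $\lambda^T U_{:,S} = 0$, i.e.\ $\lambda$ is orthogonal (in the bilinear-form sense $u \mapsto \lambda^T u$) to each column of $U_{:,S}$. Linearity of $u \mapsto \lambda^T u$ then gives $\lambda^T v = 0$ for every $v$ in the column span of $U_{:,S}$, hence $\lambda^T U_{:,j} = 0$ for all $j \in E(S)$. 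Thus $\lambda^T U_{:,E(S)} = 0$, which is exactly the statement that $E(S) \subset D$. This mirrors the second half of the proof of Lemma~\ref{lem:valid-extension}, which already establishes the key orthogonality-propagation step; I would essentially reuse that reasoning.

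There is no real obstacle here: the lemma is a routine consequence of Definition~\ref{def:max-extension} and the defining property of an \OmValidator, and in fact it is weaker than Lemma~\ref{lem:valid-extension} (which says $E(S)$ itself is \OmValid, and the \OmValidator\ $\lambda$ witnesses it). The only point requiring a word of care is to be explicit that ``column $j$ of $U_{:,E(S)}$ lies in $\mathrm{span}(\mathrm{col}(U_{:,S}))$'' follows from the span equality, since $U_{:,j}$ is trivially in $\mathrm{span}(\mathrm{col}(U_{:,E(S)}))$ — but this is immediate. I would also note in passing that one does not get equality $E(S) = D$ in general, only containment, which is why the lemma is phrased as $\subset$; equality would require that the zero pattern of $\lambda^T U$ is "saturated" by $S$, which is not assumed. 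The full proof is thus two or three sentences long, and the version printed in the excerpt is already essentially the proof I would write.
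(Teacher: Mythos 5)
Your argument is correct and is essentially identical to the paper's proof: both name $D$, use Definition~\ref{def:max-extension} to express each column indexed by $E\left(S\right)$ as a linear combination of the columns indexed by $S$, and propagate the orthogonality $\lambda^{T} U_{:,\,S}=0$ by linearity to conclude $\lambda^{T} U_{:,\,E\left(S\right)}=0$, hence $E\left(S\right)\subset D$. Your side remarks (that only containment, not equality, holds, and that the step reuses the reasoning from Lemma~\ref{lem:valid-extension}) are accurate and do not change the approach.
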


\begin{lemma}\label{lem:extension-is-maximal}
Let $S$ be an \OmValid ~set s.t. $rank\left(U_{:,\,S}\right)=n-1$, and let $D$ be an \OmValid ~set s.t. $S\subset D$. Then $D\subset E\left(S\right)$.
\begin{proof}
Since $S\subset D$, $n-1 = rank\left(U_{:,\,S}\right) \leq rank\left(U_{:,\,D}\right)$. However, from Remark~\ref{rem:max-rank-of-valid}, we know that $rank\left(U_{:,\,D}\right)\leq n-1$, therefore, $span\left(col\left( U_{:,\,S}\right)\right)=span\left(col\left(U_{:,\,D}\right)\right)$. Hence, by definition, $D\subset E\left(S\right)$.
\end{proof}
\end{lemma}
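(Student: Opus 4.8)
The statement is a short rank-monotonicity argument, so the plan is to chain three facts: the hypothesis $rank(U_{:,S}) = n-1$, the general bound from Remark~\ref{rem:max-rank-of-valid} that any \OmValid{} set spans a column space of rank at most $n-1$, and the monotonicity of rank under set inclusion. First I would observe that $S \subset D$ forces $rank(U_{:,S}) \le rank(U_{:,D})$, since the columns of $U_{:,S}$ form a subfamily of the columns of $U_{:,D}$, so their span is contained in the span of the latter. Combining this with the hypothesis gives $n-1 \le rank(U_{:,D})$.

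Next I would invoke Remark~\ref{rem:max-rank-of-valid}: because $D$ is \OmValid{}, $rank(U_{:,D}) \le n-1$. Sandwiching, $rank(U_{:,D}) = n-1 = rank(U_{:,S})$. Now both column spaces have the same (finite) dimension and one contains the other, namely $span(col(U_{:,S})) \subseteq span(col(U_{:,D}))$; equality of dimensions then forces $span(col(U_{:,S})) = span(col(U_{:,D}))$. This is the only genuinely ``mathematical'' step, and it is the standard fact that a subspace of a finite-dimensional space with equal dimension is the whole space — no real obstacle.

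Finally, I would apply Definition~\ref{def:max-extension} directly: $E(S)$ is defined as the \emph{largest} set $E \subset [m]$ with $span(col(U_{:,S})) = span(col(U_{:,E}))$. Since we have just shown $D$ itself satisfies $span(col(U_{:,D})) = span(col(U_{:,S}))$, maximality of $E(S)$ yields $D \subset E(S)$, which is the claim. The whole argument is essentially a three-line inequality chain plus one appeal to the definition of the extension; if there is any subtlety to be careful about, it is merely making sure the inclusion of spans is recorded in the correct direction before citing the definition of $E(S)$ — the definition requires set-equality of the two spans, which we do have, not merely one-sided containment.
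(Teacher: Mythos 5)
Your proposal is correct and follows essentially the same argument as the paper: rank monotonicity under $S\subset D$, the bound $rank\left(U_{:,\,D}\right)\leq n-1$ from Remark~\ref{rem:max-rank-of-valid}, equality of the column spans, and then the maximality in Definition~\ref{def:max-extension}. Your added remark that equal dimensions plus one-sided containment forces span equality just makes explicit a step the paper leaves implicit.
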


\begin{corollary}\label{cor:compute-max-valid}
Let $S$ be an \OmValid ~set s.t. $rank\left(U_{:,\,S}\right)=n-1$, and let $\lambda\in\mathbb{F}^{n}$ be an \OmValidator ~of $S$. Then 
\[
	E\left(S\right) = \left\lbrace i\,:\, \left(\lambda^{T}U\right)_{i} = 0 \right\rbrace
\]
\begin{proof}
This is a direct result of Lemma~\ref{lem:compute-valid} and Lemma~\ref{lem:extension-is-maximal}.
\end{proof}
\end{corollary}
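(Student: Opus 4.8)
The statement is an equality of sets, so the plan is to prove the two inclusions separately, where I write $D = \left\lbrace i\,:\,\left(\lambda^{T}U\right)_{i}=0\right\rbrace$ for the right-hand side. The inclusion $E\left(S\right)\subset D$ requires no new work: it is precisely the conclusion of Lemma~\ref{lem:compute-valid}, whose argument is that an \OmValidator~$\lambda$ of $S$ is orthogonal to every column of $U_{:,\,S}$, hence to every column in their span, and the columns indexed by $E\left(S\right)$ are — by Definition~\ref{def:max-extension} — exactly such linear combinations. So the only thing left is the reverse inclusion $D\subset E\left(S\right)$.

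For that, the first step I would take is to observe that $D$ is itself an \OmValid~set. Indeed, by the very definition of $D$ we have $\lambda^{T}U_{:,\,D}=0$, and since $\lambda$ is an \OmValidator~of $S$ we already know $supp\left(\lambda\right)\not\subset\Omega$; therefore $\lambda$ is an \OmValidator~of $D$, and $D$ is \OmValid. The second step is to note that $S\subset D$: every $j\in S$ satisfies $\left(\lambda^{T}U\right)_{j}=\left(\lambda^{T}U_{:,\,S}\right)_{j}=0$, so $j\in D$ (equivalently, $S\subset E\left(S\right)\subset D$ via the inclusion just established).

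Now I am in a position to invoke Lemma~\ref{lem:extension-is-maximal}: $S$ is an \OmValid~set with $rank\left(U_{:,\,S}\right)=n-1$, and $D$ is an \OmValid~set containing $S$, so the lemma gives $D\subset E\left(S\right)$. Combining this with $E\left(S\right)\subset D$ from Lemma~\ref{lem:compute-valid} yields $E\left(S\right)=D$, as claimed.

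I do not expect any serious obstacle here — the corollary is essentially a packaging of the two preceding lemmas. The one point that genuinely matters is the hypothesis $rank\left(U_{:,\,S}\right)=n-1$: without it, Lemma~\ref{lem:extension-is-maximal} does not apply and the inclusion $E\left(S\right)\subset D$ may be strict, so the equality would fail. The rest is routine bookkeeping: checking that $D$ inherits \OmValidity~from $\lambda$ (which it does because the support condition on $\lambda$ is unchanged) and that $S\subset D$.
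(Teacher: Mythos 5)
Your proposal is correct and follows the paper's proof exactly: the paper also obtains the equality by combining Lemma~\ref{lem:compute-valid} (for $E\left(S\right)\subset D$) with Lemma~\ref{lem:extension-is-maximal} (for $D\subset E\left(S\right)$), and you merely spell out the routine verifications — that $D$ is \OmValid\ via the same validator $\lambda$ and that $S\subset D$ — which the paper leaves implicit in its one-line proof.
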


Note that Corollary~\ref{cor:compute-max-valid} gives us the tools to quickly compute the extension of any \OmValid ~set $S$ such that $rank\left(U_{:,\,S}\right)=n-1$. Next we prove that any maximal \OmValid ~set is an extension of an \OmValid ~set of $n-1$ linearly independent columns of $U$:

\begin{claim} \label{claim:rank-of-maximal-cardinality-valid-set}
Let $S\subset \left[m\right]$ be a maximal \OmValid ~set, then 
\[
rank\left(U_{:,\,S}\right) = n-1
\]
\begin{proof}
Let $S\subset \left[m\right]$ be a maximal \OmValid ~set, and let $i_{0}\notin \Omega$ such that $U_{i_{0},\,S}\in span\left(rows\left(U_{\left[n\right]\setminus i_{0},\,S}\right)\right)$ (such $i_{0}$ exists by definition of an \OmValid ~set). Suppose, by contradiction, that $rank\left(U_{:,\,S}\right)=n-r$ for some $r > 1$.

Note that since $U_{i_{0},\,S}$ is in the row span $U_{\left[n\right]\setminus i_{0},\,S}$, $rank\left(U_{:,\,S}\right)=rank\left(U_{\left[n\right]\setminus i_{0},\,S}\right)=n-r$. Therefore, exists $S_{0}\subset S$ s.t. $|S|=n-r$ and $rank\left(U_{:,\,S_{0}}\right)=n-r$.
 
Let $Q\subset\left[m\right]\setminus S$ s.t. $\left\lvert Q \right\rvert = r-1$, $rank\left(U_{\left[n\right]\setminus i_{0},\,Q}\right)=r-1$, and each column indexed by $Q$ is not in the column span of $U_{\left[n\right]\setminus i_{0},\, S}$. Such $Q$ exists because the matrix $U_{\left[n\right]\setminus \left\lbrace i_{0} \right\rbrace,\, :}$ has full rank $n-1$ (since $U$ is of full row rank $n$).

Since the matrix $U_{\left[n\right]\setminus \left\lbrace i_{0} \right\rbrace,\, S_{0}\cup Q}$ is a square $n-1\times n-1$ matrix of full rank, $U_{i_{0}, S_{0}\cup Q}$ is in the span of $row\left(U_{\left[n\right]\setminus \left\lbrace i_{0} \right\rbrace,\, S_{0}\cup Q}\right)$. Therefore, $S_{0}\cup Q$ is an \OmValid ~set.

By Lemma \ref{lem:valid-extension}, the extension of $S_{0}\cup Q$ is also valid. Furthermore, $S\cup Q\subset E\left(S_{0}\cup Q\right)$ because we have chosen $S_{0}$ s.t. it spans the same column space as $S$. However, by construction of $Q$, we know that $S\cap Q=\emptyset$, meaning that $\left\lvert E\left(S_{0}\cup Q\right)\right\rvert \geq \left\lvert S\cup Q \right\rvert > \left\lvert S \right\rvert$. This in contradiction to maximality of $S$.
\end{proof}
\end{claim}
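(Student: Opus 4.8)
The statement to prove is Claim~\ref{claim:rank-of-maximal-cardinality-valid-set}: if $S \subset [m]$ is a maximal \OmValid{} set, then $rank(U_{:,S}) = n-1$.

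From Remark~\ref{rem:max-rank-of-valid}, we already know $rank(U_{:,S}) \leq n-1$ for any \OmValid{} set. So the entire content of the claim is the lower bound: a maximal \OmValid{} set cannot have rank strictly below $n-1$. The natural strategy is proof by contradiction: suppose $rank(U_{:,S}) = n - r$ for some $r > 1$, and construct an \OmValid{} set strictly larger than $S$, contradicting maximality.

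Let me think about how to do this construction. Since $S$ is \OmValid{}, there's $i_0 \notin \Omega$ such that $U_{i_0, S}$ is in the row span of $U_{[n]\setminus\{i_0\}, S}$. The key observation: removing row $i_0$ doesn't reduce the rank, so $rank(U_{[n]\setminus\{i_0\}, S}) = n - r$ as well. Now, since $U_{[n]\setminus\{i_0\}, :}$ has full rank $n-1$ (because $U$ has full rank $n$), I can find $r-1$ additional columns $Q$ (disjoint from $S$) such that together with a maximal independent subset $S_0$ of $S$'s columns (which has size $n-r$), the columns $S_0 \cup Q$ form an $(n-1) \times (n-1)$ submatrix of $U_{[n]\setminus\{i_0\}, :}$ of full rank $n-1$. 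Then $U_{i_0, S_0 \cup Q}$ automatically lies in the row span of $U_{[n]\setminus\{i_0\}, S_0 \cup Q}$ (since the latter spans all of $\mathbb{F}^{n-1}$), so $S_0 \cup Q$ is \OmValid{}.

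Now comes the payoff: by Lemma~\ref{lem:valid-extension}, $E(S_0 \cup Q)$ is also \OmValid{}. Since $S_0$ spans the same column space as $S$, and we're looking at the column span of $U_{:,S_0 \cup Q}$ which contains both $col(U_{:,S})$ and $col(U_{:,Q})$, we get $S \cup Q \subseteq E(S_0 \cup Q)$. But $Q$ is disjoint from $S$ and nonempty (since $r > 1$ means $|Q| = r-1 \geq 1$), so $|E(S_0 \cup Q)| \geq |S \cup Q| > |S|$, contradicting the maximality of $S$. I expect the main obstacle is being careful about the existence of $Q$ with all the required properties simultaneously — it needs to have the right size, the columns must be independent modulo the column span of $U_{[n]\setminus\{i_0\}, S}$, and we must verify the resulting $(n-1)\times(n-1)$ matrix is genuinely full rank; this rests on the full-rank-ness of $U_{[n]\setminus\{i_0\}, :}$ and a standard basis-extension argument.

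=== PROOF PROPOSAL ===

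\medskip

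The plan is to prove the claim by contradiction, using the extension machinery from Lemma~\ref{lem:valid-extension} to produce a strictly larger \OmValid{} set whenever $rank(U_{:,S}) < n-1$. Since Remark~\ref{rem:max-rank-of-valid} already gives $rank(U_{:,S}) \leq n-1$ for every \OmValid{} set, it suffices to rule out $rank(U_{:,S}) = n - r$ with $r > 1$.

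First I would fix $i_0 \notin \Omega$ witnessing the \OmValidity{} of $S$, i.e.\ with $U_{i_0, S}$ in the row span of $U_{[n]\setminus\{i_0\}, S}$; this means deleting row $i_0$ preserves the rank, so $rank(U_{[n]\setminus\{i_0\}, S}) = n-r$ as well, and hence there is $S_0 \subseteq S$ with $|S_0| = n-r$ and $rank(U_{:,S_0}) = rank(U_{:,S_0}) = n-r$ spanning the same column space as $S$. Next, because $U$ has full row rank $n$, the submatrix $U_{[n]\setminus\{i_0\}, :}$ has full rank $n-1$, so I can extend the $n-r$ independent columns in $S_0$ by choosing a set $Q \subseteq [m] \setminus S$ with $|Q| = r-1$ whose columns (restricted to rows $[n]\setminus\{i_0\}$) are independent from those of $S_0$; the crucial point is that such $Q$ can be taken disjoint from $S$, which is where $U_{[n]\setminus\{i_0\},:}$ having full rank is used. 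Then $U_{[n]\setminus\{i_0\}, S_0 \cup Q}$ is an $(n-1)\times(n-1)$ full-rank square matrix, so its rows span all of $\mathbb{F}^{n-1}$; in particular $U_{i_0, S_0 \cup Q}$ lies in that span, making $S_0 \cup Q$ an \OmValid{} set.

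Finally I would invoke Lemma~\ref{lem:valid-extension} to conclude $E(S_0 \cup Q)$ is \OmValid{}. Since $S_0$ spans the same column space as $S$, we have $span(col(U_{:,S})) \subseteq span(col(U_{:,S_0 \cup Q}))$, and $Q$'s columns are of course in $span(col(U_{:,S_0 \cup Q}))$, so $S \cup Q \subseteq E(S_0 \cup Q)$ by the definition of the extension. As $Q$ is disjoint from $S$ and nonempty (here $r > 1$ is exactly what is needed), $|E(S_0 \cup Q)| \geq |S \cup Q| = |S| + |Q| > |S|$, contradicting the maximality of $S$. The step I expect to be the main obstacle is the simultaneous construction of $Q$: one must ensure it has the right cardinality $r-1$, that its columns are genuinely independent from $col(U_{[n]\setminus\{i_0\}, S})$, and that it can be chosen outside $S$ — all of which rest on a careful basis-extension argument inside the full-rank matrix $U_{[n]\setminus\{i_0\},:}$, together with the observation that the span of $S_0$'s columns already equals that of $S$'s columns so no generality is lost by working with $S_0$.
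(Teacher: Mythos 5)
Your proposal is correct and follows essentially the same route as the paper's own proof: contradiction via $rank(U_{:,S}) = n-r$, $r>1$, picking the witness row $i_{0}$, replacing $S$ by an independent subset $S_{0}$ of size $n-r$, extending by $r-1$ columns $Q$ disjoint from $S$ using the full rank $n-1$ of $U_{[n]\setminus\{i_{0}\},:}$, and then applying Lemma~\ref{lem:valid-extension} to get $S\cup Q\subset E(S_{0}\cup Q)$, contradicting maximality. The one spot you flag as delicate (choosing $Q$ outside $S$ with independent columns) is exactly the step the paper also handles by a basis-extension argument, so there is no substantive difference.
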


\begin{corollary}\label{cor:max-valid-is-extension}
Let $S\subset \left[m\right]$ be a maximal \OmValid ~set and let $C\subset S$ s.t. $rank\left(U_{:,\,C}\right) = n-1$. Then $S=E\left(C\right)$.
\begin{proof}
$C\subset S$, Therefore, $span\left(col\left(U_{:,\,C}\right)\right)\subset span\left(col\left(U_{:,\,S}\right)\right)$. Because $S$ is maximal, Claim~\ref{claim:rank-of-maximal-cardinality-valid-set} shows that $rank\left(U_{:,\,S}\right) = n-1$. We have, by  rank equality, that $span\left(col\left(U_{:,\,C}\right)\right) = span\left(col\left(U_{:,\,S}\right)\right)$. By definition, $E\left(C\right)$ is the maximal set $E$ s.t. $span\left(col\left(U_{:,\,C}\right)\right)\subset span\left(col\left(U_{:,\,E}\right)\right)$, therefore, $S\subset E\left(C\right)$. However, by maximality of $S$, we have $S=E\left(C\right)$.
\end{proof}
\end{corollary}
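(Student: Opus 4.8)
The plan is to derive this as an essentially immediate consequence of the preceding Claim~\ref{claim:rank-of-maximal-cardinality-valid-set}, which tells us that any maximal \OmValid~set $S$ satisfies $rank\left(U_{:,\,S}\right) = n-1$, together with the definition of the extension $E\left(C\right)$. The key observation is that since $C \subset S$ and $C$ already has column-rank $n-1$, while $S$ also has column-rank exactly $n-1$, the two column spans must coincide rather than one being strictly contained in the other.

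The steps, in order, are as follows. First, observe that $C \subset S$ gives the trivial inclusion $span\left(col\left(U_{:,\,C}\right)\right) \subset span\left(col\left(U_{:,\,S}\right)\right)$. Second, invoke Claim~\ref{claim:rank-of-maximal-cardinality-valid-set} to get $rank\left(U_{:,\,S}\right) = n-1$; combined with the hypothesis $rank\left(U_{:,\,C}\right) = n-1$, a contained subspace of equal finite dimension must be the whole space, so $span\left(col\left(U_{:,\,C}\right)\right) = span\left(col\left(U_{:,\,S}\right)\right)$. Third, apply Definition~\ref{def:max-extension}: $E\left(C\right)$ is by construction the \emph{largest} set whose columns span the same space as those of $C$. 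Since $S$'s columns span exactly that space, $S$ is one of the competing sets, hence $S \subset E\left(C\right)$. Fourth, close the loop using maximality of $S$: by Lemma~\ref{lem:valid-extension}, $E\left(C\right)$ is \OmValid~whenever $C$ is (and $C$ is \OmValid~since $C \subset S$ and subsets of \OmValid~sets need a little care --- actually here we only need that $E(C)$ is \OmValid, which follows from $S$ being \OmValid, $S \subset E(C)$, and Lemma~\ref{lem:valid-extension} applied in the direction that the extension of a valid set is valid; alternatively just note $E(C) \supset S$ with $S$ valid forces $E(C)$ valid via an \OmValidator~of $S$). Then $S \subset E\left(C\right)$ with both \OmValid~and $S$ maximal forces $S = E\left(C\right)$.

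I do not expect any serious obstacle here --- the corollary is a short bookkeeping argument gluing together Claim~\ref{claim:rank-of-maximal-cardinality-valid-set}, Definition~\ref{def:max-extension}, and the maximality hypothesis. The only point requiring a moment's attention is justifying that $E\left(C\right)$ is itself \OmValid~so that maximality of $S$ can be applied to conclude equality; this is handled by noting $S \subset E\left(C\right)$ and that $S$ is \OmValid, so an \OmValidator~of $S$ is orthogonal to all columns in $\mathrm{span}(col(U_{:,S})) = \mathrm{span}(col(U_{:,E(C)}))$, hence also an \OmValidator~of $E\left(C\right)$. Everything else is the elementary fact that nested subspaces of equal dimension are equal.
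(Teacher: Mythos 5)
Your proposal is correct and follows the same route as the paper: $C\subset S$ gives the containment of column spans, Claim~\ref{claim:rank-of-maximal-cardinality-valid-set} together with the hypothesis $rank\left(U_{:,\,C}\right)=n-1$ upgrades it to span equality, Definition~\ref{def:max-extension} then gives $S\subset E\left(C\right)$, and maximality of $S$ closes the argument. The only difference is that you explicitly check that $E\left(C\right)$ is \OmValid\ before invoking maximality (by noting that an \OmValidator\ of $S$ is orthogonal to the common column span, hence validates $E\left(C\right)$), a point the paper's proof leaves implicit; this is extra care rather than a divergence.
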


\begin{corollary}\label{cor:generating-set-size}
Let $S\subset\left[m\right]$ be a maximal \OmValid ~set, then exist $C\in \mathcal{C}_{n-1}\left(\left[m\right]\right)$ s.t. $S=E\left(C\right)$.
\begin{proof}
This is a direct result of Corollary~\ref{cor:max-valid-is-extension}
\end{proof}
\end{corollary}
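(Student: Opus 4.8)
The plan is to produce the required $C$ by extracting a column basis out of $S$ and then quoting Corollary~\ref{cor:max-valid-is-extension}. First I would invoke Claim~\ref{claim:rank-of-maximal-cardinality-valid-set}, which tells us that a maximal \OmValid ~set $S$ satisfies $rank\left(U_{:,\,S}\right)=n-1$. This is the substantive ingredient; everything that follows is routine linear algebra.

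Next, since $col\left(U_{:,\,S}\right)$ is spanned by the columns $\left\lbrace U_{:,\,j} : j\in S\right\rbrace$ and has dimension $n-1$, I would pick a subset $C\subset S$ of these indices so that $\left\lbrace U_{:,\,j} : j\in C\right\rbrace$ is a basis of $col\left(U_{:,\,S}\right)$ (a spanning set always contains a basis). Then $\left\lvert C\right\rvert = rank\left(U_{:,\,S}\right) = n-1$, so $C\in\mathcal{C}_{n-1}\left(\left[m\right]\right)$, and moreover $rank\left(U_{:,\,C}\right)=n-1$ with $C\subset S$.

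Finally, these are exactly the hypotheses of Corollary~\ref{cor:max-valid-is-extension}, whose conclusion is $S=E\left(C\right)$, as desired. I do not expect any genuine obstacle here: the only place where anything nontrivial happens is Claim~\ref{claim:rank-of-maximal-cardinality-valid-set}, which is already established, and once the rank of $U_{:,\,S}$ is pinned down to $n-1$, selecting $n-1$ independent columns inside $S$ and appealing to Corollary~\ref{cor:max-valid-is-extension} is immediate. The statement is thus a direct corollary, as the excerpt indicates.
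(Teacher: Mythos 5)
Your proposal is correct and follows essentially the same route as the paper: the paper's proof simply cites Corollary~\ref{cor:max-valid-is-extension}, and your argument just makes explicit the (routine) intermediate step — using Claim~\ref{claim:rank-of-maximal-cardinality-valid-set} to get $rank\left(U_{:,\,S}\right)=n-1$ and extracting a column basis $C\subset S$ of size $n-1$ — before invoking that corollary. Nothing is missing and no different idea is introduced.
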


\subsection{First algorithm for SIV. }

Our first algorithm performs an exhaustive search over all maximal \OmValid ~sets in order find one with maximal cardinality. This is a result of the observation given by Claim~\ref{cl:solve-siv-iff-max-card-valid}, which states that any solution to SIV is tied to an \OmValid ~set of maximal cardinality (and vice versa).
The search is done using by combining Corollary~\ref{cor:generating-set-size}, which states that any maximal \OmValid ~set is the extension of an \OmValid ~set of $n-1$ independent columns, and Corollary~\ref{cor:compute-max-valid}, which provides a method to compute said extension.

\begin{algorithm}[h]
\caption{\label{SIV1Alg}Sparsest Independent Vector (1)}
\begin{algorithmic}[1]
\Procedure{$SIV(\it{U}, \it{\Omega})$}{}
\State sparsity $\gets 0$
\State sparsest $\gets$ null
\State $i \gets $ null
\For{$C \in \mathcal{C}_{n - 1}(\lbrace 1,...,m \rbrace)$}
    \If{$rank\left(U_{:,C}\right)< n-1$ or $C$ is not \OmValid }
    	\State continue
    \EndIf
    
	\State $\lambda \gets$ \OmValidator ~of $C$
    \State $v \gets \lambda^{T} U$
    \State $E \gets \left\lbrace i\,: v_{i} = 0 \right\rbrace$
    \If{$\left\lvert E \right\rvert >$ sparsity }
        \State sparsity $\gets \left\lvert E \right\rvert$
        \State sparsest $\gets v$
        \State $i \gets $ any element of $supp\left(\lambda\right) \setminus \Omega$
    \EndIf   
\EndFor
\Return $\left(v,\,i\right)$
\EndProcedure
\end{algorithmic}
\end{algorithm}

\begin{lemma}\label{lem:SIV1Alg-iterates-all-maximal}
Algorithm~\ref{SIV1Alg} iterates over all maximal \OmValid ~sets.
\begin{proof}
By Corollary~\ref{cor:generating-set-size}, for any maximal \OmValid ~set $E$, there exist an \OmValid ~set $C\in \mathcal{C}_{n-1}\left(\left[m\right]\right)$ s.t. $rank\left(U_{:,\,C}\right)=n-1$ and $E$ is the extension of $C$. Therefore, the algorithm iterates over all \OmValid ~sets $C\in \mathcal{C}_{n-1}\left(\left[m\right]\right)$ s.t. $rank\left(U_{:,\,C}\right)=n-1$. Furthermore, by Corollary~\ref{cor:compute-max-valid}, if $rank\left(U_{:,\,C}\right)=n-1$ and $\lambda$ is an \OmValidator ~of $C$ then $E\left(C\right)=\left\lbrace i\,: \left(\lambda^{T}U\right)_{i} = 0 \right\rbrace$. The algorithm performs this computation at lines 11-13. Hence, the algorithm iterates over all \OmValid ~sets.
\end{proof}
\end{lemma}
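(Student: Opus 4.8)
The plan is to verify the two directions implicit in the claim, of which only one is actually needed: every maximal \OmValid{} set $M$ is produced by some iteration of the loop. Concretely, I will show that there is a choice set $C \in \mathcal{C}_{n-1}(\{1,\dots,m\})$ such that (a) the iteration for $C$ survives the filter at lines 6--8, and (b) on that iteration the algorithm computes exactly $E = M$.

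For (a), fix a maximal \OmValid{} set $M$. By Corollary~\ref{cor:generating-set-size} there exists $C \in \mathcal{C}_{n-1}(\left[m\right])$ with $M = E(C)$. I must check this $C$ is not discarded. Its rank: by Definition~\ref{def:max-extension} the columns indexed by $E(C)$ and by $C$ span the same space, so $rank(U_{:,C}) = rank(U_{:,E(C)}) = rank(U_{:,M})$, and by Claim~\ref{claim:rank-of-maximal-cardinality-valid-set} the latter equals $n-1$. Its \OmValidity{}: $M = E(C)$ is \OmValid{} by assumption, and $C$ is a set whose extension is $E(C)$, so Lemma~\ref{lem:valid-extension} gives that $C$ is itself \OmValid{}. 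Hence the iteration for $C$ passes both conditions at lines 6--8.

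For (b), on the iteration for $C$ the algorithm picks an \OmValidator{} $\lambda$ of $C$ (line 9), forms $v = \lambda^{T} U$ (line 10), and sets $E = \left\lbrace i : v_i = 0\right\rbrace$ (line 11). Since $rank(U_{:,C}) = n-1$, Corollary~\ref{cor:compute-max-valid} yields $E = E(C) = M$. Thus the loop visits $M$, records $\lvert E\rvert = \lvert M\rvert$, the vector $v$, and an index from $supp(\lambda)\setminus\Omega$ (nonempty since $\lambda$ is an \OmValidator{}). This establishes that the algorithm iterates over every maximal \OmValid{} set, which is the assertion.

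I do not anticipate a substantive obstacle; the one point requiring care is confirming that the set $C$ supplied by Corollary~\ref{cor:generating-set-size} genuinely has rank exactly $n-1$ and is \OmValid{}, so that it is not removed by the filter — this is precisely where Claim~\ref{claim:rank-of-maximal-cardinality-valid-set} and Lemma~\ref{lem:valid-extension} are indispensable. A minor additional remark worth including is that the computed set $E$ does not depend on which \OmValidator{} $\lambda$ is selected: when $rank(U_{:,C}) = n-1$, the left kernel of $U_{:,C}$ is one-dimensional (cf.\ Remark~\ref{rem:max-rank-of-valid} together with the rank count), so any two \OmValidator{}s of $C$ are proportional and share the same zero set, making line 11 well defined.
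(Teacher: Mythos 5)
Your proposal is correct and follows essentially the same route as the paper's proof: use Corollary~\ref{cor:generating-set-size} to obtain a size-$(n-1)$ generating set $C$ for each maximal \OmValid{} set and Corollary~\ref{cor:compute-max-valid} to see that the loop reconstructs its extension. Your extra checks that $C$ passes the rank and validity filter (via Claim~\ref{claim:rank-of-maximal-cardinality-valid-set} and Lemma~\ref{lem:valid-extension}), and the remark that the choice of \OmValidator{} does not matter, merely make explicit what the paper leaves implicit.
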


\begin{theorem} Algorithm~\ref{SIV1Alg} produces an optimal solution to SIV, and is an oracle for Algorithm~\ref{MSAlg}.

\begin{proof} 
By Lemma~\ref{lem:SIV1Alg-iterates-all-maximal}, Algorithm~\ref{SIV1Alg} iterates over all maximal \OmValid ~sets. Lines 14-17 check whether a given \OmValid ~set has greater cardinality than any previously found maximal \OmValid ~set and if it does, the algorithm choose this set as a working solution. Hence, at the end of the algorithm, the chosen vector $v$ correlates to a maximal cardinality \OmValid ~set. By Claim~\ref{cl:solve-siv-iff-max-card-valid}, $v$ is an optimal solution to SIV (a maximally sparse \OmIndep ~vector) if, and only if, the set $E = \left\lbrace i\,: v_{i} = 0 \right\rbrace$ is an \OmValid ~set of maximal cardinality. Therefore, the vector chosen at the end of the algorithm is a maximally sparse \OmIndep ~vector. Finally, by Claim~\ref{cl:omvalid-correct-index}, the pair $\left(v,i\right)$ serves as the oracle for SIV required by Algorithm~\ref{MSAlg}.
\end{proof}
\end{theorem}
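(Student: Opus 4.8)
The plan is to assemble the theorem from three ingredients already established: the enumeration guarantee of Lemma~\ref{lem:SIV1Alg-iterates-all-maximal}, the characterization of optimal SIV solutions in Claim~\ref{cl:solve-siv-iff-max-card-valid}, and the row-replacement correctness of Claim~\ref{cl:omvalid-correct-index}.

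First I would identify exactly what the main loop produces. By Lemma~\ref{lem:SIV1Alg-iterates-all-maximal}, as $C$ ranges over $\mathcal{C}_{n-1}\left(\left[m\right]\right)$ the sets $E$ formed inside the loop range over \emph{all} maximal \OmValid~sets: the rank-and-\OmValidity~test at the top of the loop body discards only those $C$'s that cannot generate a maximal set, which is harmless by Claim~\ref{claim:rank-of-maximal-cardinality-valid-set} together with Corollary~\ref{cor:generating-set-size}, while for every surviving $C$ Corollary~\ref{cor:compute-max-valid} guarantees that the set $E = \left\lbrace i : \left(\lambda^{T}U\right)_{i} = 0\right\rbrace$ computed there equals $E\left(C\right)$, the maximal \OmValid~set generated by $C$. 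The bookkeeping lines then retain the iteration whose $E$ has the largest cardinality. Since every \OmValid~set is contained in a maximal one, $\max_{C}\left\lvert E\right\rvert$ over the loop equals the largest cardinality of any \OmValid~set; hence at termination the stored vector $v = \lambda^{T}U$ satisfies that $\left\lbrace i : v_{i} = 0\right\rbrace$ is an \OmValid~set of maximal cardinality, and, by Corollary~\ref{cor:maximal-valid-equal-max-sparse}, $v$ is in fact \OmIndep~and has exactly that many zero entries.

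Next I would invoke the ``if'' direction of Claim~\ref{cl:solve-siv-iff-max-card-valid}: an \OmIndep~vector whose zero set is an \OmValid~set of maximal cardinality is a maximally sparse \OmIndep~vector, i.e., an optimal solution to SIV; applied to the stored $v$ this gives the first half of the statement. For the ``oracle'' half, I would note that the index $i$ returned is taken from $supp\left(\lambda\right)\setminus\Omega$, which is nonempty precisely because $\lambda$ is an \OmValidator~(its support is not contained in $\Omega$), so in particular $i\notin\Omega$; Claim~\ref{cl:omvalid-correct-index} then says that replacing row $i$ of $U$ by $v$ leaves $span\left(rows\left(U\right)\right)$ unchanged, which is exactly the remaining requirement of Algorithm~\ref{MSAlg}. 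Hence $\left(v,i\right)$ is a legitimate oracle answer.

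I do not foresee a real difficulty. The one point that deserves care is confirming that pruning the enumeration by the rank test never skips an \OmValid~set — but this is exactly what Corollary~\ref{cor:generating-set-size} and Claim~\ref{claim:rank-of-maximal-cardinality-valid-set} deliver, since every maximal \OmValid~set is $E\left(C\right)$ for some $C\in\mathcal{C}_{n-1}\left(\left[m\right]\right)$ of rank $n-1$, and every non-maximal \OmValid~set is contained in such an $E\left(C\right)$. (A purely cosmetic remark: the pseudocode's final line writes the loop variable $v$ where the stored ``sparsest'' vector is intended; the argument should read the output as that retained optimum together with its recorded index $i$.)
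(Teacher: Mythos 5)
Your proposal is correct and follows essentially the same route as the paper's own proof: Lemma~\ref{lem:SIV1Alg-iterates-all-maximal} for the enumeration, Claim~\ref{cl:solve-siv-iff-max-card-valid} to convert a maximal-cardinality \OmValid~set into an optimal SIV solution, and Claim~\ref{cl:omvalid-correct-index} for the oracle requirement. The extra details you supply (nonemptiness of $supp\left(\lambda\right)\setminus\Omega$, the appeal to Corollary~\ref{cor:maximal-valid-equal-max-sparse}, and the remark about the returned variable) are fine refinements of the same argument.
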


\subsection{Implementation of our first optimal algorithm. }\label{subsec:nathan-implementation}
In order for Algorithm~\ref{SIV1Alg} to perform well, we have added a blacklist to the algorithm's operation. Since the maximal \OmValid ~sets are generated by computing the extension (Definition~\ref{def:max-extension}) of $n-1$ independent columns, once a given \OmValid ~set is found, we wish to blacklist all of its subsets of size $n-1$ since we need not revisit that extension. However, in addition to memory costs, looking up an element in the blacklist incurs a significant overhead as the blacklist grows. To address this problem, rather than storing all subsets $\mathcal{C}_{n-1}\left(S\right)$ of a given set $S$, we store $S$ itself in the blacklist, in which case $C$ is not blacklisted if $\forall B\in blacklist$ $C\not\subset B$. Despite this measure, in some cases the blacklist still grew too large, so we and imposed a limit on the maximum size of the blacklist, storing only the $M$ largest sets found so far.

\subsection{Second algorithm for SIV. }

While our first algorithm performs well in many cases, we have found that it performs poorly when the largest \OmValid ~set is very large. In such cases the algorithm quickly finds the correct solution, but then continues its exhaustive search for a very long time. Our second algorithm is slightly simpler and avoids this inefficiency by using a top-down approach, searching for \OmValid ~sets in descending order of cardinality to find an \OmValid ~set of maximal cardinality. Just like our first algorithm, it relies on the observation of Claim~\ref{cl:solve-siv-iff-max-card-valid}, which ties any solution of SIV (maximally sparse, \OmIndep ~vector) to an \OmValid ~set of maximal cardinality. 

\begin{algorithm}
\caption{\label{SIV2Alg}Sparsest Independent Vector (2)}
\begin{algorithmic}[1]
\Procedure{$SIV\left(\it{U}, \it{\Omega}\right)$}{}
\For{$z = m - 1,...,n - 1$}
    \For{$C \in \mathcal{C}_{z}\left(\left[m\right]\right)$}
    	\If{$rank\left(U_{:,C}\right)=n-1$ and $C$ is \OmValid }
			\State $\lambda \gets$ \OmValidator ~of $C$
		    \State $v \gets \lambda^{T} U$
	        \State $i \gets $ any element of $supp\left(\lambda\right) \setminus \Omega$
            \State \Return $\left(v,\,i\right)$
    \EndIf   
    \EndFor
\EndFor
\EndProcedure
\end{algorithmic}
\end{algorithm}

To prove the correctness of our Algorithm~\ref{SIV2Alg}, we use the following lemma, which provides bounds on the size of a maximal \OmValid ~set.

\begin{lemma}\label{lem:size-of-maximal-cardinality-valid}
Let $S\subset\left[m\right]$ be a maximal \OmValid ~set, then $n-1\leq\left\lvert S \right\rvert\leq m-1$.
\begin{proof}
First, we show that $\left\lvert S \right\rvert < m$. Assume, by contradiction, that $\left\lvert S \right\rvert = m$ and let $\lambda\in\mathbb{F}^{n}$ be an \OmValidator ~of $S$. Then $\lambda^{T}U=0$, which means that $\sum_{i\in\left[n\right]} \lambda_{i} U_{i,:} = 0$, in contradiction to $U$ having full row rank $n$. Hence, $\left\lvert S \right\rvert \leq m-1$.

Next, by Claim~\ref{claim:rank-of-maximal-cardinality-valid-set}, since $S$ is a maximal \OmValid ~set, its rank is$n-1$, therefore, $n-1 \leq \left\lvert S \right\rvert$. Hence $n-1 \leq \left\lvert S \right\rvert \leq m-1$.
\end{proof}
\end{lemma}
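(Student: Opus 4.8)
The plan is to establish the two bounds separately; both are short consequences of material already in hand.

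\textbf{The upper bound $|S| \le m-1$.} I would argue by contradiction. Suppose $|S| = m$, that is $S = [m]$, and let $\lambda \in \mathbb{F}^{n}$ be an \OmValidator ~of $S$. Since $supp(\lambda) \not\subset \Omega$ and the empty set is contained in $\Omega$, the vector $\lambda$ must be nonzero. But then $\lambda^{T} U_{:,S} = \lambda^{T} U = 0$ with $\lambda \neq 0$ exhibits a nontrivial linear dependence among the rows of $U$, contradicting the standing assumption of this section that $U$ has full row rank $n$. Hence $|S| \le m-1$.

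\textbf{The lower bound $n-1 \le |S|$.} Here I would simply invoke Claim~\ref{claim:rank-of-maximal-cardinality-valid-set}: any maximal \OmValid ~set $S$ satisfies $rank(U_{:,S}) = n-1$. Since the rank of a matrix is at most its number of columns, and $U_{:,S}$ has exactly $|S|$ columns, we get $|S| \ge rank(U_{:,S}) = n-1$. Combining this with the previous paragraph yields $n-1 \le |S| \le m-1$, as claimed.

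\textbf{Expected difficulty.} There is essentially no obstacle: the substantive content — that a maximal \OmValid ~set has rank exactly $n-1$ — has already been proved in Claim~\ref{claim:rank-of-maximal-cardinality-valid-set}, and is the only nontrivial ingredient. The single point worth stating explicitly is that an \OmValidator ~$\lambda$ is automatically nonzero, which is precisely what the defining condition $supp(\lambda) \not\subset \Omega$ guarantees; without it the upper-bound argument would fail. Everything else is elementary linear algebra, so the proof will be just a few lines.
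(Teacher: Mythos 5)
Your proposal is correct and follows essentially the same route as the paper: the upper bound via a nonzero \OmValidator ~$\lambda$ with $\lambda^{T}U=0$ contradicting full row rank, and the lower bound via Claim~\ref{claim:rank-of-maximal-cardinality-valid-set} together with the fact that rank is at most the number of columns. Your explicit remark that $supp\left(\lambda\right)\not\subset\Omega$ forces $\lambda\neq 0$ is a small but welcome clarification that the paper leaves implicit.
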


\begin{theorem}
Algorithm~\ref{SIV1Alg} produces an optimal solution to SIV, and is an oracle for Algorithm~\ref{MSAlg}.
\end{theorem}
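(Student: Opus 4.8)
The plan is to verify the two clauses of the statement separately: first, that the vector returned by Algorithm~\ref{SIV1Alg} is an optimally sparse \OmIndep{} vector (hence an optimal SIV solution), and second, that the returned pair $(v,i)$ meets the interface demanded by Algorithm~\ref{MSAlg}, namely $i\in\left[n\right]\setminus\Omega$ and that replacing row $i$ of $U$ by $v$ leaves $span\left(rows\left(U\right)\right)$ unchanged.

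For the first clause I would start from Lemma~\ref{lem:SIV1Alg-iterates-all-maximal}: the loop over $C\in\mathcal{C}_{n-1}\left(\left[m\right]\right)$ skips every $C$ that is not \OmValid{} or has $rank\left(U_{:,C}\right)<n-1$, and for each surviving $C$ it forms, via Corollary~\ref{cor:compute-max-valid}, the extension $E\left(C\right)=\left\{i:\left(\lambda^{T}U\right)_{i}=0\right\}$ in lines 11--13; by Lemma~\ref{lem:valid-extension} this $E\left(C\right)$ is again \OmValid{}, and by Lemma~\ref{lem:extension-is-maximal} it is the unique maximal \OmValid{} set containing $C$, so combined with Corollary~\ref{cor:generating-set-size} the loop ranges over \emph{all} maximal \OmValid{} sets. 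Lines 14--17 retain the candidate whose zero set $E=\left\{i:v_{i}=0\right\}$ has the largest size seen, so on termination $E$ is a maximal \OmValid{} set of maximal cardinality; Corollary~\ref{cor:maximal-valid-equal-max-sparse} then says the retained $v=\lambda^{T}U$ is \OmIndep{} with exactly $\left\lvert E\right\rvert$ zeros, and Claim~\ref{cl:solve-siv-iff-max-card-valid} converts ``zero set is a maximal-cardinality \OmValid{} set'' into ``$v$ is an optimally sparse \OmIndep{} vector''. I would also add one sentence checking the loop is never vacuous: since $\Omega\subsetneq\left[n\right]$ and $U$ has full row rank, for any $i_{0}\notin\Omega$ any $n-1$ columns $C$ making $U_{\left[n\right]\setminus\left\{i_{0}\right\},C}$ nonsingular form an \OmValid{} set with $rank\left(U_{:,C}\right)=n-1$, so a valid candidate is always produced.

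For the second clause, in the retained iteration $\lambda$ is an \OmValidator{} of $C$, equivalently of $E\left(C\right)$ (since $\lambda^{T}U_{:,E\left(C\right)}=0$ by Lemma~\ref{lem:compute-valid}), hence $supp\left(\lambda\right)\not\subset\Omega$ and line 17's choice ``$i\gets$ any element of $supp\left(\lambda\right)\setminus\Omega$'' is well defined with $i\notin\Omega$. Claim~\ref{cl:omvalid-correct-index} then yields $span\left(rows\left(U\right)\right)=span\left(rows\left(U_{\left[n\right]\setminus\left\{i\right\},:}\right)\cup\left\{v\right\}\right)$ for this $v$ and $i$, so $(v,i)$ is a legal return value of the SIV subroutine; by Gottlieb and Neylon's analysis of Algorithm~\ref{MSAlg}~\cite{gottlieb2010matrix}, plugging it in gives an exact solver for MS, i.e., Algorithm~\ref{SIV1Alg} is an oracle for Algorithm~\ref{MSAlg}.

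I expect the main obstacle to be the optimality claim in the first clause: one must be sure the exhaustive enumeration attains the \emph{global} maximum over all maximal \OmValid{} sets, not merely an extension-local maximum. That is exactly what the combination of Corollary~\ref{cor:generating-set-size} (every maximal \OmValid{} set equals $E\left(C\right)$ for some $C\in\mathcal{C}_{n-1}\left(\left[m\right]\right)$ with $rank\left(U_{:,C}\right)=n-1$) and Corollary~\ref{cor:compute-max-valid} (which recovers $E\left(C\right)$ correctly from an \OmValidator{} of $C$) provides, and it is where the structural results of Sections~\ref{subsec:omvalid-indep} and~\ref{subsec:computing-omvalid} do the real work; the arithmetic-complexity bookkeeping of Remark~\ref{rem:nnz} plays no role in correctness. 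A minor caveat I would flag in passing is that the pseudocode should return the stored best candidate (\texttt{sparsest}) together with its recorded index, which the argument above tacitly assumes.
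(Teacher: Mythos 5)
Your proof is correct and takes essentially the same route as the paper's: Lemma~\ref{lem:SIV1Alg-iterates-all-maximal} for coverage of all maximal \OmValid~sets, the cardinality bookkeeping of lines 14--17, Claim~\ref{cl:solve-siv-iff-max-card-valid} for optimality, and Claim~\ref{cl:omvalid-correct-index} for the oracle interface. Your added non-vacuity check and the remark that the code should return the stored candidate \texttt{sparsest} rather than the loop variable are minor refinements the paper's proof leaves implicit.
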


\begin{proof} 
Claim~\ref{cl:solve-siv-iff-max-card-valid} states that $v\in \mathbb{F}^{m}$ is a solution to SIV (an optimally sparse, \OmIndep ~vector) if and only if $S=\left\lbrace i\,:\, v_{i}=0\right\rbrace$ is \OmValid . The algorithm iterates all subsets of $\left[m\right]$ in descending order of cardinality. Therefore, the first \OmValid ~set found is an \OmValid ~set of maximal cardinality. Furthermore, Lemma~\ref{lem:size-of-maximal-cardinality-valid} states that any maximal \OmValid ~set is of size $n-1\leq z \leq m-1$, hence, the algorithm iterates all candidates $S\subset \left[m\right]$ that could be \OmValid ~sets of maximal cardinality. Therefore, Algorithm~\ref{SIV2Alg} returns a sparsest \OmIndep ~vector. Finally, by Claim~\ref{cl:omvalid-correct-index}, the pair $\left(v,i\right)$ serves as the oracle for SIV required by Algorithm~\ref{MSAlg}.
\end{proof}
\section{Additional Sparsification Methods}\label{sec:additional-methos}

\subsection{Sparsification via subset of rows.}\label{subsec:ks-heuristic}
The alternative bases presented in Karstadt and Schwartz's~\cite{karstadt2017matrix, karstadt2020matrix} paper were found using a straightforward heuristic of iterating over all sets of $n$ linearly independent rows of an $n\times m$ matrix of full rank (where $n\geq m$). This heuristic was based on the observation that using the columns of the original matrix for sparsification ensures that the sparsified matrix contains $n$ rows, each with only a single non-zero entry.

\begin{algorithm}[h]
\caption{\label{KSSparseAlg}Row basis sparsification~\cite{karstadt2017matrix, karstadt2020matrix}}
\begin{algorithmic}[1]
\Procedure{KS-Sparsification(\it{U})}{}
\State sparsity $\gets nnz\left(U\right)$
\State basis $\gets I_{n}$
\For{$C \in \mathcal{C}_{m}\left(n\right)$}
    \If{$U_{:,\,C}$ is of full rank}
        \State sparsifier $\gets U_{:,\,C}^{-1}$
        \If{$\texttt{nnz}\left(sparsifier \cdot U \right) < sparsity$}
            \State sparsity $\gets \texttt{nnz}\left(sparsifier\cdot U  \right)$
            \State basis $\gets$ sparsifier
        \EndIf
    \EndIf
\EndFor
\State \Return basis
\EndProcedure
\end{algorithmic}
\end{algorithm}

While this method is inefficient, requiring $\left(\begin{smallmatrix} m \\ n \end{smallmatrix}\right)$ passes, it finds sparsifications which significantly improve the leading coefficients of multiple algorithms. The refinement of this method led to the development of Algorithm~\ref{SIV1Alg}. It is therefore presented here for completeness.

\subsection{Greedy sparsification.}\label{subsec:gal-heuristic}
A second heuristic for matrix sparsification, inspired by Gottlieb and Neylon's algorithm (Algorithm~\ref{MSAlg}), employs an even simpler greedy approach.

Recall that for a given $n \times m$ matrix $U$ ($n \leq m$), we seek an $n \times n$ matrix $A$ which minimizes $\nnz{AU} + \nns{AU}$. For this purpose, rather than searching for the entire invertible matrix $A$ achieving this objective, we could instead search for each row of $A$ individually. Concretely, we iteratively compose the matrix $A$ row-wise; where at each step $i$, we obtain the sparsest row vector $v_i$ such that $v_{i}$ is independent of $\{v_1,\dots,v_{i-1}\}$ and minimizes $\nnz{vU} + \nns{vU}$. This yields the following algorithm:

\begin{algorithm}[h]
	\caption{\label{BSparseAlg} Greedy Sparsification}
	\begin{algorithmic}[1]
		\Procedure{$Greedy-Sparsification(U)$}{}
		\State $A \gets \emptyset$
		\For{$i=1,\ldots,\,n$}
		\State {\small $v \gets \underset{\substack{v \in \mathbb{F}^m\\ rk(\{v_1,\dots,v_{i-1},v\}) = i}}{\texttt{argmin}}\left(\texttt{nnz}\left(v^{T}U\right) + \texttt{nns}\left(v^{T}U\right)\right)$}
		\State $A_{i,:} \gets v^{T}$
		\EndFor
		\State \Return $A$
		\EndProcedure
	\end{algorithmic}
\end{algorithm}

In order to implement the subroutine for finding each row vector $v_i$, we encoded the objective as a MaxSAT instance and used Z3~\cite{de2008z3}, an SMT Theorem Prover, to find the optimal solution. Our MaxSAT instance employs two types of ``soft'' constraints: one which penalizes non-zero entries, and another which penalizes non-singleton entries. Therefore, optimal solutions will minimize the sum of non-zero and non-singleton entries, thereby minimizing the associated arithmetic complexity (Remark~\ref{rem:nnz}).


This algorithm, while not proven to be optimal, has the advantage of considering both non-zeros and non-singletons, and can therefore produce decompositions resulting in a lower arithmetic complexity than the optimal algorithms (Algorithms~\ref{SIV1Alg}, \ref{SIV2Alg}). For a summary of these results, see Table~\ref{tab:decomposedalgs}.
\begin{table*}[ht]
	\centering
	\caption {Alternative Basis Algorithms} \label{tab:decomposedalgs}
	\small{
	\begin{tabular}{ |l||c||c|c||c|c||c| }
		\hline
		\multirow{3}{*}{\ \ \ Algorithm} &
		\multirow{3}{*}{Leading Monomial} &
		\multicolumn{2}{c||}{Arithmetic Operations} &
		\multicolumn{2}{c||}{Leading Coefficient} &
		\multirow{3}{*}{Improvement}\\
		\cline{3-6}
		& & Original & Here & Original & Here & \\
		\hline
		&&&&&&\\[-1.05em]
		$\langle 2,2,2;7 \rangle$~\cite{strassen1969gaussian} &$n^{\log_{2} 7} \approx n^{2.80735}$& 18 & 12 & 7 & 5 & 28.57\%\\
		\cmidrule[1.1pt]{1-7}
		&&&&&&\\[-1.05em]
		$\langle 3,2,2;11 \rangle$~\cite{benson2015framework} & $n^{\log_{12} 11^{3}} \approx n^{2.89495}$ & 22 & 18 & 5.06 & 4.26 & 15.82\%\\
		\hline
		&&&&&&\\[-1.05em]
		$\langle 2,3,2;11 \rangle$~\cite{madan2015matrix} & $n^{\log_{12} 11^{3}} \approx n^{2.89495}$ & 22 & 18 & 4.71 & 3.91 & 16.97\%\\
		\cmidrule[1.1pt]{1-7}
		&&&&&&\\[-1.05em]
		$\langle 4,2,2;14 \rangle$~\cite{benson2015framework} & $n^{\log_{16} 14^{3}} \approx n^{2.85551}$ & 48 & 28 & 8.33 & 5.27 & 36.8\%\\
		\cmidrule[1.1pt]{1-7}
		&&&&&&\\[-1.05em]
		$\langle 3,2,3;15 \rangle$~\cite{hopcroft1971minimizing} &$n^{\log_{18} 15^{3}} \approx n^{2.81076}$& 55 & 39 & 8.28 & 6.17 & 25.5\%\\
		\hline
		&&&&&&\\[-1.05em]
		$\langle 3,2,3;15 \rangle$~\cite{benson2015framework} &$n^{\log_{18} 15^{3}} \approx n^{2.81076}$& 64 & 39 & 9.61 & 6.17 & 35.84\%\\
		\cmidrule[1.1pt]{1-7}
		&&&&&&\\[-1.05em]
		$\langle 5,2,2;18 \rangle$~\cite{benson2015framework} & $n^{\log_{20} 18^{3}} \approx n^{2.89449}$ & 53 & 32 & 6.98 & 4.46 & 36.06\%\\
		\cmidrule[1.1pt]{1-7}
		&&&&&&\\[-1.05em]
		$\langle 4,2,3;20 \rangle$~\cite{smirnov2013bilinear} &$n^{\log_{24} 20^{3}} \approx n^{2.82789}$& 78 & 51 & 8.9 & 5.88 & 33.96\%\\
		\hline
		&&&&&&\\[-1.05em]
		$\langle 4,2,3;20 \rangle$~\cite{benson2015framework} &$n^{\log_{24} 20^{3}} \approx n^{2.82789}$& 82 & 51 & 9.19 & 5.88 & 36.01\%\\
		\hline
		&&&&&&\\[-1.05em]
		$\langle 4,2,3;20 \rangle$~\cite{benson2015framework} &$n^{\log_{24} 20^{3}} \approx n^{2.82789}$& 86 & 54 & 9.38 & 6.12 & 34.77\%\\
		\hline
		&&&&&&\\[-1.05em]
		$\langle 4,2,3;20 \rangle$~\cite{benson2015framework} &$n^{\log_{24} 20^{3}} \approx n^{2.82789}$& 104 & 56 & 11.38 & 6.38 & 43.9\%\\
		\hline
		&&&&&&\\[-1.05em]
		$\langle 2,3,4;20 \rangle$~\cite{benson2015framework} &$n^{\log_{24} 20^{3}} \approx n^{2.82789}$& 96 & 58 & 9.96 & 6.12 & 38.59\%\\
		\cmidrule[1.1pt]{1-7}
		&&&&&&\\[-1.05em]
		$\langle 3,3,3;23 \rangle$~\cite{benson2015framework} & $n^{\log_{3} 23} \approx n^{2.85404}$ & 87 & 66 & 7.21 & 5.71 & 20.79\%\\
		\hline
		&&&&&&\\[-1.05em]
		$\langle 3,3,3;23 \rangle$~\cite{benson2015framework} & $n^{\log_{3} 23} \approx n^{2.85404}$ & 88 & 65 & 7.29 & 5.64 & 22.55\%\\
		\hline
		&&&&&&\\[-1.05em]
		$\langle 3,3,3;23 \rangle$~\cite{benson2015framework} & $n^{\log_{3} 23} \approx n^{2.85404}$ & 89 & 65 & 7.36 & 5.64 & 23.3\%\\
		\hline
		&&&&&&\\[-1.05em]
		$\langle 3,3,3;23 \rangle$~\cite{benson2015framework} &$n^{\log_{3} 23} \approx n^{2.85404}$& 97 & 61 & 7.93 & 5.36 & 32.43\%\\
		\hline
		&&&&&&\\[-1.05em]
		$\langle 3,3,3;23 \rangle$~\cite{benson2015framework} &$n^{\log_{3} 23} \approx n^{2.85404}$& 166 & 73 & 12.86 & 6.21 & 51.67\%\\
		\hline
		&&&&&&\\[-1.05em]
		$\langle 3,3,3;23 \rangle$~\cite{laderman1976noncommutative} &$n^{\log_{3} 23} \approx n^{2.85404}$& 98 & 74 & 8 & 6.29 & 21.43\%\\
		\hline
		&&&&&&\\[-1.05em]
		$\langle 3,3,3;23 \rangle$~\cite{smirnov2013bilinear} &$n^{\log_{3} 23} \approx n^{2.85404}$& 84 & 68 & 7 & 5.86 & 16.33\%\\
		\cmidrule[1.1pt]{1-7}
		&&&&&&\\[-1.05em]
		$\langle 4,4,2;26 \rangle$~\cite{benson2015framework} &$n^{\log_{32} 26^{3}} \approx n^{2.82026}$& 235 & 105 ($\star$) & 18.1 & 7.81 & 56.84\%\\
		\cmidrule[1.1pt]{1-7}
		&&&&&&\\[-1.05em]
		$\langle 4,3,3;29 \rangle$~\cite{benson2015framework} &$n^{\log_{36} 29^{3}} \approx n^{2.81898}$& 164 & 102 & 10.27 & 6.73 & 34.49\%\\
		\hline
		&&&&&&\\[-1.05em]
		$\langle 3,4,3;29 \rangle$~\cite{smirnov2017several} &$n^{\log_{36} 29^{3}} \approx n^{2.81898}$& 137 & 109 & 8.54 & 6.96 & 18.46\%\\
		\hline
		&&&&&&\\[-1.05em]
		$\langle 3,4,3;29 \rangle$~\cite{benson2015framework} &$n^{\log_{36} 29^{3}} \approx n^{2.81898}$& 167 & 105 & 10.27 & 6.73 & 34.49\%\\
		\cmidrule[1.1pt]{1-7}
		&&&&&&\\[-1.05em]
		$\langle 3,5,3;36 \rangle$~\cite{smirnov2017several} &$n^{\log_{45} 36^{3}} \approx n^{2.82414}$& 199 & 139 & 9.62 & 6.87 & 28.6\%\\
		\cmidrule[1.1pt]{1-7}
		&&&&&&\\[-1.05em]
		$\langle 6,3,3;40 \rangle$~\cite{smirnov2013bilinear} &$n^{\log_{54} 40^{3}} \approx n^{2.77429}$& 1246 & 190 ($\star$) & 55.63 & 8.9 & 84.01\%\\
		\hline
		&&&&&&\\[-1.05em]
		$\langle 3,3,6;40 \rangle$~\cite{tichavsky2017numerical} &$n^{\log_{54} 40^{3}} \approx n^{2.77429}$& 1822 & 190 ($\star$) & 79.28 & 8.9 & 88.78\%\\
		\hline
	\end{tabular}
	\caption*{\footnotesize \textmd{($\star$) Denotes algorithms with non-singular values, where the result of Algorithm~\ref{BSparseAlg} was better than those of the exhaustive algorithms.}}
	}
\end{table*}

\section{Application and resulting algorithms}\label{sec:applications-results}

Table~\ref{tab:decomposedalgs} contains a list of alternative basis algorithms found using our new methods. All of the algorithms used were taken from the repository of Ballard and Benson~\cite{benson2015framework}\footnote{The algorithms can be found at github.com/arbenson/fast-matmul}. The alternative basis algorithms obtained represent a significant improvement over the original versions, with the reduction in the leading coefficient ranging between 15\% and 88\%. Almost all of the results were found using our exhaustive methods (Algorithms~\ref{SIV1Alg} and \ref{SIV2Alg}). In certain cases (marked \textmd{($\star$)}), where the $U,\,V,\,W$ matrices contain non-singular values, our search heuristic's (Algorithm~\ref{BSparseAlg}) result exceeded those of our exhaustive algorithms. For example, bases obtained for the $\strass{4}{4}{2}{26}$-algorithm by Algorithms~\ref{SIV1Alg} and \ref{SIV2Alg} reduced the number of arithmetic operations from 235 to 110, while Algorithm~\ref{BSparseAlg} reduced the number of arithmetic operations even further, to 105.

\paragraph{Comparison of different search methods.}
The exhaustive algorithms (Algorithms~\ref{SIV1Alg}, \ref{SIV2Alg}) solve the SIV problem. Their proof of correctness, coupled with that of Gottlieb and Neylon's algorithm, guarantee that they obtain decompositions minimizing the number of non-zero entries. As MS and SIV are both NP-Hard problems, these algorithms exhibit an exponential worst-case complexity. For this reason, the decomposition of some of the larger instances required the use of Mira supercomputer.
However after some tuning of Algorithms 3 and 4 (see Section~\ref{subsec:nathan-implementation}) and the implementation of Algorithm~\ref{BSparseAlg} using Z3, all decompositions completed on a PC within a reasonable time. Specifically, all runs of Algorithms 3 and 4 completed within 40 minutes, while Algorithm 6 took less than one minute, on a PC\footnote{Matebook X (i7-7500U CPU and 8GB RAM)}.
It should be remembered that Algorithms 3 and 4 guarantee optimal sparsification, while Algorithm 6 has no such guarantee. However, in all cases, Algorithm 6 ran much faster and produced an equally good decomposition, with better results when there were non-singular values.

\section{Discussion and Future Work}\label{sec:discussion}
We have improved the leading coefficient of several fast matrix multiplication algorithms by introducing new methods to solve to sparsify the encoding/decoding matrices of fast matrix multiplication algorithms. 
The number of arithmetic operations depends on both both non-zero and non-singular entries. This means that in order to minimize the arithmetic complexity, the sum of both non-zero \textit{and} non-singular entries should be minimized, otherwise an optimal sparsification may result in a 2-approximation of the minimal number of arithmetic operations when matrix entries are not limited to $0,\,\pm 1$. Further work is required in order to find a provably optimal algorithm which minimizes both non-zero and non-singleton values.

We attempted sparsification of additional algorithms for larger dimensions (e.g., Pan's $\squarestrass{44}{36133}$-algorithm~\cite{pan1982trilinear}, which is asymptotically faster than those presented here). However, the
size of the base case of these algorithms led to prohibitively long
runtimes.

\balance

The methods presented in this paper apply to finding square invertible matrices solving the MS problem. Other classes of sparse decompositions exist which do not fall within this category. For example, Beniamini and Schwartz's~\cite{beniamini2019faster} decomposed recursive-bilinear framework relies upon decompositions in which the sparsifying matrix may be rectangular, rather than square. Some of the leading coefficients in~\cite{beniamini2019faster} are better than those presented here. For example, they obtained a leading coeffcient of 2 for a $\strass{3}{3}{3}{23}$-algorithm of~\cite{benson2015framework} a $\strass{4}{3}{3}{29}$-algorithm of~\cite{smirnov2017several}, compared to our values $5.36$ and $6.96$ respectively. However, the arithmetic overhead of basis transformation in Karstadt and Schwartz \cite{karstadt2017matrix, karstadt2020matrix} (and therefore here as well) is $O\left(n^{2} \log n\right)$, whereas
in~\cite{beniamini2019faster} it may be larger. Note also that the decomposition heuristic of~\cite{beniamini2019faster} does not always guarantee optimality. Further work is required to find new decomposition methods for such settings.

\begin{anonsuppress}
\section{Acknowledgements}
We thank Austin R. Benson for providing details regarding the $\strass{2}{3}{2}{11}$-algorithm. This research used resources of the Argonne Leadership Computing Facility, which is a DOE Office of Science User Facility supported under Contract DE-AC02-06CH11357. This work was supported by the PetaCloud industry-academia consortium. This research was supported by a grant from the United States-Israel Bi-national Science Foundation, Jerusalem, Israel. This work was supported by the HUJI Cyber Security Research Center in conjunction with the Israel National Cyber Bureau in the Prime Minister's Office. This project has received funding from the European Research Council (ERC) under the European Union's Horizon 2020 research and innovation programme (grant agreement No 818252).
\end{anonsuppress}

\bibliographystyle{ACM-Reference-Format}

\bibliography{main}
\newpage
\clearpage
\appendix
\section{Samples of Alternative Basis Algorithms}\label{sec:results-appendix}
In this section we present the encoding/decoding matrices of the alternative basis algorithms listed in Table~\ref{tab:decomposedalgs}. To verify the correctness of these algorithms, recall Corollary~\ref{cor:alt-to-mul} and use the following fact:

\begin{fact} (Triple product condition).
\cite{brent1970algorithms, knuth1981art}\label{fact:Triple-product-condition} Let $R$ be a ring, and let $U\in R^{t\times n\cdot m},\,V\in R^{t\times m\cdot k},\,W\in R^{t\times n\cdot k}$. Then $\encdec{U}{V}{W} $ are encoding/decoding matrices of an $\strass{n}{m}{k}{t}$-algorithm if and only if:
\begin{align*}
\forall i_{1},k_{1}\in\left[n\right],\,j_{1},i_{2}\in\left[m\right],\,j_{2},k_{2}\in\left[k\right]\: \\
\sum_{r=1}^{t}U_{r,\left(i_{1},i_{2}\right)}V_{r,\left(j_{1},j_{2}\right)}W_{r,\left(k_{1},k_{2}\right)}=\delta_{i_{1},k_{1}}\delta_{i_{2},j_{1}}\delta_{j_{2},k_{2}}
\end{align*}
where $\delta_{i,j}=1$ if $i=j$ and $0$ otherwise.

\end{fact}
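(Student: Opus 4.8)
The plan is to unwind Definition~\ref{def:encdec} together with Fact~\ref{fact:encdecmats} and reduce the statement to the elementary fact that a bilinear form is determined by its values on pairs of standard basis vectors. By those results, saying that $\encdec{U}{V}{W}$ are encoding/decoding matrices of an $\strass{n}{m}{k}{t}$-algorithm means exactly that its base-case bilinear map computes matrix multiplication of $n\times m$ by $m\times k$ matrices; that is, identifying $A\in R^{n\times m}$ and $B\in R^{m\times k}$ with their vectorizations (so columns of $U$ are indexed by $(i_1,i_2)\in[n]\times[m]$, columns of $V$ by $(j_1,j_2)\in[m]\times[k]$, columns of $W$ by $(k_1,k_2)\in[n]\times[k]$),
\[
W^{T}\left((U\cdot A)\odot(V\cdot B)\right)=\mathrm{vec}(A\cdot B)\qquad\text{for all }A,B.
\]
First I would fix this indexing convention once and for all, and then write out coordinate $(k_1,k_2)$ of each side as an expression in the ring entries $A_{i_1,i_2}$ and $B_{j_1,j_2}$.

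For the left-hand side, coordinate $(k_1,k_2)$ is $\sum_{r=1}^{t}W_{r,(k_1,k_2)}\left(\sum_{i_1,i_2}U_{r,(i_1,i_2)}A_{i_1,i_2}\right)\left(\sum_{j_1,j_2}V_{r,(j_1,j_2)}B_{j_1,j_2}\right)$; expanding and collecting the coefficient of the monomial $A_{i_1,i_2}B_{j_1,j_2}$ rewrites it as $\sum_{i_1,i_2,j_1,j_2}\left(\sum_{r=1}^{t}U_{r,(i_1,i_2)}V_{r,(j_1,j_2)}W_{r,(k_1,k_2)}\right)A_{i_1,i_2}B_{j_1,j_2}$. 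For the right-hand side, coordinate $(k_1,k_2)$ is $(A\cdot B)_{k_1,k_2}=\sum_{l\in[m]}A_{k_1,l}B_{l,k_2}=\sum_{i_1,i_2,j_1,j_2}\delta_{i_1,k_1}\delta_{i_2,j_1}\delta_{j_2,k_2}\,A_{i_1,i_2}B_{j_1,j_2}$, since a term $A_{i_1,i_2}B_{j_1,j_2}$ occurs (with coefficient $1$) exactly when $(i_1,i_2)=(k_1,l)$ and $(j_1,j_2)=(l,k_2)$ for some $l$, i.e.\ when $i_1=k_1$, $i_2=j_1$, and $j_2=k_2$. So both sides are multilinear expressions in the independent entries $\{A_\alpha\}$, $\{B_\beta\}$, with coefficient of $A_{i_1,i_2}B_{j_1,j_2}$ equal to $\sum_{r}U_{r,(i_1,i_2)}V_{r,(j_1,j_2)}W_{r,(k_1,k_2)}$ and to $\delta_{i_1,k_1}\delta_{i_2,j_1}\delta_{j_2,k_2}$, respectively.

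It remains to deduce the two implications. For the ``if'' direction, assume the triple product condition: then for every $A,B$ and every $(k_1,k_2)$ the two sums above agree summand by summand, so the algorithm's output equals $A\cdot B$, and $\encdec{U}{V}{W}$ is a correct $\strass{n}{m}{k}{t}$ base case. For the ``only if'' direction, fix admissible indices $i_1,k_1\in[n]$, $j_1,i_2\in[m]$, $j_2,k_2\in[k]$ and substitute into the identity the standard basis matrices $A=E_{i_1,i_2}$ and $B=E_{j_1,j_2}$ (a $1$ in the indicated position, $0$ elsewhere): every monomial on both sides vanishes except $A_{i_1,i_2}B_{j_1,j_2}=1$, so the $(k_1,k_2)$ coordinate reads $\sum_{r}U_{r,(i_1,i_2)}V_{r,(j_1,j_2)}W_{r,(k_1,k_2)}=\delta_{i_1,k_1}\delta_{i_2,j_1}\delta_{j_2,k_2}$, which is the required equation; ranging over all index choices gives the full system.

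The mathematical content is routine; the two points to handle with care are the vectorization/indexing bookkeeping — consistently matching the column labels of $U,V,W$ with the entries of $A$, $B$, and $A\cdot B$, which I would settle at the outset — and the commutation used when regrouping $\sum_r U_{r,\alpha}A_\alpha V_{r,\beta}B_\beta$ into $\left(\sum_r U_{r,\alpha}V_{r,\beta}W_{r,\gamma}\right)A_\alpha B_\beta$. Over a noncommutative $R$ this requires the scalar coefficients of $U$ and $V$ to commute past the ring entries $A_\alpha,B_\beta$; for the algorithms of interest these coefficients lie in a central subring (indeed in $\{0,\pm1\}$ up to the needed generality), so the step is valid. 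Equivalently, one treats the displayed identity as an identity of polynomials in commuting indeterminates $A_\alpha,B_\beta$ with coefficients in $R$, where the final coefficient comparison is licensed by the basis substitution $A=E_\alpha$, $B=E_\beta$, legitimate over any unital ring.
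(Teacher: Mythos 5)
The paper states this fact only with citations to Brent and Knuth and gives no proof of its own, so there is nothing to diverge from; your argument is the standard one and is correct. Expanding coordinate $(k_1,k_2)$ of $W^{T}\left(\left(U\cdot x\right)\odot\left(V\cdot y\right)\right)$ and of $\mathrm{vec}(A\cdot B)$, matching coefficients of $A_{i_1,i_2}B_{j_1,j_2}$, and extracting the "only if" direction by substituting elementary matrices is exactly the classical derivation of the Brent equations, and your remarks on the indexing of the vectorizations and on commuting the scalar coefficients past the ring entries address the only genuinely delicate points.
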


\newcommand\VRule[1][\arrayrulewidth]{\vrule width #1}

\subsection{A sample of Algorithms}
\begin{center}
\captionof{table}{$\strass{3}{2}{3}{15}$-algorithm~\cite{benson2015framework}}
\resizebox{\linewidth}{!}{
\begin{tabular}{!{\VRule[2pt]}*{6}c!{\VRule[2pt]}*{6}c!{\VRule[2pt]} *{9}c!{\VRule[2pt]}}
\multicolumn{6}{c}{$U_{\phi}$} & \multicolumn{6}{c}{$V_{\psi}$} & \multicolumn{9}{c}{$W_{\upsilon}$}\\
\specialrule{1pt}{0pt}{0pt}
$0$ & $-1$ & $0$ & $1$ & $0$ & $1$ & $1$ & $0$ & $0$ & $0$ & $0$ & $-1$ & $0$ & $0$ & $1$ & $0$ & $0$ & $0$ & $0$ & $0$ & $0$ \\
$0$ & $0$ & $0$ & $1$ & $0$ & $0$ & $0$ & $0$ & $0$ & $0$ & $0$ & $1$ & $0$ & $0$ & $0$ & $0$ & $0$ & $0$ & $0$ & $1$ & $0$ \\
$0$ & $0$ & $1$ & $1$ & $0$ & $0$ & $0$ & $0$ & $-1$ & $0$ & $1$ & $0$ & $-1$ & $0$ & $0$ & $0$ & $0$ & $0$ & $-1$ & $0$ & $0$ \\
$0$ & $0$ & $0$ & $0$ & $-1$ & $0$ & $0$ & $0$ & $0$ & $1$ & $0$ & $0$ & $0$ & $0$ & $0$ & $0$ & $0$ & $0$ & $0$ & $0$ & $-1$ \\
$0$ & $0$ & $0$ & $0$ & $0$ & $-1$ & $0$ & $0$ & $0$ & $0$ & $-1$ & $0$ & $0$ & $0$ & $0$ & $0$ & $0$ & $0$ & $-1$ & $0$ & $0$ \\
$1$ & $0$ & $0$ & $0$ & $-1$ & $0$ & $0$ & $0$ & $-1$ & $0$ & $0$ & $0$ & $0$ & $0$ & $0$ & $0$ & $0$ & $-1$ & $0$ & $0$ & $0$ \\
$0$ & $-1$ & $0$ & $0$ & $1$ & $0$ & $0$ & $-1$ & $0$ & $-1$ & $0$ & $1$ & $0$ & $0$ & $0$ & $1$ & $0$ & $0$ & $0$ & $0$ & $0$ \\
$0$ & $-1$ & $0$ & $0$ & $0$ & $0$ & $1$ & $0$ & $0$ & $-1$ & $0$ & $0$ & $0$ & $0$ & $0$ & $0$ & $0$ & $-1$ & $-1$ & $0$ & $-1$ \\
$1$ & $0$ & $0$ & $1$ & $0$ & $0$ & $0$ & $0$ & $1$ & $0$ & $-1$ & $1$ & $0$ & $0$ & $1$ & $0$ & $0$ & $0$ & $0$ & $-1$ & $0$ \\
$-1$ & $0$ & $0$ & $0$ & $1$ & $-1$ & $0$ & $-1$ & $0$ & $0$ & $1$ & $0$ & $0$ & $-1$ & $0$ & $0$ & $0$ & $0$ & $-1$ & $0$ & $0$ \\
$0$ & $0$ & $-1$ & $0$ & $0$ & $0$ & $1$ & $0$ & $0$ & $0$ & $0$ & $0$ & $0$ & $0$ & $0$ & $0$ & $-1$ & $0$ & $0$ & $0$ & $0$ \\
$1$ & $0$ & $0$ & $0$ & $0$ & $0$ & $0$ & $-1$ & $0$ & $0$ & $0$ & $1$ & $0$ & $0$ & $0$ & $0$ & $-1$ & $0$ & $0$ & $1$ & $-1$ \\
$0$ & $0$ & $-1$ & $-1$ & $0$ & $-1$ & $1$ & $0$ & $0$ & $-1$ & $-1$ & $0$ & $0$ & $-1$ & $0$ & $0$ & $0$ & $0$ & $0$ & $0$ & $0$ \\
$0$ & $0$ & $1$ & $0$ & $-1$ & $0$ & $0$ & $0$ & $-1$ & $-1$ & $0$ & $0$ & $0$ & $0$ & $0$ & $1$ & $0$ & $0$ & $0$ & $0$ & $-1$ \\
$0$ & $1$ & $0$ & $0$ & $0$ & $-1$ & $0$ & $-1$ & $0$ & $0$ & $0$ & $0$ & $-1$ & $0$ & $0$ & $0$ & $0$ & $0$ & $0$ & $1$ & $0$ \\
\specialrule{1pt}{0pt}{0pt}
\multicolumn{6}{c}{$\phi$} & \multicolumn{6}{c}{$\psi$} & \multicolumn{9}{c}{$\upsilon^{-T}$}\\
\specialrule{1pt}{0pt}{0pt}
$0$ & $0$ & $0$ & $0$ & $1$ & $-1$ & $0$ & $0$ & $1$ & $0$ & $0$ & $1$ & $0$ & $0$ & $0$ & $0$ & $1$ & $0$ & $0$ & $-1$ & $0$ \\
$0$ & $0$ & $0$ & $-1$ & $0$ & $0$ & $0$ & $0$ & $0$ & $0$ & $-1$ & $0$ & $-1$ & $0$ & $0$ & $0$ & $0$ & $0$ & $0$ & $0$ & $0$ \\
$0$ & $0$ & $0$ & $0$ & $0$ & $-1$ & $-1$ & $0$ & $0$ & $0$ & $0$ & $0$ & $0$ & $0$ & $-1$ & $0$ & $0$ & $0$ & $0$ & $0$ & $0$ \\
$-1$ & $0$ & $1$ & $0$ & $0$ & $1$ & $0$ & $0$ & $1$ & $-1$ & $0$ & $1$ & $0$ & $0$ & $-1$ & $0$ & $0$ & $-1$ & $0$ & $0$ & $0$ \\
$0$ & $0$ & $-1$ & $0$ & $1$ & $-1$ & $-1$ & $-1$ & $0$ & $0$ & $-1$ & $0$ & $0$ & $0$ & $-1$ & $0$ & $0$ & $0$ & $-1$ & $0$ & $-1$ \\
$1$ & $-1$ & $-1$ & $0$ & $0$ & $0$ & $0$ & $0$ & $1$ & $0$ & $-1$ & $0$ & $-1$ & $1$ & $0$ & $-1$ & $1$ & $0$ & $-1$ & $0$ & $0$ \\
  &   &   &   &   &   &   &   &   &   &   &   & $1$ & $-1$ & $0$ & $0$ & $-1$ & $0$ & $0$ & $0$ & $0$ \\
  &   &   &   &   &   &   &   &   &   &   &   & $0$ & $0$ & $-1$ & $0$ & $0$ & $0$ & $0$ & $-1$ & $0$ \\
  &   &   &   &   &   &   &   &   &   &   &   & $0$ & $0$ & $-1$ & $0$ & $0$ & $-1$ & $1$ & $0$ & $0$ \\
\specialrule{1pt}{0pt}{0pt}
\end{tabular}
}
\end{center}

\begin{center}
\captionof{table}{$\strass{4}{2}{3}{20}$-algorithm~\cite{smirnov2013bilinear}}
\resizebox{\linewidth}{!}{
\begin{tabular}{!{\VRule[2pt]}*{8}c!{\VRule[2pt]}*{6}c!{\VRule[2pt]} *{12}c!{\VRule[2pt]}}
\multicolumn{8}{c}{$U_{\phi}$} & \multicolumn{6}{c}{$V_{\psi}$} & \multicolumn{12}{c}{$W_{\upsilon}$}\\
\specialrule{1pt}{0pt}{0pt}
$0$ & $0$ & $-1$ & $0$ & $0$ & $1$ & $0$ & $0$ & $-1$ & $0$ & $0$ & $0$ & $0$ & $0$ & $0$ & $0$ & $0$ & $0$ & $0$ & $0$ & $0$ & $1$ & $0$ & $0$ & $-1$ & $0$ \\
$0$ & $0$ & $-1$ & $0$ & $0$ & $0$ & $0$ & $1$ & $0$ & $1$ & $0$ & $0$ & $1$ & $0$ & $0$ & $0$ & $0$ & $0$ & $0$ & $-1$ & $0$ & $0$ & $0$ & $0$ & $1$ & $0$ \\
$0$ & $0$ & $0$ & $0$ & $0$ & $0$ & $-1$ & $1$ & $0$ & $0$ & $0$ & $1$ & $-1$ & $0$ & $0$ & $0$ & $0$ & $0$ & $0$ & $0$ & $0$ & $0$ & $0$ & $1$ & $0$ & $0$ \\
$0$ & $0$ & $0$ & $0$ & $1$ & $0$ & $0$ & $0$ & $0$ & $1$ & $0$ & $0$ & $0$ & $1$ & $0$ & $0$ & $0$ & $0$ & $1$ & $0$ & $0$ & $0$ & $1$ & $0$ & $0$ & $-1$ \\
$0$ & $0$ & $0$ & $0$ & $0$ & $-1$ & $0$ & $0$ & $0$ & $0$ & $0$ & $0$ & $0$ & $1$ & $0$ & $0$ & $0$ & $0$ & $0$ & $0$ & $0$ & $0$ & $1$ & $0$ & $0$ & $0$ \\
$0$ & $-1$ & $0$ & $0$ & $0$ & $0$ & $0$ & $-1$ & $0$ & $-1$ & $0$ & $0$ & $0$ & $0$ & $-1$ & $0$ & $0$ & $0$ & $0$ & $0$ & $0$ & $0$ & $0$ & $0$ & $0$ & $0$ \\
$0$ & $0$ & $0$ & $-1$ & $0$ & $0$ & $0$ & $0$ & $1$ & $0$ & $0$ & $0$ & $0$ & $1$ & $0$ & $0$ & $0$ & $0$ & $-1$ & $0$ & $0$ & $0$ & $0$ & $0$ & $0$ & $0$ \\
$0$ & $0$ & $0$ & $0$ & $1$ & $0$ & $0$ & $-1$ & $0$ & $-1$ & $0$ & $1$ & $-1$ & $0$ & $0$ & $0$ & $0$ & $-1$ & $0$ & $0$ & $0$ & $0$ & $0$ & $1$ & $-1$ & $0$ \\
$0$ & $0$ & $0$ & $-1$ & $0$ & $-1$ & $0$ & $0$ & $0$ & $0$ & $1$ & $-1$ & $0$ & $1$ & $0$ & $1$ & $0$ & $0$ & $0$ & $0$ & $0$ & $0$ & $0$ & $0$ & $0$ & $0$ \\
$0$ & $-1$ & $0$ & $0$ & $0$ & $0$ & $-1$ & $0$ & $-1$ & $0$ & $0$ & $0$ & $-1$ & $0$ & $0$ & $0$ & $0$ & $0$ & $0$ & $0$ & $1$ & $0$ & $0$ & $-1$ & $0$ & $1$ \\
$-1$ & $0$ & $0$ & $0$ & $0$ & $-1$ & $0$ & $0$ & $0$ & $0$ & $-1$ & $0$ & $0$ & $0$ & $0$ & $0$ & $0$ & $0$ & $0$ & $-1$ & $0$ & $0$ & $0$ & $0$ & $0$ & $0$ \\
$0$ & $0$ & $0$ & $0$ & $0$ & $0$ & $-1$ & $0$ & $0$ & $0$ & $0$ & $0$ & $-1$ & $1$ & $0$ & $0$ & $0$ & $0$ & $0$ & $0$ & $0$ & $0$ & $0$ & $0$ & $0$ & $-1$ \\
$0$ & $0$ & $0$ & $0$ & $0$ & $1$ & $0$ & $-1$ & $0$ & $0$ & $0$ & $1$ & $0$ & $0$ & $0$ & $0$ & $0$ & $0$ & $0$ & $0$ & $0$ & $0$ & $0$ & $0$ & $1$ & $0$ \\
$-1$ & $0$ & $0$ & $0$ & $0$ & $0$ & $0$ & $-1$ & $-1$ & $0$ & $0$ & $-1$ & $0$ & $0$ & $0$ & $0$ & $0$ & $-1$ & $0$ & $0$ & $0$ & $0$ & $0$ & $0$ & $0$ & $0$ \\
$0$ & $0$ & $0$ & $0$ & $1$ & $-1$ & $0$ & $0$ & $0$ & $0$ & $-1$ & $1$ & $0$ & $0$ & $0$ & $0$ & $0$ & $0$ & $0$ & $0$ & $0$ & $1$ & $-1$ & $0$ & $0$ & $0$ \\
$0$ & $0$ & $0$ & $-1$ & $0$ & $0$ & $-1$ & $0$ & $0$ & $0$ & $-1$ & $0$ & $0$ & $0$ & $0$ & $0$ & $1$ & $0$ & $0$ & $0$ & $0$ & $0$ & $0$ & $0$ & $0$ & $0$ \\
$1$ & $0$ & $0$ & $0$ & $0$ & $0$ & $1$ & $0$ & $0$ & $0$ & $-1$ & $1$ & $0$ & $-1$ & $-1$ & $0$ & $0$ & $0$ & $0$ & $0$ & $0$ & $0$ & $0$ & $1$ & $0$ & $0$ \\
$0$ & $0$ & $1$ & $0$ & $0$ & $0$ & $0$ & $0$ & $0$ & $-1$ & $0$ & $0$ & $0$ & $0$ & $0$ & $1$ & $0$ & $0$ & $0$ & $0$ & $0$ & $0$ & $1$ & $0$ & $0$ & $0$ \\
$0$ & $0$ & $0$ & $0$ & $-1$ & $0$ & $1$ & $0$ & $0$ & $0$ & $1$ & $0$ & $-1$ & $1$ & $0$ & $0$ & $0$ & $0$ & $0$ & $0$ & $1$ & $0$ & $0$ & $0$ & $0$ & $0$ \\
$0$ & $-1$ & $0$ & $0$ & $0$ & $0$ & $0$ & $0$ & $0$ & $-1$ & $0$ & $0$ & $-1$ & $0$ & $0$ & $0$ & $1$ & $0$ & $0$ & $0$ & $0$ & $0$ & $0$ & $0$ & $0$ & $-1$ \\
\specialrule{1pt}{0pt}{0pt}
\multicolumn{8}{c}{$\phi$} & \multicolumn{6}{c}{$\psi$} & \multicolumn{12}{c}{$\upsilon^{-T}$}\\
\specialrule{1pt}{0pt}{0pt}
$-1$ & $0$ & $0$ & $0$ & $0$ & $0$ & $0$ & $-1$ & $-1$ & $0$ & $0$ & $0$ & $0$ & $0$ & $0$ & $0$ & $0$ & $-1$ & $0$ & $0$ & $0$ & $0$ & $0$ & $0$ & $0$ & $0$ \\
$-1$ & $0$ & $-1$ & $0$ & $0$ & $0$ & $0$ & $0$ & $-1$ & $-1$ & $0$ & $0$ & $0$ & $0$ & $1$ & $0$ & $0$ & $0$ & $0$ & $0$ & $0$ & $0$ & $0$ & $0$ & $0$ & $0$ \\
$1$ & $0$ & $0$ & $0$ & $1$ & $0$ & $0$ & $0$ & $0$ & $0$ & $0$ & $0$ & $0$ & $-1$ & $0$ & $0$ & $1$ & $0$ & $0$ & $0$ & $0$ & $0$ & $0$ & $0$ & $0$ & $0$ \\
$-1$ & $-1$ & $0$ & $0$ & $0$ & $0$ & $0$ & $0$ & $1$ & $0$ & $0$ & $-1$ & $0$ & $-1$ & $0$ & $0$ & $0$ & $0$ & $0$ & $0$ & $0$ & $0$ & $0$ & $-1$ & $1$ & $0$ \\
$1$ & $0$ & $0$ & $0$ & $0$ & $0$ & $0$ & $0$ & $1$ & $1$ & $-1$ & $0$ & $0$ & $0$ & $1$ & $-1$ & $0$ & $0$ & $0$ & $0$ & $0$ & $0$ & $0$ & $0$ & $0$ & $0$ \\
$1$ & $0$ & $0$ & $0$ & $0$ & $-1$ & $0$ & $0$ & $1$ & $0$ & $0$ & $0$ & $1$ & $0$ & $0$ & $0$ & $0$ & $0$ & $0$ & $0$ & $0$ & $0$ & $0$ & $1$ & $-1$ & $-1$ \\
$1$ & $0$ & $0$ & $-1$ & $0$ & $0$ & $0$ & $0$ &   &   &   &   &   &   & $0$ & $0$ & $-1$ & $-1$ & $0$ & $1$ & $0$ & $0$ & $0$ & $0$ & $1$ & $0$ \\
$1$ & $0$ & $0$ & $0$ & $0$ & $0$ & $-1$ & $0$ &   &   &   &   &   &   & $0$ & $0$ & $0$ & $0$ & $0$ & $0$ & $-1$ & $1$ & $1$ & $1$ & $-1$ & $-1$ \\
  &   &   &   &   &   &   &   &   &   &   &   &   &   & $-1$ & $0$ & $0$ & $0$ & $0$ & $0$ & $0$ & $1$ & $0$ & $0$ & $0$ & $0$ \\
  &   &   &   &   &   &   &   &   &   &   &   &   &   & $0$ & $0$ & $0$ & $-1$ & $0$ & $0$ & $0$ & $0$ & $0$ & $0$ & $1$ & $0$ \\
  &   &   &   &   &   &   &   &   &   &   &   &   &   & $0$ & $0$ & $0$ & $0$ & $0$ & $0$ & $0$ & $0$ & $1$ & $1$ & $-1$ & $-1$ \\
  &   &   &   &   &   &   &   &   &   &   &   &   &   & $0$ & $0$ & $1$ & $1$ & $-1$ & $-1$ & $0$ & $0$ & $0$ & $0$ & $0$ & $0$ \\
\specialrule{1pt}{0pt}{0pt}
\end{tabular}
}
\end{center}
\begin{center}
\captionof{table}{$\strass{3}{3}{3}{23}$-algorithm~\cite{smirnov2013bilinear}}
\resizebox{\linewidth}{!}{
\begin{tabular}{!{\VRule[2pt]}*{9}c!{\VRule[2pt]}*{9}c!{\VRule[2pt]} *{9}c!{\VRule[2pt]}}
\multicolumn{9}{c}{$U_{\phi}$} & \multicolumn{9}{c}{$V_{\psi}$} & \multicolumn{9}{c}{$W_{\upsilon}$}\\
\specialrule{1pt}{0pt}{0pt}
$0$ & $0$ & $0$ & $0$ & $0$ & $0$ & $0$ & $0$ & $1$ & $-1$ & $0$ & $0$ & $0$ & $0$ & $0$ & $0$ & $-1$ & $0$ & $0$ & $-1$ & $0$ & $0$ & $-1$ & $0$ & $0$ & $1$ & $0$ \\
$0$ & $0$ & $0$ & $0$ & $0$ & $0$ & $0$ & $1$ & $1$ & $0$ & $0$ & $0$ & $0$ & $0$ & $0$ & $-1$ & $0$ & $0$ & $0$ & $0$ & $0$ & $1$ & $0$ & $0$ & $0$ & $-1$ & $0$ \\
$0$ & $1$ & $0$ & $0$ & $-1$ & $0$ & $0$ & $0$ & $0$ & $0$ & $1$ & $0$ & $0$ & $0$ & $0$ & $0$ & $0$ & $1$ & $0$ & $0$ & $0$ & $0$ & $0$ & $1$ & $-1$ & $0$ & $0$ \\
$1$ & $0$ & $0$ & $0$ & $0$ & $0$ & $-1$ & $0$ & $0$ & $0$ & $0$ & $-1$ & $0$ & $0$ & $0$ & $0$ & $0$ & $0$ & $1$ & $0$ & $0$ & $0$ & $0$ & $0$ & $0$ & $0$ & $0$ \\
$0$ & $0$ & $0$ & $0$ & $1$ & $0$ & $1$ & $0$ & $0$ & $0$ & $1$ & $0$ & $0$ & $0$ & $0$ & $0$ & $0$ & $0$ & $0$ & $0$ & $0$ & $1$ & $0$ & $0$ & $-1$ & $0$ & $0$ \\
$0$ & $0$ & $0$ & $-1$ & $0$ & $1$ & $0$ & $0$ & $0$ & $0$ & $0$ & $0$ & $0$ & $0$ & $-1$ & $-1$ & $0$ & $0$ & $0$ & $0$ & $-1$ & $0$ & $0$ & $0$ & $0$ & $1$ & $1$ \\
$-1$ & $0$ & $0$ & $0$ & $0$ & $0$ & $0$ & $0$ & $-1$ & $0$ & $0$ & $0$ & $0$ & $0$ & $0$ & $0$ & $1$ & $0$ & $0$ & $0$ & $0$ & $0$ & $0$ & $0$ & $0$ & $-1$ & $0$ \\
$0$ & $0$ & $-1$ & $0$ & $-1$ & $0$ & $-1$ & $0$ & $1$ & $0$ & $0$ & $0$ & $0$ & $0$ & $0$ & $0$ & $1$ & $0$ & $0$ & $-1$ & $0$ & $0$ & $0$ & $0$ & $0$ & $0$ & $0$ \\
$0$ & $0$ & $0$ & $0$ & $0$ & $0$ & $0$ & $1$ & $0$ & $0$ & $0$ & $0$ & $0$ & $1$ & $0$ & $0$ & $0$ & $-1$ & $0$ & $0$ & $-1$ & $1$ & $0$ & $0$ & $0$ & $0$ & $0$ \\
$0$ & $0$ & $-1$ & $0$ & $0$ & $0$ & $0$ & $1$ & $0$ & $0$ & $0$ & $0$ & $0$ & $-1$ & $0$ & $1$ & $0$ & $1$ & $0$ & $0$ & $0$ & $1$ & $0$ & $0$ & $0$ & $0$ & $0$ \\
$0$ & $0$ & $-1$ & $0$ & $0$ & $0$ & $0$ & $0$ & $-1$ & $0$ & $0$ & $0$ & $0$ & $0$ & $-1$ & $0$ & $0$ & $0$ & $0$ & $1$ & $0$ & $-1$ & $0$ & $-1$ & $0$ & $0$ & $0$ \\
$0$ & $0$ & $0$ & $0$ & $0$ & $1$ & $0$ & $0$ & $0$ & $0$ & $0$ & $0$ & $0$ & $1$ & $0$ & $0$ & $0$ & $0$ & $0$ & $0$ & $0$ & $0$ & $0$ & $0$ & $0$ & $0$ & $1$ \\
$0$ & $0$ & $0$ & $-1$ & $0$ & $0$ & $0$ & $0$ & $1$ & $-1$ & $0$ & $0$ & $1$ & $0$ & $-1$ & $0$ & $0$ & $0$ & $0$ & $0$ & $0$ & $0$ & $1$ & $1$ & $0$ & $0$ & $0$ \\
$0$ & $0$ & $-1$ & $0$ & $-1$ & $0$ & $0$ & $0$ & $0$ & $0$ & $0$ & $-1$ & $0$ & $0$ & $0$ & $0$ & $1$ & $-1$ & $0$ & $0$ & $0$ & $0$ & $0$ & $1$ & $0$ & $0$ & $0$ \\
$0$ & $0$ & $0$ & $0$ & $0$ & $0$ & $-1$ & $0$ & $1$ & $0$ & $0$ & $0$ & $1$ & $0$ & $0$ & $0$ & $1$ & $0$ & $0$ & $0$ & $0$ & $0$ & $-1$ & $0$ & $0$ & $0$ & $0$ \\
$0$ & $0$ & $0$ & $0$ & $0$ & $1$ & $0$ & $-1$ & $0$ & $0$ & $0$ & $0$ & $1$ & $-1$ & $0$ & $0$ & $0$ & $0$ & $1$ & $0$ & $0$ & $0$ & $0$ & $0$ & $0$ & $0$ & $0$ \\
$0$ & $-1$ & $0$ & $0$ & $0$ & $1$ & $0$ & $0$ & $0$ & $0$ & $0$ & $0$ & $0$ & $-1$ & $1$ & $1$ & $0$ & $0$ & $-1$ & $0$ & $0$ & $0$ & $0$ & $0$ & $0$ & $0$ & $1$ \\
$0$ & $0$ & $0$ & $1$ & $0$ & $-1$ & $0$ & $1$ & $1$ & $1$ & $0$ & $0$ & $0$ & $0$ & $0$ & $-1$ & $0$ & $0$ & $0$ & $0$ & $0$ & $0$ & $0$ & $0$ & $0$ & $1$ & $0$ \\
$0$ & $0$ & $0$ & $0$ & $1$ & $0$ & $0$ & $0$ & $0$ & $0$ & $0$ & $0$ & $0$ & $0$ & $0$ & $0$ & $0$ & $-1$ & $0$ & $0$ & $0$ & $0$ & $0$ & $0$ & $1$ & $0$ & $0$ \\
$0$ & $0$ & $0$ & $1$ & $0$ & $0$ & $-1$ & $0$ & $0$ & $0$ & $0$ & $0$ & $1$ & $0$ & $0$ & $0$ & $0$ & $0$ & $0$ & $0$ & $0$ & $0$ & $1$ & $0$ & $1$ & $0$ & $-1$ \\
$0$ & $0$ & $0$ & $0$ & $0$ & $0$ & $1$ & $0$ & $0$ & $0$ & $0$ & $1$ & $1$ & $0$ & $0$ & $0$ & $0$ & $0$ & $0$ & $0$ & $0$ & $0$ & $0$ & $0$ & $1$ & $0$ & $-1$ \\
$-1$ & $0$ & $0$ & $0$ & $0$ & $0$ & $0$ & $-1$ & $0$ & $0$ & $-1$ & $0$ & $0$ & $0$ & $0$ & $0$ & $0$ & $0$ & $0$ & $0$ & $1$ & $0$ & $0$ & $0$ & $0$ & $0$ & $0$ \\
$0$ & $-1$ & $0$ & $0$ & $0$ & $0$ & $0$ & $0$ & $1$ & $0$ & $0$ & $0$ & $0$ & $0$ & $-1$ & $0$ & $0$ & $0$ & $0$ & $0$ & $0$ & $0$ & $-1$ & $-1$ & $0$ & $0$ & $0$ \\
\specialrule{1pt}{0pt}{0pt}
\multicolumn{9}{c}{$\phi$} & \multicolumn{9}{c}{$\psi$} & \multicolumn{9}{c}{$\upsilon^{-T}$}\\
\specialrule{1pt}{0pt}{0pt}
$0$ & $0$ & $0$ & $1$ & $0$ & $0$ & $0$ & $0$ & $0$ & $0$ & $0$ & $0$ & $1$ & $0$ & $0$ & $-1$ & $0$ & $0$ & $0$ & $0$ & $0$ & $0$ & $0$ & $1$ & $0$ & $0$ & $0$ \\
$0$ & $1$ & $0$ & $0$ & $0$ & $0$ & $0$ & $0$ & $0$ & $0$ & $-1$ & $0$ & $0$ & $0$ & $0$ & $0$ & $0$ & $0$ & $0$ & $0$ & $0$ & $0$ & $0$ & $0$ & $-1$ & $0$ & $1$ \\
$0$ & $0$ & $0$ & $0$ & $0$ & $0$ & $0$ & $1$ & $0$ & $0$ & $-1$ & $-1$ & $0$ & $0$ & $0$ & $0$ & $0$ & $0$ & $0$ & $0$ & $0$ & $0$ & $-1$ & $1$ & $0$ & $0$ & $0$ \\
$0$ & $0$ & $-1$ & $0$ & $0$ & $0$ & $0$ & $0$ & $0$ & $0$ & $0$ & $0$ & $0$ & $0$ & $0$ & $0$ & $0$ & $1$ & $0$ & $0$ & $0$ & $0$ & $0$ & $0$ & $0$ & $1$ & $0$ \\
$-1$ & $0$ & $0$ & $0$ & $0$ & $0$ & $-1$ & $-1$ & $0$ & $0$ & $0$ & $0$ & $0$ & $1$ & $1$ & $0$ & $-1$ & $0$ & $1$ & $0$ & $0$ & $0$ & $0$ & $0$ & $0$ & $0$ & $-1$ \\
$0$ & $0$ & $-1$ & $0$ & $1$ & $1$ & $0$ & $0$ & $0$ & $0$ & $0$ & $0$ & $-1$ & $0$ & $0$ & $0$ & $0$ & $0$ & $0$ & $0$ & $0$ & $0$ & $0$ & $0$ & $0$ & $0$ & $1$ \\
$1$ & $0$ & $0$ & $0$ & $0$ & $0$ & $0$ & $0$ & $0$ & $0$ & $0$ & $0$ & $1$ & $0$ & $0$ & $0$ & $-1$ & $0$ & $0$ & $1$ & $-1$ & $0$ & $0$ & $0$ & $0$ & $0$ & $1$ \\
$0$ & $0$ & $0$ & $0$ & $1$ & $0$ & $0$ & $0$ & $0$ & $1$ & $0$ & $0$ & $0$ & $0$ & $0$ & $0$ & $0$ & $0$ & $0$ & $0$ & $0$ & $1$ & $0$ & $0$ & $0$ & $0$ & $0$ \\
$0$ & $0$ & $0$ & $0$ & $0$ & $0$ & $0$ & $0$ & $1$ & $0$ & $1$ & $0$ & $0$ & $0$ & $1$ & $0$ & $0$ & $0$ & $0$ & $1$ & $0$ & $0$ & $0$ & $1$ & $0$ & $0$ & $0$ \\
\specialrule{1pt}{0pt}{0pt}
\end{tabular}
}
\end{center}
\begin{center}
\captionof{table}{$\strass{6}{3}{3}{40}$-algorithm~\cite{smirnov2013bilinear}}
\resizebox{\linewidth}{!}{
\begin{tabular}{!{\VRule[2pt]}*{18}c!{\VRule[2pt]}*{9}c!{\VRule[2pt]} *{18}c!{\VRule[2pt]}}
\multicolumn{18}{c}{$U_{\phi}$} & \multicolumn{9}{c}{$V_{\psi}$} & \multicolumn{18}{c}{$W_{\upsilon}$}\\
\specialrule{1pt}{0pt}{0pt}
$0$ & $0$ & $0$ & $0$ & $0$ & $0$ & $-1$ & $0$ & $0$ & $1$ & $0$ & $0$ & $0$ & $0$ & $0$ & $0$ & $0$ & $0$ & $0$ & $0$ & $0$ & $0$ & $0$ & $1$ & $-1$ & $0$ & $-1$ & $0$ & $0$ & $0$ & $0$ & $0$ & $0$ & $1$ & $0$ & $0$ & $0$ & $0$ & $0$ & $0$ & $0$ & $0$ & $0$ & $0$ & $0$ \\
$0$ & $0$ & $0$ & $0$ & $0$ & $0$ & $0$ & $0$ & $0$ & $-1$ & $0$ & $0$ & $0$ & $0$ & $0$ & $0$ & $0$ & $0$ & $0$ & $0$ & $0$ & $0$ & $-1$ & $0$ & $0$ & $0$ & $0$ & $0$ & $0$ & $0$ & $0$ & $0$ & $0$ & $0$ & $0$ & $0$ & $1$ & $0$ & $0$ & $0$ & $0$ & $0$ & $0$ & $0$ & $0$ \\
$0$ & $0$ & $0$ & $0$ & $0$ & $0$ & $-1$ & $0$ & $1$ & $0$ & $0$ & $0$ & $0$ & $0$ & $0$ & $0$ & $0$ & $0$ & $0$ & $0$ & $0$ & $1$ & $-1$ & $-1$ & $0$ & $-1$ & $0$ & $0$ & $0$ & $0$ & $0$ & $0$ & $0$ & $0$ & $0$ & $0$ & $0$ & $0$ & $0$ & $1$ & $0$ & $0$ & $0$ & $0$ & $0$ \\
$0$ & $0$ & $0$ & $0$ & $0$ & $0$ & $0$ & $0$ & $0$ & $0$ & $-1$ & $0$ & $0$ & $1$ & $0$ & $1$ & $0$ & $0$ & $0$ & $0$ & $0$ & $0$ & $0$ & $-1$ & $0$ & $-1$ & $0$ & $0$ & $0$ & $0$ & $0$ & $0$ & $0$ & $0$ & $0$ & $1$ & $0$ & $0$ & $0$ & $0$ & $0$ & $1$ & $0$ & $0$ & $0$ \\
$0$ & $0$ & $0$ & $0$ & $0$ & $0$ & $0$ & $0$ & $0$ & $0$ & $0$ & $0$ & $0$ & $0$ & $0$ & $1$ & $0$ & $0$ & $0$ & $0$ & $0$ & $-1$ & $0$ & $1$ & $0$ & $0$ & $0$ & $0$ & $0$ & $0$ & $0$ & $0$ & $0$ & $0$ & $0$ & $0$ & $0$ & $0$ & $0$ & $0$ & $0$ & $0$ & $0$ & $1$ & $0$ \\
$0$ & $0$ & $0$ & $0$ & $0$ & $0$ & $0$ & $0$ & $0$ & $0$ & $0$ & $-1$ & $0$ & $0$ & $0$ & $0$ & $0$ & $0$ & $0$ & $0$ & $0$ & $0$ & $-1$ & $0$ & $-1$ & $-1$ & $0$ & $0$ & $0$ & $0$ & $0$ & $0$ & $0$ & $0$ & $0$ & $0$ & $0$ & $-1$ & $0$ & $0$ & $0$ & $0$ & $0$ & $0$ & $-1$ \\
$0$ & $0$ & $0$ & $0$ & $0$ & $0$ & $0$ & $0$ & $0$ & $0$ & $0$ & $0$ & $1$ & $0$ & $0$ & $0$ & $0$ & $0$ & $0$ & $0$ & $0$ & $1$ & $-1$ & $0$ & $-1$ & $0$ & $0$ & $0$ & $0$ & $0$ & $0$ & $0$ & $0$ & $0$ & $0$ & $0$ & $0$ & $0$ & $0$ & $0$ & $0$ & $0$ & $-1$ & $0$ & $0$ \\
$0$ & $0$ & $0$ & $0$ & $0$ & $0$ & $0$ & $0$ & $0$ & $0$ & $0$ & $0$ & $-1$ & $0$ & $1$ & $0$ & $0$ & $0$ & $0$ & $0$ & $0$ & $0$ & $0$ & $0$ & $0$ & $-1$ & $-1$ & $0$ & $0$ & $0$ & $0$ & $0$ & $0$ & $0$ & $0$ & $0$ & $0$ & $0$ & $1$ & $-1$ & $0$ & $0$ & $0$ & $0$ & $0$ \\
$0$ & $0$ & $0$ & $0$ & $0$ & $0$ & $0$ & $0$ & $0$ & $0$ & $0$ & $0$ & $0$ & $1$ & $0$ & $0$ & $0$ & $0$ & $0$ & $0$ & $0$ & $0$ & $0$ & $0$ & $0$ & $-1$ & $0$ & $0$ & $0$ & $0$ & $0$ & $0$ & $0$ & $0$ & $0$ & $0$ & $1$ & $-1$ & $0$ & $0$ & $0$ & $0$ & $0$ & $0$ & $0$ \\
$0$ & $0$ & $0$ & $0$ & $0$ & $0$ & $0$ & $0$ & $0$ & $0$ & $1$ & $0$ & $0$ & $0$ & $0$ & $0$ & $0$ & $0$ & $0$ & $0$ & $0$ & $1$ & $0$ & $0$ & $0$ & $0$ & $0$ & $0$ & $0$ & $0$ & $0$ & $0$ & $0$ & $-1$ & $0$ & $0$ & $0$ & $0$ & $0$ & $0$ & $-1$ & $0$ & $-1$ & $0$ & $0$ \\
$0$ & $0$ & $0$ & $0$ & $0$ & $0$ & $0$ & $1$ & $0$ & $0$ & $0$ & $0$ & $0$ & $0$ & $0$ & $0$ & $-1$ & $0$ & $0$ & $0$ & $0$ & $0$ & $1$ & $0$ & $0$ & $0$ & $-1$ & $0$ & $0$ & $0$ & $0$ & $0$ & $0$ & $0$ & $0$ & $0$ & $0$ & $0$ & $1$ & $0$ & $0$ & $0$ & $0$ & $1$ & $0$ \\
$0$ & $0$ & $0$ & $0$ & $0$ & $0$ & $0$ & $-1$ & $0$ & $0$ & $0$ & $0$ & $0$ & $0$ & $0$ & $0$ & $0$ & $1$ & $0$ & $0$ & $0$ & $0$ & $1$ & $1$ & $0$ & $0$ & $-1$ & $0$ & $0$ & $0$ & $0$ & $0$ & $0$ & $0$ & $0$ & $0$ & $0$ & $0$ & $0$ & $0$ & $-1$ & $0$ & $0$ & $0$ & $0$ \\
$0$ & $0$ & $0$ & $0$ & $0$ & $0$ & $0$ & $0$ & $0$ & $0$ & $0$ & $0$ & $0$ & $0$ & $0$ & $0$ & $0$ & $1$ & $0$ & $0$ & $0$ & $0$ & $0$ & $0$ & $-1$ & $0$ & $0$ & $0$ & $0$ & $0$ & $0$ & $0$ & $0$ & $0$ & $0$ & $0$ & $0$ & $0$ & $0$ & $0$ & $0$ & $0$ & $0$ & $0$ & $-1$ \\
$0$ & $0$ & $0$ & $0$ & $0$ & $0$ & $0$ & $0$ & $0$ & $0$ & $0$ & $0$ & $0$ & $0$ & $0$ & $0$ & $-1$ & $0$ & $0$ & $0$ & $0$ & $0$ & $0$ & $1$ & $-1$ & $0$ & $0$ & $0$ & $0$ & $0$ & $0$ & $0$ & $0$ & $0$ & $0$ & $0$ & $0$ & $0$ & $0$ & $0$ & $0$ & $1$ & $0$ & $0$ & $0$ \\
$0$ & $0$ & $0$ & $0$ & $0$ & $0$ & $0$ & $0$ & $-1$ & $0$ & $0$ & $0$ & $0$ & $0$ & $0$ & $0$ & $0$ & $0$ & $0$ & $0$ & $0$ & $1$ & $0$ & $0$ & $-1$ & $-1$ & $-1$ & $0$ & $0$ & $0$ & $0$ & $0$ & $0$ & $0$ & $1$ & $0$ & $0$ & $0$ & $0$ & $0$ & $0$ & $0$ & $0$ & $0$ & $0$ \\
$0$ & $0$ & $0$ & $0$ & $0$ & $0$ & $0$ & $0$ & $0$ & $0$ & $0$ & $-1$ & $0$ & $0$ & $-1$ & $0$ & $0$ & $0$ & $0$ & $0$ & $0$ & $1$ & $0$ & $0$ & $0$ & $0$ & $-1$ & $0$ & $0$ & $0$ & $0$ & $0$ & $0$ & $0$ & $-1$ & $-1$ & $0$ & $0$ & $0$ & $0$ & $0$ & $0$ & $0$ & $0$ & $0$ \\
$0$ & $1$ & $0$ & $0$ & $0$ & $0$ & $0$ & $0$ & $1$ & $0$ & $0$ & $0$ & $0$ & $0$ & $-1$ & $0$ & $0$ & $0$ & $0$ & $-1$ & $0$ & $-1$ & $0$ & $0$ & $0$ & $1$ & $1$ & $0$ & $0$ & $-1$ & $0$ & $0$ & $0$ & $0$ & $-1$ & $-1$ & $0$ & $0$ & $0$ & $1$ & $0$ & $0$ & $0$ & $0$ & $0$ \\
$0$ & $0$ & $-1$ & $0$ & $0$ & $0$ & $0$ & $0$ & $1$ & $0$ & $0$ & $0$ & $0$ & $0$ & $0$ & $1$ & $0$ & $0$ & $0$ & $0$ & $-1$ & $-1$ & $0$ & $1$ & $0$ & $1$ & $0$ & $0$ & $0$ & $1$ & $0$ & $0$ & $0$ & $0$ & $0$ & $1$ & $0$ & $0$ & $0$ & $0$ & $0$ & $0$ & $0$ & $0$ & $0$ \\
$0$ & $1$ & $0$ & $0$ & $0$ & $0$ & $1$ & $0$ & $0$ & $-1$ & $0$ & $1$ & $-1$ & $0$ & $0$ & $0$ & $0$ & $0$ & $0$ & $-1$ & $0$ & $0$ & $-1$ & $0$ & $-1$ & $0$ & $0$ & $0$ & $0$ & $0$ & $-1$ & $0$ & $0$ & $0$ & $0$ & $0$ & $0$ & $0$ & $0$ & $0$ & $0$ & $0$ & $-1$ & $0$ & $0$ \\
$0$ & $0$ & $0$ & $1$ & $0$ & $0$ & $0$ & $0$ & $0$ & $0$ & $0$ & $0$ & $0$ & $0$ & $0$ & $0$ & $1$ & $-1$ & $0$ & $0$ & $-1$ & $0$ & $0$ & $1$ & $-1$ & $0$ & $-1$ & $0$ & $1$ & $0$ & $0$ & $0$ & $0$ & $1$ & $0$ & $0$ & $0$ & $0$ & $0$ & $0$ & $0$ & $0$ & $0$ & $0$ & $0$ \\
$1$ & $0$ & $0$ & $0$ & $0$ & $0$ & $0$ & $0$ & $0$ & $0$ & $0$ & $-1$ & $0$ & $1$ & $0$ & $0$ & $0$ & $0$ & $0$ & $1$ & $0$ & $0$ & $0$ & $0$ & $0$ & $-1$ & $0$ & $0$ & $0$ & $0$ & $0$ & $-1$ & $0$ & $0$ & $0$ & $-1$ & $0$ & $1$ & $0$ & $0$ & $0$ & $0$ & $0$ & $0$ & $0$ \\
$0$ & $0$ & $0$ & $1$ & $0$ & $0$ & $-1$ & $-1$ & $0$ & $1$ & $0$ & $0$ & $0$ & $0$ & $0$ & $0$ & $1$ & $0$ & $0$ & $-1$ & $0$ & $0$ & $-1$ & $-1$ & $0$ & $0$ & $1$ & $0$ & $0$ & $0$ & $0$ & $0$ & $1$ & $0$ & $0$ & $0$ & $0$ & $0$ & $0$ & $-1$ & $0$ & $0$ & $0$ & $-1$ & $0$ \\
$0$ & $1$ & $0$ & $0$ & $0$ & $0$ & $1$ & $0$ & $0$ & $0$ & $0$ & $0$ & $-1$ & $0$ & $0$ & $0$ & $0$ & $0$ & $1$ & $0$ & $0$ & $0$ & $1$ & $0$ & $0$ & $1$ & $0$ & $-1$ & $0$ & $0$ & $0$ & $0$ & $0$ & $0$ & $0$ & $0$ & $-1$ & $0$ & $1$ & $-1$ & $0$ & $0$ & $0$ & $0$ & $0$ \\
$0$ & $1$ & $0$ & $0$ & $0$ & $0$ & $0$ & $0$ & $0$ & $0$ & $0$ & $1$ & $0$ & $0$ & $0$ & $0$ & $0$ & $0$ & $-1$ & $0$ & $0$ & $-1$ & $0$ & $0$ & $1$ & $0$ & $1$ & $0$ & $-1$ & $0$ & $0$ & $0$ & $0$ & $0$ & $1$ & $0$ & $0$ & $0$ & $0$ & $0$ & $0$ & $0$ & $0$ & $0$ & $0$ \\
$0$ & $0$ & $-1$ & $0$ & $0$ & $0$ & $0$ & $0$ & $0$ & $1$ & $1$ & $0$ & $0$ & $0$ & $0$ & $0$ & $0$ & $0$ & $0$ & $0$ & $-1$ & $0$ & $0$ & $0$ & $0$ & $0$ & $0$ & $0$ & $0$ & $0$ & $1$ & $0$ & $0$ & $1$ & $0$ & $0$ & $-1$ & $0$ & $0$ & $0$ & $0$ & $0$ & $1$ & $0$ & $0$ \\
$1$ & $0$ & $0$ & $0$ & $0$ & $0$ & $0$ & $0$ & $0$ & $0$ & $0$ & $0$ & $0$ & $0$ & $0$ & $0$ & $0$ & $0$ & $0$ & $0$ & $1$ & $0$ & $0$ & $0$ & $0$ & $-1$ & $0$ & $1$ & $0$ & $0$ & $0$ & $0$ & $0$ & $0$ & $0$ & $0$ & $1$ & $-1$ & $0$ & $0$ & $0$ & $0$ & $0$ & $0$ & $-1$ \\
$0$ & $0$ & $-1$ & $0$ & $0$ & $0$ & $1$ & $0$ & $0$ & $0$ & $0$ & $0$ & $0$ & $0$ & $0$ & $0$ & $0$ & $0$ & $-1$ & $0$ & $0$ & $-1$ & $0$ & $1$ & $0$ & $0$ & $0$ & $0$ & $0$ & $0$ & $0$ & $0$ & $-1$ & $1$ & $0$ & $0$ & $0$ & $0$ & $0$ & $0$ & $0$ & $0$ & $0$ & $1$ & $0$ \\
$0$ & $0$ & $0$ & $0$ & $-1$ & $0$ & $0$ & $1$ & $0$ & $0$ & $0$ & $0$ & $-1$ & $0$ & $0$ & $0$ & $0$ & $0$ & $-1$ & $0$ & $0$ & $0$ & $-1$ & $0$ & $0$ & $0$ & $1$ & $0$ & $0$ & $0$ & $0$ & $0$ & $1$ & $0$ & $0$ & $0$ & $0$ & $0$ & $1$ & $-1$ & $1$ & $0$ & $0$ & $0$ & $0$ \\
$0$ & $0$ & $0$ & $0$ & $1$ & $0$ & $0$ & $0$ & $0$ & $0$ & $0$ & $-1$ & $0$ & $0$ & $0$ & $0$ & $0$ & $0$ & $-1$ & $0$ & $0$ & $0$ & $0$ & $0$ & $1$ & $0$ & $0$ & $0$ & $0$ & $0$ & $0$ & $-1$ & $0$ & $0$ & $1$ & $0$ & $0$ & $0$ & $0$ & $0$ & $0$ & $0$ & $0$ & $0$ & $-1$ \\
$0$ & $0$ & $0$ & $0$ & $0$ & $-1$ & $0$ & $0$ & $0$ & $0$ & $0$ & $0$ & $0$ & $1$ & $0$ & $0$ & $0$ & $0$ & $1$ & $0$ & $0$ & $0$ & $0$ & $0$ & $0$ & $0$ & $0$ & $1$ & $0$ & $0$ & $0$ & $0$ & $0$ & $0$ & $0$ & $0$ & $0$ & $0$ & $0$ & $0$ & $0$ & $0$ & $0$ & $1$ & $-1$ \\
$0$ & $0$ & $0$ & $0$ & $0$ & $1$ & $0$ & $1$ & $0$ & $0$ & $-1$ & $0$ & $0$ & $0$ & $0$ & $0$ & $0$ & $0$ & $1$ & $0$ & $0$ & $0$ & $0$ & $-1$ & $0$ & $0$ & $0$ & $0$ & $1$ & $0$ & $0$ & $0$ & $0$ & $1$ & $0$ & $0$ & $0$ & $0$ & $0$ & $0$ & $1$ & $-1$ & $0$ & $0$ & $0$ \\
$0$ & $0$ & $-1$ & $0$ & $0$ & $0$ & $0$ & $0$ & $0$ & $0$ & $1$ & $0$ & $0$ & $-1$ & $0$ & $0$ & $0$ & $0$ & $1$ & $0$ & $0$ & $0$ & $0$ & $0$ & $0$ & $1$ & $0$ & $0$ & $0$ & $0$ & $0$ & $1$ & $0$ & $0$ & $0$ & $0$ & $-1$ & $0$ & $0$ & $0$ & $0$ & $-1$ & $0$ & $0$ & $0$ \\
$0$ & $0$ & $0$ & $0$ & $-1$ & $0$ & $0$ & $0$ & $0$ & $0$ & $0$ & $0$ & $0$ & $0$ & $0$ & $0$ & $0$ & $1$ & $0$ & $0$ & $1$ & $0$ & $1$ & $0$ & $1$ & $0$ & $0$ & $0$ & $0$ & $0$ & $1$ & $0$ & $0$ & $0$ & $0$ & $0$ & $0$ & $-1$ & $0$ & $0$ & $0$ & $0$ & $0$ & $0$ & $-1$ \\
$0$ & $0$ & $0$ & $1$ & $0$ & $0$ & $0$ & $0$ & $-1$ & $0$ & $0$ & $0$ & $0$ & $0$ & $0$ & $0$ & $0$ & $-1$ & $0$ & $1$ & $0$ & $0$ & $0$ & $0$ & $1$ & $0$ & $0$ & $0$ & $0$ & $0$ & $0$ & $-1$ & $0$ & $0$ & $1$ & $0$ & $0$ & $0$ & $0$ & $0$ & $0$ & $1$ & $0$ & $0$ & $0$ \\
$0$ & $0$ & $0$ & $1$ & $0$ & $0$ & $0$ & $0$ & $-1$ & $1$ & $0$ & $0$ & $0$ & $0$ & $0$ & $0$ & $0$ & $0$ & $0$ & $0$ & $-1$ & $0$ & $-1$ & $0$ & $0$ & $0$ & $0$ & $-1$ & $0$ & $0$ & $0$ & $0$ & $0$ & $0$ & $0$ & $0$ & $0$ & $0$ & $1$ & $0$ & $0$ & $0$ & $0$ & $0$ & $0$ \\
$0$ & $0$ & $0$ & $0$ & $-1$ & $0$ & $0$ & $0$ & $0$ & $0$ & $0$ & $0$ & $0$ & $0$ & $-1$ & $0$ & $1$ & $0$ & $0$ & $0$ & $-1$ & $0$ & $0$ & $0$ & $0$ & $0$ & $-1$ & $0$ & $0$ & $-1$ & $0$ & $0$ & $0$ & $0$ & $0$ & $0$ & $0$ & $0$ & $1$ & $0$ & $0$ & $0$ & $0$ & $0$ & $0$ \\
$0$ & $0$ & $0$ & $0$ & $0$ & $-1$ & $0$ & $0$ & $0$ & $0$ & $0$ & $0$ & $0$ & $0$ & $0$ & $0$ & $0$ & $-1$ & $0$ & $-1$ & $0$ & $0$ & $0$ & $0$ & $0$ & $0$ & $0$ & $0$ & $0$ & $0$ & $1$ & $0$ & $0$ & $0$ & $0$ & $0$ & $0$ & $-1$ & $0$ & $0$ & $-1$ & $0$ & $0$ & $0$ & $0$ \\
$0$ & $0$ & $0$ & $0$ & $0$ & $1$ & $0$ & $1$ & $0$ & $0$ & $0$ & $0$ & $0$ & $-1$ & $0$ & $-1$ & $-1$ & $0$ & $0$ & $-1$ & $0$ & $0$ & $0$ & $-1$ & $0$ & $0$ & $0$ & $0$ & $0$ & $-1$ & $0$ & $0$ & $0$ & $0$ & $0$ & $0$ & $0$ & $0$ & $0$ & $0$ & $0$ & $1$ & $0$ & $-1$ & $0$ \\
$1$ & $0$ & $0$ & $0$ & $0$ & $0$ & $0$ & $0$ & $0$ & $0$ & $1$ & $0$ & $0$ & $0$ & $1$ & $-1$ & $0$ & $0$ & $0$ & $0$ & $1$ & $1$ & $0$ & $0$ & $0$ & $0$ & $0$ & $0$ & $-1$ & $0$ & $0$ & $0$ & $0$ & $0$ & $0$ & $-1$ & $0$ & $0$ & $0$ & $0$ & $0$ & $0$ & $1$ & $0$ & $0$ \\
$1$ & $0$ & $0$ & $0$ & $0$ & $0$ & $0$ & $0$ & $0$ & $0$ & $0$ & $0$ & $-1$ & $0$ & $1$ & $-1$ & $0$ & $0$ & $0$ & $1$ & $0$ & $1$ & $0$ & $0$ & $0$ & $0$ & $0$ & $0$ & $0$ & $0$ & $0$ & $0$ & $-1$ & $0$ & $0$ & $0$ & $0$ & $0$ & $0$ & $0$ & $-1$ & $0$ & $-1$ & $0$ & $0$ \\
\specialrule{1pt}{0pt}{0pt}
\multicolumn{18}{c}{$\phi$} & \multicolumn{9}{c}{$\psi$} & \multicolumn{18}{c}{$\upsilon^{-T}$}\\
\specialrule{1pt}{0pt}{0pt}
$-\frac{1}{8}$ & $\frac{1}{8}$ & $0$ & $-1$ & $0$ & $-1$ & $-1$ & $0$ & $1$ & $1$ & $0$ & $0$ & $\frac{1}{8}$ & $-\frac{1}{8}$ & $0$ & $-1$ & $0$ & $0$ & $0$ & $0$ & $0$ & $0$ & $1$ & $-1$ & $0$ & $-1$ & $1$ & $0$ & $0$ & $1$ & $\frac{1}{8}$ & $0$ & $0$ & $\frac{1}{8}$ & $0$ & $0$ & $0$ & $-\frac{1}{8}$ & $0$ & $0$ & $-2$ & $1$ & $0$ & $\frac{1}{8}$ & $0$ \\
$0$ & $0$ & $0.25$ & $0$ & $0$ & $2$ & $-1$ & $1$ & $-1$ & $-1$ & $1$ & $1$ & $-\frac{1}{8}$ & $\frac{1}{8}$ & $-\frac{1}{8}$ & $0$ & $0$ & $0$ & $0$ & $0$ & $0$ & $1$ & $-1$ & $0$ & $-1$ & $1$ & $0$ & $1$ & $1$ & $1$ & $\frac{1}{8}$ & $\frac{1}{8}$ & $\frac{1}{8}$ & $0$ & $0$ & $0$ & $0$ & $0$ & $0$ & $0$ & $0$ & $0$ & $-\frac{1}{8}$ & $-\frac{1}{8}$ & $-\frac{1}{8}$ \\
$0$ & $0$ & $-\frac{1}{8}$ & $0$ & $-1$ & $0$ & $0$ & $-1$ & $0$ & $1$ & $0$ & $0$ & $0$ & $0$ & $-\frac{1}{8}$ & $-1$ & $0$ & $2$ & $1$ & $-1$ & $0$ & $0$ & $0$ & $0$ & $-1$ & $1$ & $0$ & $0$ & $0$ & $1$ & $\frac{1}{8}$ & $0.25$ & $0$ & $\frac{1}{8}$ & $0$ & $0$ & $0$ & $\frac{1}{8}$ & $0$ & $0$ & $0$ & $1$ & $0$ & $\frac{1}{8}$ & $0$ \\
$0$ & $0$ & $-\frac{1}{8}$ & $2$ & $1$ & $0$ & $0$ & $-1$ & $0$ & $1$ & $0$ & $0$ & $0$ & $0$ & $\frac{1}{8}$ & $-1$ & $0$ & $0$ & $-1$ & $0$ & $-1$ & $0$ & $1$ & $1$ & $1$ & $-1$ & $0$ & $0$ & $0$ & $0$ & $\frac{1}{8}$ & $-\frac{1}{8}$ & $-\frac{1}{8}$ & $0$ & $0$ & $0$ & $-\frac{1}{8}$ & $\frac{1}{8}$ & $\frac{1}{8}$ & $-1$ & $1$ & $1$ & $0$ & $0$ & $0$ \\
$0$ & $0$ & $0$ & $0$ & $0$ & $0$ & $1$ & $-1$ & $1$ & $1$ & $-1$ & $1$ & $\frac{1}{8}$ & $-\frac{1}{8}$ & $\frac{1}{8}$ & $0$ & $0$ & $0$ & $-1$ & $0$ & $1$ & $0$ & $-1$ & $1$ & $1$ & $-1$ & $0$ & $1$ & $1$ & $0$ & $0$ & $\frac{1}{8}$ & $\frac{1}{8}$ & $0$ & $\frac{1}{8}$ & $\frac{1}{8}$ & $0$ & $\frac{1}{8}$ & $0$ & $-1$ & $1$ & $2$ & $0$ & $-\frac{1}{8}$ & $0$ \\
$0$ & $0$ & $0$ & $-1$ & $1$ & $-1$ & $0$ & $0$ & $0$ & $0$ & $0$ & $0$ & $-\frac{1}{8}$ & $\frac{1}{8}$ & $-\frac{1}{8}$ & $1$ & $-1$ & $1$ & $0$ & $0$ & $0$ & $0$ & $2$ & $0$ & $0$ & $-2$ & $0$ & $0$ & $0$ & $-1$ & $-\frac{1}{8}$ & $0$ & $0$ & $-\frac{1}{8}$ & $0$ & $0$ & $0$ & $\frac{1}{8}$ & $0$ & $2$ & $0$ & $1$ & $0$ & $-\frac{1}{8}$ & $0$ \\
$0$ & $0$ & $0$ & $1$ & $-1$ & $-1$ & $1$ & $-1$ & $-1$ & $1$ & $-1$ & $-1$ & $0$ & $0$ & $0$ & $-1$ & $1$ & $1$ & $1$ & $0$ & $-1$ & $0$ & $1$ & $-1$ & $1$ & $-1$ & $0$ & $0$ & $0$ & $-1$ & $-\frac{1}{8}$ & $0$ & $0$ & $0$ & $\frac{1}{8}$ & $\frac{1}{8}$ & $0$ & $\frac{1}{8}$ & $0$ & $1$ & $-1$ & $0$ & $\frac{1}{8}$ & $0$ & $\frac{1}{8}$ \\
$0$ & $0$ & $0$ & $1$ & $-1$ & $1$ & $1$ & $-1$ & $1$ & $1$ & $-1$ & $1$ & $0$ & $0$ & $0$ & $-1$ & $1$ & $-1$ & $1$ & $0$ & $-1$ & $0$ & $-1$ & $1$ & $-1$ & $1$ & $0$ & $1$ & $1$ & $0$ & $\frac{1}{8}$ & $0$ & $0$ & $0$ & $\frac{1}{8}$ & $\frac{1}{8}$ & $\frac{1}{8}$ & $0$ & $-\frac{1}{8}$ & $0$ & $0$ & $1$ & $0$ & $-\frac{1}{8}$ & $0$ \\
$0$ & $0$ & $-\frac{1}{8}$ & $1$ & $0$ & $-1$ & $0$ & $-1$ & $0$ & $1$ & $0$ & $0$ & $-\frac{1}{8}$ & $-\frac{1}{8}$ & $0$ & $0$ & $1$ & $1$ & $-2$ & $0$ & $0$ & $0$ & $0$ & $0$ & $0$ & $-2$ & $0$ & $-1$ & $-1$ & $-1$ & $-\frac{1}{8}$ & $-\frac{1}{8}$ & $-\frac{1}{8}$ & $-\frac{1}{8}$ & $-\frac{1}{8}$ & $-\frac{1}{8}$ & $0$ & $0$ & $0$ & $-1$ & $-1$ & $-1$ & $0$ & $0$ & $0$ \\
$-\frac{1}{8}$ & $-\frac{1}{8}$ & $0$ & $0$ & $-1$ & $0$ & $1$ & $0$ & $-1$ & $1$ & $0$ & $0$ & $0$ & $0$ & $-\frac{1}{8}$ & $0$ & $1$ & $1$ &   &   &   &   &   &   &   &   &   & $0$ & $0$ & $-1$ & $0$ & $\frac{1}{8}$ & $-\frac{1}{8}$ & $-\frac{1}{8}$ & $0$ & $0$ & $-\frac{1}{8}$ & $0$ & $\frac{1}{8}$ & $-1$ & $1$ & $0$ & $0$ & $-\frac{1}{8}$ & $0$ \\
$0$ & $0$ & $-\frac{1}{8}$ & $-1$ & $0$ & $1$ & $0$ & $-1$ & $0$ & $1$ & $0$ & $0$ & $-\frac{1}{8}$ & $\frac{1}{8}$ & $0$ & $0$ & $-1$ & $1$ &   &   &   &   &   &   &   &   &   & $1$ & $-1$ & $-1$ & $0$ & $0$ & $0$ & $0$ & $0$ & $0$ & $-\frac{1}{8}$ & $\frac{1}{8}$ & $\frac{1}{8}$ & $-1$ & $1$ & $1$ & $\frac{1}{8}$ & $-\frac{1}{8}$ & $-\frac{1}{8}$ \\
$0$ & $0$ & $-\frac{1}{8}$ & $-1$ & $0$ & $-1$ & $0$ & $-1$ & $0$ & $1$ & $0$ & $0$ & $\frac{1}{8}$ & $-\frac{1}{8}$ & $0$ & $0$ & $-1$ & $-1$ &   &   &   &   &   &   &   &   &   & $-1$ & $-1$ & $1$ & $\frac{1}{8}$ & $\frac{1}{8}$ & $-\frac{1}{8}$ & $\frac{1}{8}$ & $\frac{1}{8}$ & $-\frac{1}{8}$ & $0$ & $0$ & $0$ & $-1$ & $-1$ & $1$ & $0$ & $0$ & $0$ \\
$0$ & $0$ & $\frac{1}{8}$ & $0$ & $-1$ & $0$ & $-1$ & $0$ & $-1$ & $0$ & $1$ & $1$ & $-\frac{1}{8}$ & $\frac{1}{8}$ & $0$ & $-1$ & $0$ & $0$ &   &   &   &   &   &   &   &   &   & $-1$ & $-1$ & $0$ & $0$ & $\frac{1}{8}$ & $-\frac{1}{8}$ & $\frac{1}{8}$ & $0$ & $0$ & $0$ & $\frac{1}{8}$ & $0$ & $0$ & $0$ & $1$ & $-\frac{1}{8}$ & $0$ & $-\frac{1}{8}$ \\
$0$ & $0$ & $-\frac{1}{8}$ & $0$ & $1$ & $0$ & $1$ & $0$ & $-1$ & $0$ & $-1$ & $1$ & $-\frac{1}{8}$ & $\frac{1}{8}$ & $0$ & $1$ & $0$ & $0$ &   &   &   &   &   &   &   &   &   & $0$ & $0$ & $1$ & $0$ & $-\frac{1}{8}$ & $-\frac{1}{8}$ & $\frac{1}{8}$ & $0$ & $0$ & $\frac{1}{8}$ & $0$ & $\frac{1}{8}$ & $-1$ & $1$ & $0$ & $0$ & $\frac{1}{8}$ & $0$ \\
$\frac{1}{8}$ & $-\frac{1}{8}$ & $\frac{1}{8}$ & $1$ & $-1$ & $1$ & $-1$ & $1$ & $-1$ & $0$ & $0$ & $0$ & $-\frac{1}{8}$ & $\frac{1}{8}$ & $-\frac{1}{8}$ & $0$ & $0$ & $0$ &   &   &   &   &   &   &   &   &   & $1$ & $1$ & $0$ & $0$ & $\frac{1}{8}$ & $\frac{1}{8}$ & $\frac{1}{8}$ & $0$ & $0$ & $0$ & $\frac{1}{8}$ & $0$ & $0$ & $0$ & $1$ & $-\frac{1}{8}$ & $0$ & $\frac{1}{8}$ \\
$-\frac{1}{8}$ & $\frac{1}{8}$ & $0$ & $0$ & $-1$ & $0$ & $-1$ & $0$ & $1$ & $1$ & $0$ & $0$ & $0$ & $0$ & $-\frac{1}{8}$ & $0$ & $-1$ & $1$ &   &   &   &   &   &   &   &   &   & $1$ & $-1$ & $0$ & $0$ & $\frac{1}{8}$ & $\frac{1}{8}$ & $-\frac{1}{8}$ & $0$ & $0$ & $0$ & $-\frac{1}{8}$ & $0$ & $0$ & $0$ & $-1$ & $-\frac{1}{8}$ & $0$ & $-\frac{1}{8}$ \\
$0$ & $0$ & $\frac{1}{8}$ & $0$ & $-1$ & $0$ & $1$ & $0$ & $1$ & $0$ & $-1$ & $1$ & $-\frac{1}{8}$ & $-\frac{1}{8}$ & $0$ & $-1$ & $0$ & $0$ &   &   &   &   &   &   &   &   &   & $0$ & $0$ & $-1$ & $0$ & $-\frac{1}{8}$ & $\frac{1}{8}$ & $-\frac{1}{8}$ & $0$ & $0$ & $\frac{1}{8}$ & $0$ & $\frac{1}{8}$ & $1$ & $1$ & $0$ & $0$ & $-\frac{1}{8}$ & $0$ \\
$-\frac{1}{8}$ & $-\frac{1}{8}$ & $0$ & $1$ & $0$ & $1$ & $0$ & $-1$ & $0$ & $0$ & $-1$ & $1$ & $0$ & $0$ & $\frac{1}{8}$ & $-1$ & $0$ & $0$ &   &   &   &   &   &   &   &   &   & $0$ & $0$ & $1$ & $\frac{1}{8}$ & $0$ & $0$ & $0$ & $-\frac{1}{8}$ & $\frac{1}{8}$ & $0$ & $-\frac{1}{8}$ & $0$ & $1$ & $-1$ & $0$ & $-\frac{1}{8}$ & $0$ & $\frac{1}{8}$ \\
\specialrule{1pt}{0pt}{0pt}
\end{tabular}
}
\end{center}

\end{document}